\numberwithin{theorem}{section}
\newcommand{\TheTitle}{Stabilization of Networked Control Systems under DoS Attacks and Output Quantization} 
\newcommand{\TheAuthors}{Masashi Wakaiki, Ahmet Cetinkaya,
	and Hideaki Ishii}
\title{{\TheTitle}\thanks{
		This paper was partially presented at the American Control Conference 2018 \cite{Wakaiki2018ACC}.
		\funding{
			This work was supported in part by JSPS KAKENHI Grant Numbers
			JP17K14699 and JP18H01460 and by the JST CREST Grant No. JPMJCR15K3.}}}
\author{
  Masashi Wakaiki\thanks{Graduate School of System Informatics, Kobe University,  Hyogo 657-8501, Japan
    (\email{wakaiki@ruby.kobe.u.ac.jp}).}
  \and
  Ahmet Cetinkaya\thanks{Department of Computer Science, Tokyo Institute of Technology, Yokohama, 226-8502, Japan
  	(\email{ahmet@sc.dis.titech.ac.jp; ishii@c.titech.ac.jp}).}
  \and 
  Hideaki Ishii\footnotemark[3]
}
\newtheorem{assumption}[theorem]{Assumption}
\newtheorem{example}[theorem]{Example}
\newtheorem{remark}[theorem]{Remark}
\begin{document}

\maketitle

\begin{abstract}
This paper addresses quantized output feedback stabilization under Denial-of-Service (DoS) attacks.
First, assuming  that the duration and frequency of DoS attacks are 
averagely bounded and that an initial bound of the plant state is known,
we propose an output encoding scheme that achieves exponential convergence with finite data rates.
Next we show that a suitable state transformation allows us to remove the assumption on
the DoS frequency.
Finally, we discuss the derivation of state bounds under DoS attacks and obtain
sufficient conditions on the bounds of DoS duration and frequency for
achieving Lyapunov stability of the closed-loop system.
\end{abstract}

\begin{keywords}
Networked control systems, quantized control, denial-of-service attacks.
\end{keywords}

\section{Introduction}
Recent advances in computer and communication technology 
contribute to the efficiency of data transmission in control systems.
However, control systems become also vulnerable to cyber attacks.
For instance, it was reported that 
attackers can adversarially control cars \cite{Checkoway2011} 
and unmanned aerial vehicles \cite{Kerns2014}.
Malicious attacks are a major concern for the deployment of networked control systems, and
enhancing the resilience to cyber attacks is an important issue.

There are many possible attacks for control systems.
Recent results such as \cite{Fawzi2014, Chong2015, Shoukry2016} focus on
the scenario where measurement data obtained from some sensors
can be manipulated by malicious attackers.
Another line of research \cite{Mo2009, Zhu2014} investigates control under replay attacks, which maliciously
repeat transmitted data.
Denial-of-Service (DoS) attacks destroy the data 
availability by inducing packet losses.
DoS attacks are launched by malicious routers \cite{Awerbuch2008} and jammers \cite{Pelechrinis2011},
which can be set up without detailed knowledge on the structure of targeted systems.
Hence, even attackers with little information on control systems 
can create a security threat by DoS attacks.

In this paper, we consider
networked control systems in which the plant output is sent 
through a communication channel and DoS attacks are launched to block the transmission of
the output data over this channel. 
Probabilistic models such as the Bernoulli model has been used for nonmalicious packet losses caused by
network traffic congestion
and packet transmission failures; see the survey papers \cite{Hespanha2007,Zhang2013}.
However, attackers may not launch DoS attacks based on such probabilistic models.
The effect of DoS attacks has been recently investigated
in several studies \cite{
	Amin2009,
	Bhattacharya2013,
	Liu2014JA,
	Ding2017, Chen2018, 
	Persis2015,
	Persis2016,
	Feng2017,
	Cetinkaya2017,
	Senejohnny2017,
	Kikuchi2017,
	Cetinkaya2018,Lu2018, Feng2018}.
To deal with the uncertainty of DoS,
the previous studies \cite{
	Persis2015,
	Persis2016,
	Feng2017,
	Cetinkaya2017,
	Senejohnny2017,
	Kikuchi2017,
	Cetinkaya2018,Lu2018, Feng2018} characterized DoS attacks 
by the average duration and frequency of packet losses.

Data transmission through digital channels
requires signal quantization.
Although a plenty of communication bandwidth is available in modern
applications, many devices 
compete for this bandwidth in complex systems.
Moreover, it is theoretically interesting to solve
the problem of how much information is needed to achieve a given control objective.
From this point of view, data rate limitations for stabilization have been extensively studied;
see the survey papers \cite{Nair2007, Ishii2012} for details.
The so-called  zooming-in and zooming-out method developed in \cite{Brockett2000}
also yields a quantizer that achieves asymptotic stabilization with finite-data rates.
This method was first applied to linear time-invariant systems and then
was extended to a wide class of systems such as nonlinear systems 
\cite{Liberzon2003Automatica, Liberzon2005} and 
switched systems \cite{Liberzon2014, Wakaiki2017TAC}.

Despite the above active research on control problems with limited information,
quantized control under cyber attacks
does not seem to have received much attention so far.
In this paper, we extend the zooming-in and zooming-out method
to achieve output feedback stabilization under DoS attacks. 
Our objective is to develop output encoding schemes that 
guarantee closed-loop stability under DoS attacks.
The proposed encoding schemes generally require more than minimal data rates for stabilization but
relatively modest computational resources of the coders.
In contrast, data rate limitations for state feedback stabilization
under DoS attacks have been recently studied
in \cite{Feng2018}.
The authors of \cite{Chen2018} have proposed a design method of event-triggered 
controllers for stabilization under quantization and DoS attacks. However, 
static logarithmic quantizers with infinitely many quantization levels are used in \cite{Chen2018},
which
would remove most of the difficulties arising from quantization in our problem formulation. 

First, we assume that an initial bound of the plant state 
is given and design an output encoding scheme that achieves exponential 
convergence with finite data rates in the presence of DoS.
The difficulty here is to switch an update rule of the coders depending on DoS.
In the absence of DoS, the coders can decrease their quantization ranges and
make quantization errors small, by
using the plant model and the transmitted measurements. However,
if DoS attacks are launched, then
the decoder at the controller side cannot receive the measurements.
As a result, the worst-case estimation error of the plant output, which is used for quantization,
becomes large.
Therefore, the coders should increase their quantization range
so that the plant output can be captured in the quantization region.
This switching of the update rule of the coders makes it difficult to
analyze the stability of the closed-loop system.

We adopt a general model that constrains DoS attacks
only in terms of duration and frequency,  as in \cite{
	Persis2015,
	Persis2016,
	Feng2017,
	Cetinkaya2017,
	Senejohnny2017,
	Kikuchi2017, Cetinkaya2018, Lu2018, Feng2018}. 
In particular,
the assumption we make for DoS attacks is that 
their duration and frequency  are averagely bounded.
Hence we can deal with a wide class of packet losses.
We first propose an encoding scheme that generally needs the assumption 
both on DoS duration and frequency.
Next we show that the frequency condition can be removed, by
applying a suitable state transformation.
An invertible matrix for the state transformation is a design parameter, and
we can choose it in various ways.
In the section of a numerical example,
this matrix is chosen so that the closed-loop system allows longer DoS duration
under low DoS frequency.

Next, we develop methods to derive initial state bounds under DoS attacks.
In the absence of packet losses \cite{Liberzon2003}, 
state bounds can be obtained from consecutive output data.
In our setting, output data may not be received consecutively due to DoS attacks. Hence
we need to construct state bounds from intermittent output data.
In the case without DoS, 
it is easy to find state bounds from
finitely many measurements.
The difficulty of the case with DoS is that we may not obtain a state bound 
using even an infinite number of intermittent measurements. This is because there exist time-steps at  which the output  
does not
contribute to the construction of state bounds.
This problem is related to basic questions on 
how many samples are needed to obtain state estimates. 
Such questions have also been addressed in the 
context of sampled-data control under irregular sampling; see, e.g., 
\cite{Wang2011,Park2011, Rohr2014, Zeng2017,Jungers2018}.


We provide several sufficient conditions on DoS duration and frequency for
the derivation of initial state bounds under DoS attacks.
In the first approach, we analyze the generalized observability matrix
by exploiting a periodic property of the eigenvalues of the system matrix.
Next, we design coders that construct initial state bounds only from consecutive measurements.
Finally, applying the results in \cite{Jungers2018},
we see that if the lengths of DoS periods are bounded, then 
the problem of whether or not a state bound can be constructed is decidable.
All of these approaches provide initial state bounds in finite time.
Consequently,
the proposed encoding schemes achieve Lyapunov stability if the bounds of
DoS duration and frequency are sufficiently small.

The remainder of this paper is organized as follows. 
The networked control system we consider and assumptions on DoS attacks are introduced
in Section II.
In Section III, we propose output encoding schemes
that achieve exponential convergence of the state and its estimate under DoS attacks.
Section IV is devoted to the derivation of initial state bounds in the presence of DoS.
We present a numerical example in Section V.

The results in Section III
partially appeared in our conference paper \cite{Wakaiki2018ACC}.
Here we provide complete proofs not included
in the conference version and make significant structural improvements.
Moreover, the present paper has additional results on the derivation of initial state bounds
and Lyapunov stability.

\subsubsection*{Notation}
The set of non-negative integers is denoted by $\mathbb{Z}_+$.
We denote by $\varrho(P)$ 
the spectral radius of $P \in \mathbb{C}^{n\times n}$.
Let us denote by $A^{*}$ the 
complex conjugate transpose of $A \in \mathbb{C}^{m\times n}$.
For a vector $v \in \mathbb{C}^{ n}$ with $\ell$th element $v_\ell$, 
its maximum norm is  $|v|_{\infty} := \max\{|v_1|,\dots, |v_{ n}|\}$, and
the corresponding induced norm of $A \in \mathbb{C}^{ m\times n}$ 
with $(\ell,j)$th element $A_{\ell j}$
is given by
$\|A\|_{\infty} = \max\{\sum_{j=1}^n|A_{\ell j}| : 1 \leq \ell \leq m\} $.
We denote by ${\rm diag} (\Lambda_1,\dots,\Lambda_n)$ a block diagonal matrix with
diagonal blocks $\Lambda_1,\dots,\Lambda_n$.
For a full column rank matrix $A \in \mathbb{C}^{m\times n}$, 
its left inverse is denoted by $A^{\dagger} = (A^*A)^{-1}A^*$.
A square matrix in $\mathbb{C}^{n \times n}$ is said to be 
{\it Schur stable} if all its eigenvalues lie in
the unit disc.

\section{Networked control system and DoS attack}
\label{sec:closed_loop}
In this section, the networked control system we consider
and 
assumptions on DoS attacks are introduced.
\subsection{Networked control system}
Consider the following discrete-time linear time-invariant 
system:
\begin{subequations}
	\label{eq:plant}
	\begin{align}
	x_{k+1} &= Ax_k + Bu_k \\
	y_k &= Cx_k 
	\end{align}
\end{subequations}
where $x_k \in \mathbb{R}^{n_x}$,
$u_k \in \mathbb{R}^{n_u}$, and
$y_k \in \mathbb{R}^{n_y}$ are 
the state, the input, and the output of the plant, respectively.
The output $y_k$ is encoded and then transmitted through a communication channel subject to DoS.
In contrast, we assume that the input $u_k$ is not affected by any network phenomena,
i.e., $u_k$ goes through the ideal channel.

The decoder sends an acknowledgment to the plant side without delays when
it receives the output data.
If the encoder does not receive the acknowledgment, then
it can detect the DoS attack.
The acknowledgment-based protocol was used in the previous study \cite{Feng2017} on control
without quantization under DoS attacks and also 
has been commonly employed in networked control under 
nonmalicious packet losses; see, e.g., \cite{Imer2006, Tsumura2009}.
Fig.~\ref{fig:closed-loop} illustrates the networked control system we study.

\begin{figure}[tb]
	\centering
	\includegraphics[width = 6cm]{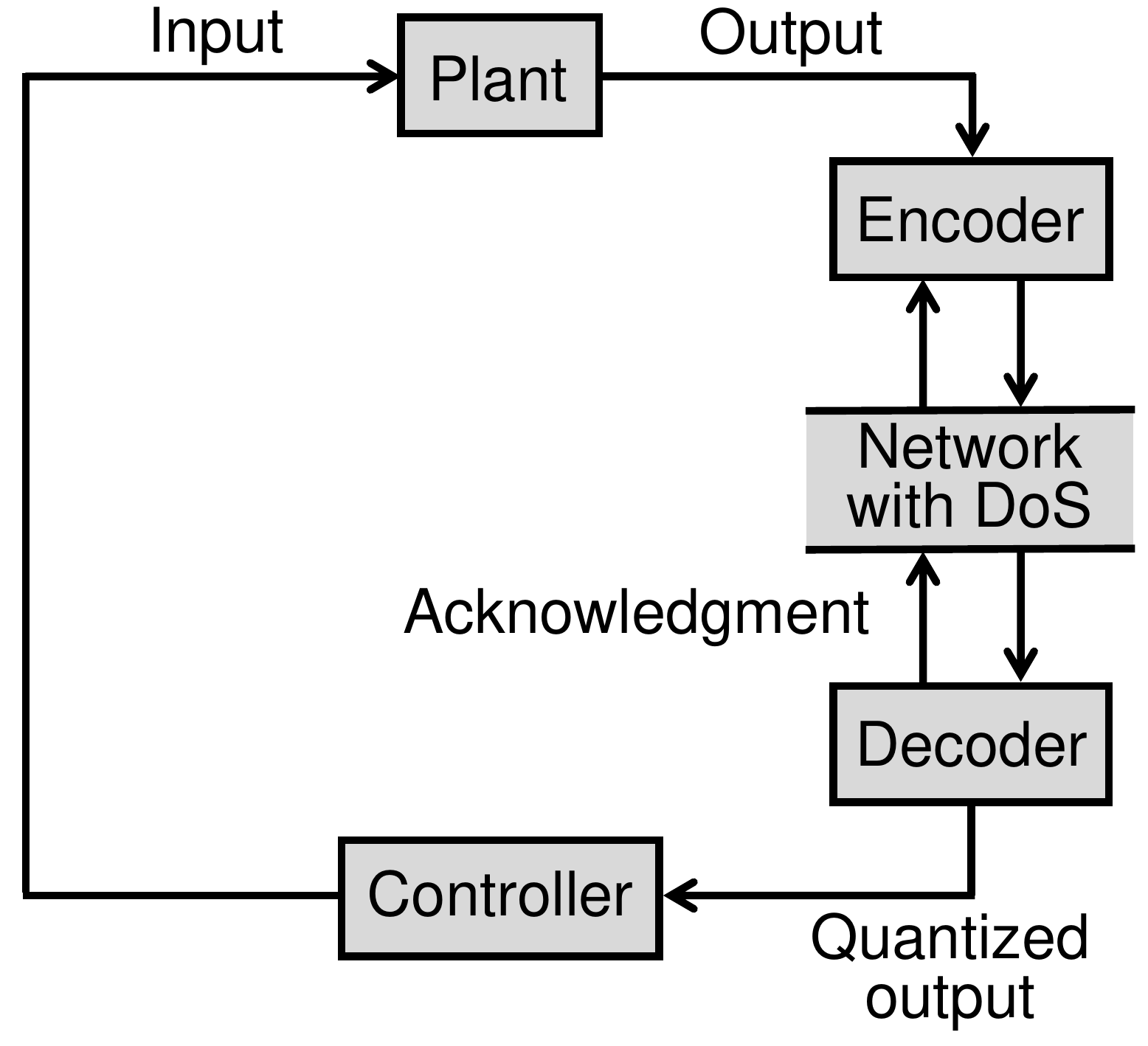}
	\caption{Networked control system under DoS attacks.}
	\label{fig:closed-loop}
	\vspace{-8pt}
\end{figure}

The system matrix $A$ is assumed not to be Schur stable.
This is because if $A$ is Schur stable, then the zero control input 
$u_k = 0$ ($k \in \mathbb{Z}_+$) 
achieves the closed-loop stability for arbitrary DoS attacks, and hence
the stabilization problem we consider would be trivial.

\subsection{DoS attack}
Let us denote by $\Phi_d(k)$ the
number of time-steps when DoS attacks are launched
on the interval $[0,k)$.
As in \cite{	Persis2015,
	Persis2016,
	Feng2017,
	Cetinkaya2017,
	Senejohnny2017,		Kikuchi2017,
	Cetinkaya2018, Lu2018, Feng2018},
we assume that the duration of DoS attacks grows linearly 
with the length of the interval.

\begin{assumption}[Duration of DoS attacks]
	\label{assump:duration}
	There exist $\Pi_d \geq 0$ and $\nu_d \in [0,1]$ such that 
	for every $k \in \mathbb{Z}_+$,
	the DoS duration $\Phi_d(k)$ satisfies
	\begin{equation}
	\label{eq:DOSduration}
	\Phi_d(k) \leq \Pi_d + \nu_d k.
	\end{equation}
	We call $\nu_d$ the {\it DoS duration bound}.
\end{assumption}

The condition \eqref{eq:DOSduration} implies that
at most $\Pi_d + \nu_d k$ packets are affected by DoS attacks
on the interval $[0,k)$.
The DoS duration bound $\nu_d$ is an upper bound of
the limit superior of the DoS duration per time-step.

Next,
let us denote by $\Phi_f(k)$ the number of consecutive DoS attacks on the interval $[0,k)$.
\begin{assumption}[Frequency of DoS attacks]
	\label{assump:freq}
	There exist $\Pi_f \geq 0$ and $\nu_f \in [0,0.5]$ such that 
	for every $k \in \mathbb{Z}_+$,
	the DoS frequency $\Phi_f(k)$ satisfies
	\begin{equation}
	\label{eq:DOSfrequency}
	\Phi_f(k) \leq \Pi_f + \nu_f k.
	\end{equation}
	We call $\nu_f$ the {\it DoS frequency bound}.
\end{assumption}

The DoS frequency bound $\nu_f$ is 
an upper bound of the limit superior of DoS occurrences per time-step.
High-frequency DoS attacks satisfy \eqref{eq:DOSfrequency} with large values of $\nu_f$.

\begin{remark}
	The authors of
	\cite{	Persis2015,
		Persis2016,
		Feng2017,
		Senejohnny2017,Lu2018, Feng2018}
	placed stronger conditions than \eqref{eq:DOSduration} and \eqref{eq:DOSfrequency} such as
	\begin{equation}
	\label{eq:stDOSduration}
	\Phi_d(k,k+\tau) \leq \Pi_d + \nu_d \tau \qquad  \forall k,\tau \in \mathbb{Z_+},
	\end{equation}
	where $\Phi_d(k,k+\tau)$ is the
	number of time-steps when DoS attacks are launched
	on the interval $[k,k+\tau)$. 
	The major reason to place such stronger conditions is that 
	systems with disturbances and noise were considered.
	Although we also consider networked control systems with quantization noise,
	quantization noise decreases under a certain condition on DoS attacks.
	This is the reason why we use the weaker conditions \eqref{eq:DOSduration} and \eqref{eq:DOSfrequency}.
\end{remark}

\section{Exponential convergence under DoS}
\label{sec:exp_conv}
In this section,
we present an encoding and decoding scheme to achieve 
the exponential convergence of the state under the assumption that 
an initial state bound is known.
The proposed schemes are extensions of  the zooming-in method developed in
\cite{Liberzon2003}
to the case under DoS attacks.

We impose the following assumptions throughout this section:
\begin{assumption}[Stabilizability and detectability]
	\label{assump:stabilizability_detectability}
	The pairs $(A,B)$ and $(C,A)$ are stabilizable and detectable, respectively.
	Matrices $K \in \mathbb{R}^{n_u \times n_x}$ and
	$L \in \mathbb{R}^{n_x \times n_y}$ are chosen so that
	$A-BK$ and $A-LC$ are Schur stable.
\end{assumption}

\begin{assumption}[Initial state bound]
	\label{assum:initial_bound}
	A constant $E_{0} > 0$ satisfying 
	$
	|x_0|_{\infty}  \leq E_{0}
	$ is known.
\end{assumption}

An initial bound $E_0$ in Assumption \ref{assum:initial_bound} 
may be given in advance or may be obtained from 
prior measurements via the zooming-out method; see Section IV for the derivation of initial state bounds.

\subsection{Observer-based controller}	
To achieve the exponential convergence of the state,
we use a controller that consists of a Luenberger observer and 
a feedback gain. 
Observer-based controllers update the estimate of
the plant state, by using the output data. However,
when an attack occurs,
the controller cannot receive the output data.
Hence, 
if DoS occurs, then
the controller updates the estimate in the open-loop form.
Define
\begin{equation}
\label{eq:L_def}
L_k := 
\begin{cases}
0 & \text{if DoS occurs at $k$} \\
L & \text{if DoS does not occur at $k$}.
\end{cases}
\end{equation}
The dynamics of the controller is given by
\begin{subequations}
	\label{eq:controller}
	\begin{align}
	\hat x_{k+1} &= 
	A\hat x_k + Bu_k + L_k(q_k - \hat y_k)\\
	u_k &= -K \hat x_k \\
	\hat y_k &= C\hat x_k,
	\end{align}	
\end{subequations}
where $\hat x_k \in \mathbb{R}^{n_x}$,
$\hat y \in \mathbb{R}^{n_y}$, and
$q_k \in \mathbb{R}^{n_y}$
are 
the state estimate, 
the output estimate, and
the quantized value of $y_k$, respectively.
We will provide the details of how to generate 
the quantized output $q_k$ in the next subsection.
We set an initial state estimate $\hat x_0$ to be $\hat x_0 = 0$.

%

\subsection{Basic encoding and decoding scheme}
\label{sec:basic_zooming_in}
Define the error $e_k \in \mathbb{R}^{n_x}$ of the state estimation by
$
e_k := x_k - \hat x_k.
$
Using an invertible matrix $R \in \mathbb{C}^{n_x \times n_x}$, we also define
the transformed error $e_{R,k}\in \mathbb{C}^{n_x}$ by
$
e_{R,k} := Re_k.
$
The invertible matrix $R$ is a design parameter, and
we fix the matrix $R$ arbitrarily in this and next subsections.
Section~\ref{sec:Choice_of_R} includes the discussion on how to choose the matrix $R$.
In particular, we show there that 
if we choose the matrix $R$ that transforms
$A-LC$ into its Jordan canonical form, 
then the assumption on the DoS frequency can be removed.
For this reason, the matrix $R$ is complex-valued.

Let
$E_{R, k} \geq 0$ satisfy
\begin{equation}
\label{eq:ERk_condition}
|e_{R,k}|_{\infty}  \leq E_{R,k}.
\end{equation}
The estimation error of the output is given by
\[
y_k - \hat y_k = Ce_k = CR^{-1} e_{R,k} \qquad \forall k \geq \mathbb{Z}_+.
\]
If the error bound $E_{R,k}$ satisfies \eqref{eq:ERk_condition}, 
then 
\[
|y_k - \hat y_k|_{\infty} \leq \|CR^{-1}\|_{\infty}E_{R,k}.
\]
We partition the hypercube 
\begin{equation}
\label{eq:quantization}
\left\{ y \in \mathbb{R}^{n_y}:| y  - \hat y_k|_{\infty} 
\leq \|CR^{-1}\|_{\infty}  E_{R,k} \right\}
\end{equation}
into $N^{n_y}$ equal boxes.  An index in $\{1,\dots, N^{n_y}\}$ is assigned
to each partitioned box by a certain one-to-one mapping for all $k \in \mathbb{Z}_+$.
The encoder sends to the decoder
the index $q_{k}^{\rm ind}$ of the partitioned box containing 
$y_k$. Then
the decoder generates $q_{k}$ equal to the center of the box
having the index $q_{k}^{\rm ind}$.
If $y_k$ lies on the boundary of several boxes, then
we can choose any one of them.
The quantization error $|y_k - q_k|_{\infty}$ of this encoding scheme satisfies
\begin{equation}
\label{eq:qe_y}
|y_k - q_k|_{\infty}  \leq \frac{\|CR^{-1}\|_{\infty} }{N} E_{R,k}.
\end{equation}

In the next subsection,
we will design a sequence 
$\{E_{R,k}: k \in \mathbb{Z}_+\}$ of error bounds
that achieves \eqref{eq:ERk_condition} for every
$k \in \mathbb{Z}_+$ and exponentially decreases to zero.

\subsection{Main result on exponential convergence}
\label{sec:Duration_Frequency}
Before stating the main result, we first introduce the notion of 
exponential convergence.
\begin{definition}[Exponential convergence]
	The feedback system with the plant \eqref{eq:plant} and the controller \eqref{eq:controller}
	achieves exponential convergence under Assumption \ref{assum:initial_bound} if
	there exist $\Omega \geq 1$ and $\gamma \in (0,1)$, independent of $E_0$, such that 
	\[
	|x_k|_{\infty},~|\hat x_k|_{\infty} \leq \Omega E_0\gamma^k \qquad \forall k \in \mathbb{Z_+}
	\]
	for every initial state $x_0\in \mathbb{R}^{n_x}$ satisfying $|x_0|_{\infty} \leq E_0$.
\end{definition}

Let us introduce an update rule of $\{E_{R,k}:k \in \mathbb{Z}_+\}$ we study here.
Fix an invertible matrix $R \in \mathbb{C}^{n_x \times n_x}$, and
choose $M_0 \geq 1$, $M \geq \|RL\|_{\infty}$, and $\rho \in (0,1)$ satisfying
\begin{subequations}
	\label{eq:normcond_all}
	\begin{align}
	\|R(A-LC)^\ell R^{-1} \|_{\infty} &\leq M_0 \rho^\ell \qquad \forall \ell \geq 0 \label{eq:normcond1}\\
	\|R(A-LC)^{\ell}L\|_{\infty}  &\leq M \rho^\ell \label{eq:normcond2}
	\qquad \forall \ell \geq 0.
	\end{align}
\end{subequations}
Define constants $\theta_a, \theta_0, \theta >0$ by
\begin{subequations}
	\label{eq:theta_def_all}
	\begin{align} 
	\theta_a &:=  \|RAR^{-1}\|_{\infty} \label{eq:thetaa_def} \\
	\theta_0 &:= M_0\rho + \frac{M\|CR^{-1}\|_{\infty}}{N} \label{eq:theta0_def} \\
	\theta &:= \rho + \frac{M\|CR^{-1}\|_{\infty}}{N}. \label{eq:theta_def} 
	\end{align}
\end{subequations}
Using these constants, we set
the error bound $\{E_{R,k}:k \in \mathbb{Z}_+\}$ to be
\begin{align}
&E_{R,k+1} := \begin{cases}
\theta_a  E_{R,k}  & \text{if DoS occurs at $k$} \\
\theta_0 E_{R,k} & \text{else if $k=0$ or DoS occurs at $k-1$} \\
\theta E_{R,k} & \text{otherwise} \\
\end{cases}
\label{eq:E_diff_equation}
\end{align}
for all $k \in \mathbb{Z}_+$.
In terms of the initial value $E_{R,0}$,
we have from Assumption~\ref{assum:initial_bound}
that 
\[
|e_{R,0}|_{\infty}   = |Rx_0|_{\infty}  \leq \|R\|_{\infty}E_{0} =: E_{R,0},
\]
where we used $\hat x_0 = 0$.

The following theorem shows that the encoding scheme with the 
above error bound $\{E_{R,k}:k \in \mathbb{Z}_+\}$
achieves exponential convergence.
\begin{theorem}
	\label{thm:state_conv}
	Suppose that Assumptions \ref{assump:duration}, \ref{assump:freq}, \ref{assump:stabilizability_detectability}, and
	\ref{assum:initial_bound} hold.
	If the number of quantization levels $N$ and
	the DoS duration and frequency bounds $\nu_d$ and $\nu_f$ satisfy
	\begin{subequations}
		\label{eq:with_frequency_Nnu}
		\begin{align}
		N &> \frac{M\|CR^{-1}\|_{\infty}}{1-\rho} \label{eq:N_cond} \\
		\nu_d &< \frac{\log(1/\theta)}{\log(\theta_a/\theta)} 
		- \frac{\log(\theta_0/\theta)}{\log(\theta_a/\theta)} \nu_f,
		\label{eq:DoS_cond}
		\end{align}
	\end{subequations}
	then the feedback system achieves exponential convergence under
	the encoding scheme with the error bound $\{E_{R,k}:k \in \mathbb{Z}_+\}$ constructed by
	the update rule \eqref{eq:E_diff_equation}.
	\vspace{0pt}
\end{theorem}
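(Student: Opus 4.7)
The plan has three stages: first an induction establishing that the sequence $\{E_{R,k}\}$ defined by \eqref{eq:E_diff_equation} dominates $|e_{R,k}|_\infty$ for all $k$; second, an exponential decay estimate for $E_{R,k}$ obtained by counting how often each branch of \eqref{eq:E_diff_equation} fires; and third, transferring the decay of $e_k$ to $x_k$ and $\hat x_k$ via Schur stability of $A-BK$. For the first stage I start from the error recursion $e_{k+1} = (A - L_kC)e_k + L_k(y_k - q_k)$, which reduces to $e_{k+1} = Ae_k$ when DoS occurs at $k$. In the DoS case (Case A of \eqref{eq:E_diff_equation}) the single-step bound $\|RAR^{-1}\|_\infty = \theta_a$ immediately delivers $|e_{R,k+1}|_\infty \leq \theta_a E_{R,k}$. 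If $k=0$ or DoS occurred at $k-1$ but not at $k$ (Case B), applying $\|R(A-LC)R^{-1}\|_\infty \leq M_0\rho$, $\|RL\|_\infty \leq M$, and the quantization bound \eqref{eq:qe_y} yields $|e_{R,k+1}|_\infty \leq (M_0\rho + M\|CR^{-1}\|_\infty/N)E_{R,k} = \theta_0 E_{R,k}$.

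Case C (no DoS at $k-1$ and $k$) is delicate, because the analogous one-step estimate still produces $\theta_0 E_{R,k}$, which exceeds the desired $\theta E_{R,k}$ whenever $M_0>1$. To close this gap, I would induct along a whole maximal no-DoS window $[k_1, k_1+m]$ (with $k_1=0$ or DoS at $k_1-1$) starting from the bound on $|e_{R,k_1}|_\infty$ already available from the previous branch. Iterating the error recursion gives
\begin{equation*}
e_{R,k_1+\ell} = R(A-LC)^\ell R^{-1} e_{R,k_1} + \sum_{j=0}^{\ell-1} R(A-LC)^{\ell-1-j}L(y_{k_1+j} - q_{k_1+j}),
\end{equation*}
and applying \eqref{eq:normcond1}, \eqref{eq:normcond2}, and \eqref{eq:qe_y} term by term bounds $|e_{R,k_1+\ell}|_\infty$ by $M_0\rho^\ell E_{R,k_1} + (M\|CR^{-1}\|_\infty/N)\sum_{j=0}^{\ell-1}\rho^{\ell-1-j} E_{R,k_1+j}$. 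Plugging in the closed form $E_{R,k_1+j} = \theta_0\theta^{j-1} E_{R,k_1}$ for $j\geq 1$ from \eqref{eq:E_diff_equation}, the geometric sum telescopes because $\theta-\rho = M\|CR^{-1}\|_\infty/N$, and the cross terms cancel to give exactly $|e_{R,k_1+\ell}|_\infty \leq \theta_0\theta^{\ell-1} E_{R,k_1} = E_{R,k_1+\ell}$.

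For the second stage, on $[0,K)$ Case A fires $\Phi_d(K)$ times, Case B at most $\Phi_f(K)+1$ times (once per transition out of a DoS burst, plus possibly the initial step), and Case C fills the remaining slots. Multiplying and taking logarithms yields
\begin{equation*}
\log\frac{E_{R,K}}{E_{R,0}} \leq K\log\theta + \Phi_d(K)\log(\theta_a/\theta) + (\Phi_f(K)+1)\log(\theta_0/\theta).
\end{equation*}
Condition \eqref{eq:N_cond} ensures $\theta<1$; the fact that $A$ is not Schur stable gives $\theta_a\geq\varrho(A)\geq 1>\theta$; and $M_0\geq 1$ gives $\theta_0\geq\theta$, so both logarithms on the right are nonnegative. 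Substituting the affine bounds \eqref{eq:DOSduration} and \eqref{eq:DOSfrequency} concentrates the coefficient of $K$ into $\log\theta + \nu_d\log(\theta_a/\theta) + \nu_f\log(\theta_0/\theta)$, which is strictly negative exactly when \eqref{eq:DoS_cond} holds. Hence $E_{R,k}\leq c_1 \gamma_e^k E_{R,0}$ for some $\gamma_e\in(0,1)$ and constant $c_1$ independent of $E_0$, and the inequality $|e_k|_\infty\leq\|R^{-1}\|_\infty E_{R,k}$ with $E_{R,0}=\|R\|_\infty E_0$ delivers exponential decay of $|e_k|_\infty$ proportional to $E_0$.

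For the third stage, combining $x_{k+1} = (A-BK)x_k + BK e_k$ with the Schur stability of $A-BK$ and the exponential bound on $|e_k|_\infty$, the discrete variation-of-constants formula produces $|x_k|_\infty\leq\Omega E_0 \gamma^k$ for suitable $\Omega\geq 1$ and $\gamma\in(0,1)$ independent of $E_0$, and $|\hat x_k|_\infty\leq|x_k|_\infty+|e_k|_\infty$ inherits exponential decay. The main obstacle throughout is the Case C argument in the first stage: reconciling the tight target rate $\theta = \rho + M\|CR^{-1}\|_\infty/N$ with the looser one-step expansion factor $M_0\rho$ demands the window-wide telescoping above, since a direct step-by-step induction falls short whenever $M_0>1$.
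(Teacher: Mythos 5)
Your proposal is correct and follows essentially the same route as the paper: the window-wide telescoping you use to handle the $\theta$-branch is precisely the paper's representation formula for $E_{R,k+\ell}$ over a DoS-free interval (Lemmas \ref{lem:E_representation} and \ref{Ek_design_withoutDoS}), the branch-counting with the affine duration/frequency bounds reproduces Lemma \ref{lem:ERk_conv}, and the final transfer via Schur stability of $A-BK$ matches the paper's concluding argument. Your explicit observation that $\log(\theta_a/\theta)\geq 0$ and $\log(\theta_0/\theta)\geq 0$ justify substituting the upper bounds on $\Phi_d$ and $\Phi_f$ is a small but welcome clarification that the paper leaves implicit.
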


The proof of this theorem is provided in the next subsection.

\subsection{Proof of Theorem \ref{thm:state_conv}}
We begin by showing that \eqref{eq:ERk_condition}  holds in the absence of DoS attacks.
To this end, we use the technique developed in \cite{Wakaiki2017IFAC}.

The following lemma provides a useful representation of
$\{E_{R,k}:k\in \mathbb{Z}_+\}$ in the case without DoS.
\begin{lemma}
	\label{lem:E_representation}
	For $\ell \in \mathbb{Z}_+$, set
	\begin{equation}
	\label{eq:E_update_without_attack}
	E_{R,k+\ell+1} := 
	\begin{cases}
	\theta_0 E_{R,k} & \text{if $\ell=0$} \\
	\theta E_{R,k+\ell} & \text{otherwise}
	\end{cases}
	\end{equation}
	where $\theta_0$ and 
	$\theta$ are defined by \eqref{eq:theta0_def} and \eqref{eq:theta_def}.
	Then 
	\begin{equation}
	\label{eq:Ek_def_naive}
	E_{R,k+\ell} = M_0 \rho^\ell E_{R,k} + 
	\frac{M \|CR^{-1}\|_{\infty} }{N}  \sum_{j = 0}^{\ell-1} 
	\rho^{\ell-j-1} E_{R,k+j}
	\end{equation}	
	for every $\ell \in \mathbb{N}$.
\end{lemma}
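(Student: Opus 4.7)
The plan is to prove the claim by induction on $\ell \in \mathbb{N}$, using the recursion \eqref{eq:E_update_without_attack} directly. The identity to be established separates the contribution of the ``initial'' term $M_0 \rho^\ell E_{R,k}$ (which reflects the $\theta_0$-update applied once at the DoS-free restart) from a telescoping sum $\frac{M\|CR^{-1}\|_\infty}{N}\sum_{j=0}^{\ell-1}\rho^{\ell-j-1}E_{R,k+j}$ (which accumulates quantization-error contributions one per step, each damped by a factor $\rho$ per elapsed step). Observing this structure already suggests that the identity is exactly what a single-step $\theta$-update produces when the previous step's representation is substituted.

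For the base case $\ell = 1$, I would simply read off from \eqref{eq:E_update_without_attack} that $E_{R,k+1} = \theta_0 E_{R,k}$, and then unfold $\theta_0 = M_0\rho + M\|CR^{-1}\|_\infty / N$ using \eqref{eq:theta0_def}, which matches \eqref{eq:Ek_def_naive} with $\ell=1$ since the sum collapses to its single $j=0$ term equal to $E_{R,k}$.

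For the inductive step, assume \eqref{eq:Ek_def_naive} for some $\ell \geq 1$. Since $\ell \geq 1$, the second branch of \eqref{eq:E_update_without_attack} applies, giving $E_{R,k+\ell+1} = \theta E_{R,k+\ell}$. Substituting the definition \eqref{eq:theta_def} of $\theta$ and then the inductive hypothesis for $E_{R,k+\ell}$, I would expand
\[
E_{R,k+\ell+1} = \rho E_{R,k+\ell} + \frac{M\|CR^{-1}\|_\infty}{N} E_{R,k+\ell}.
\]
The first summand, after inserting the inductive hypothesis, multiplies $M_0\rho^\ell E_{R,k}$ by $\rho$ (yielding $M_0\rho^{\ell+1}E_{R,k}$) and shifts every exponent in the sum up by one, producing $\frac{M\|CR^{-1}\|_\infty}{N}\sum_{j=0}^{\ell-1}\rho^{\ell-j}E_{R,k+j}$. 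The second summand contributes precisely the missing $j=\ell$ term $\frac{M\|CR^{-1}\|_\infty}{N}\rho^0 E_{R,k+\ell}$, and combining the two recovers \eqref{eq:Ek_def_naive} with $\ell$ replaced by $\ell+1$.

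There is no real obstacle; the only point requiring care is the bookkeeping of exponents when the $\rho$-factor is distributed through the sum and when the new endpoint term $j=\ell$ is absorbed. In particular, one must notice that the hypotheses $M_0 \geq 1$ and $M \geq \|RL\|_\infty$ from \eqref{eq:normcond_all} are not needed for this algebraic identity — only the update rule and the definitions of $\theta_0,\theta$ are used — so the lemma is purely structural and will serve in the subsequent stability argument as a convenient closed-form for iterating the error bound.
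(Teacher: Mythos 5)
Your proof is correct and matches the paper's argument exactly: induction on $\ell$ with the base case $\ell=1$ unfolding $\theta_0$, and the inductive step splitting $\theta = \rho + M\|CR^{-1}\|_{\infty}/N$ so that the $\rho$-part shifts the exponents and the remainder supplies the new $j=\ell$ term. Nothing further is needed.
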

\begin{proof}
	If $\ell = 1$, then \eqref{eq:Ek_def_naive} holds by the definition of $\theta_0$.
	The general case follows by induction.  Define
	$\Delta := M\|CR^{-1}\|_{\infty}/N$.
	If \eqref{eq:Ek_def_naive} holds with 
	$\ell = \ell_0\in \mathbb{N}$, then 
	\begin{align*}
	E_{R,k+\ell_0+1} \!&= \theta E_{R,k+\ell_0} 
	=
	\rho  E_{R,k+\ell_0} + \Delta  E_{R,k+\ell_0} \\
	&=
	M_0 \rho^{\ell_0+1} E_{R,k}+\Delta \!\sum_{j=0}^{\ell_0} \rho^{\ell_0-j}E_{R,k+j}.
	\end{align*}
	Thus, we obtain \eqref{eq:Ek_def_naive} holds with 
	$\ell = \ell_0+1$.
\end{proof}

Using the representation of $E_{R,k+\ell}$ in \eqref{eq:Ek_def_naive},
we show that  \eqref{eq:ERk_condition}  is satisfied in the case without DoS attacks.
\begin{lemma}
	\label{Ek_design_withoutDoS}
	Consider the feedback system in the absence of DoS, that is,
	$L_{k+\ell} = L$ in \eqref{eq:controller} for every $\ell \in \mathbb{Z}_+$.
	Assume that $|e_{R,k}|_{\infty} \leq E_{R,k}$, and 
	set $\{E_{R,k+\ell}:\ell \in \mathbb{Z}_+\}$ as in Lemma \ref{lem:E_representation}.
	Then
	$|e_{R,k+\ell}|_{\infty} \leq E_{R,k+\ell}$ for all $\ell \in \mathbb{Z}_+$.
\end{lemma}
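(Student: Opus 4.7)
The plan is to derive a closed-form expression for the transformed error $e_{R,k+\ell}$ in the DoS-free regime, and then match it term-for-term against the formula for $E_{R,k+\ell}$ given in Lemma \ref{lem:E_representation}. The proof will proceed by induction on $\ell$, with the key observation being that the inductive hypothesis $|e_{R,k+j}|_\infty \leq E_{R,k+j}$ is exactly what is needed to validate the quantization error bound \eqref{eq:qe_y} at time $k+j$.

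First I would compute the error dynamics. Since $L_{k+\ell} = L$ under the hypothesis of no DoS, subtracting \eqref{eq:controller} from \eqref{eq:plant} gives
\begin{equation*}
e_{k+\ell+1} = (A-LC)e_{k+\ell} + L(y_{k+\ell}-q_{k+\ell}).
\end{equation*}
Multiplying by $R$ and unrolling the recursion yields
\begin{equation*}
e_{R,k+\ell} = R(A-LC)^\ell R^{-1} e_{R,k} + \sum_{j=0}^{\ell-1} R(A-LC)^{\ell-1-j}L\,(y_{k+j}-q_{k+j}).
\end{equation*}
Taking the $|\cdot|_\infty$-norm and applying the triangle inequality together with the submultiplicative bounds \eqref{eq:normcond1}--\eqref{eq:normcond2}, we obtain
\begin{equation*}
|e_{R,k+\ell}|_\infty \leq M_0 \rho^\ell |e_{R,k}|_\infty + M \sum_{j=0}^{\ell-1} \rho^{\ell-1-j}|y_{k+j}-q_{k+j}|_\infty.
\end{equation*}

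Now I would set up the induction on $\ell$. The base case $\ell=0$ is the hypothesis $|e_{R,k}|_\infty \leq E_{R,k}$. For the inductive step, assume $|e_{R,k+j}|_\infty \leq E_{R,k+j}$ for all $j = 0,1,\dots,\ell-1$. This inequality ensures that for each such $j$ the plant output $y_{k+j}$ lies in the hypercube \eqref{eq:quantization} about $\hat y_{k+j}$, so the encoding scheme of Section \ref{sec:basic_zooming_in} yields the quantization estimate \eqref{eq:qe_y}, namely $|y_{k+j}-q_{k+j}|_\infty \leq (\|CR^{-1}\|_\infty/N)E_{R,k+j}$. Substituting these bounds, together with the hypothesis on $|e_{R,k}|_\infty$, into the displayed inequality above produces
\begin{equation*}
|e_{R,k+\ell}|_\infty \leq M_0 \rho^\ell E_{R,k} + \frac{M\|CR^{-1}\|_\infty}{N}\sum_{j=0}^{\ell-1} \rho^{\ell-1-j}E_{R,k+j},
\end{equation*}
whose right-hand side is precisely $E_{R,k+\ell}$ as given by \eqref{eq:Ek_def_naive} in Lemma \ref{lem:E_representation}. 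This closes the induction.

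There is no serious obstacle here: the delicate aspect is simply book-keeping to make sure that the definitions of $\theta_0$ and $\theta$ in \eqref{eq:theta0_def}--\eqref{eq:theta_def} produce exactly the summation structure obtained from iterating the error dynamics. Lemma \ref{lem:E_representation} has already done this matching, so the present proof reduces to dynamical unrolling plus a clean induction that simultaneously certifies the error bound and, via it, the validity of the quantization bound at each previous time-step.
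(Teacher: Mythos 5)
Your proof is correct and follows essentially the same route as the paper's: unroll the closed-loop error dynamics $e_{R,k+\ell} = R(A-LC)^\ell R^{-1}e_{R,k} + \sum_{j} R(A-LC)^{\ell-j-1}L(y_{k+j}-q_{k+j})$, bound each term via \eqref{eq:normcond_all} and \eqref{eq:qe_y}, and identify the result with the closed form \eqref{eq:Ek_def_naive}. If anything, your explicit induction is slightly more careful than the paper's presentation, since it makes clear that the quantization bound \eqref{eq:qe_y} at time $k+j$ is only legitimate once $|e_{R,k+j}|_\infty \le E_{R,k+j}$ has been established, a point the paper leaves implicit.
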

\begin{proof}
	We see from \eqref{eq:plant}
	and \eqref{eq:controller} that
	the state estimation error $e_k$ satisfies
	\begin{equation*}
	e_{k+1} = (A-LC)e_{k} + L(y_{k} - q_{k} ).
	\end{equation*}
	Since $e_{R,k} = Re_k$, it follows that
	\begin{equation}
	\label{eq:error_dynamics}
	e_{R,k+1} = R(A-LC)R^{-1}e_{R,k} + RL(y_{k} - q_{k} ).
	\end{equation}
	Applying induction to \eqref{eq:error_dynamics},
	we obtain
	\begin{align*}
	e_{R,k+\ell} &= R(A-LC)^\ell R^{-1} e_{R,k} + \sum_{j = 0}^{\ell-1} 
	R (A-LC)^{\ell - j - 1} L (y_{k+j} - q_{k+j})
	\end{align*}
	for every $\ell \in \mathbb{N}$.
	It follows from \eqref{eq:qe_y} that
	\begin{align*}
	|e_{R,k+\ell}|_{\infty} &\leq \|R(A-LC)^\ell R^{-1}\|_{\infty}E_{R,k} \\
	& \qquad + \sum_{j = 0}^{\ell-1} 
	\|R(A-LC)^{\ell-j - 1}L\|_{\infty}  \frac{\|CR^{-1}\|_{\infty} }{N} E_{R,k+j}
	\end{align*}
	for every $\ell \in \mathbb{N}$.
	Using the norm condition \eqref{eq:normcond_all},
	we further have
	\begin{equation}
	\label{eq:ek_bound}
	|e_{R,k+\ell}|_{\infty} \leq M_0 \rho^\ell E_{R,k} \!+ \!
	\frac{M \|CR^{-1}\|_{\infty} }{N} 
	\sum_{j = 0}^{\ell-1}\rho^{\ell-j-1} E_{R,k+j}
	\end{equation}
	for every $\ell \in \mathbb{N}$.
	By Lemma \ref{lem:E_representation}, we obtain
	$|e_{R,k+\ell}|_{\infty} \leq E_{R,k+\ell}$ for all $\ell \in \mathbb{Z}_+$.
\end{proof}

Next we investigate the error bound in the presence of DoS attacks.
\begin{lemma}
	\label{lem:Ek_design_DoS}
	Consider the closed-loop system in the presence of DoS, that is,
	$L_{k+\ell} = 0$ in \eqref{eq:controller} for every $\ell \in \mathbb{Z}_+$.
	Assume that $|e_{R,k}|_{\infty} \leq E_{R,k}$, and 
	set 
	\begin{equation}
	\label{eq:E_update_with_attack}
	E_{R,k+\ell+1} := \theta_a E_{R,k+\ell}\qquad \forall \ell \in \mathbb{Z}_+,
	\end{equation}
	where $\theta_a$ is defined as in \eqref{eq:thetaa_def}.
	Then 
	$|e_{R,k+\ell}|_{\infty} \leq E_{R,k+\ell}$ for all $\ell \in \mathbb{Z}_+$.
\end{lemma}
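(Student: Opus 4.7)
The plan is to mimic the proof of Lemma \ref{Ek_design_withoutDoS} but in the much simpler open-loop observer regime that occurs under DoS. First I would write down the error dynamics in this regime. Since $L_{k+\ell}=0$ for every $\ell\in\mathbb{Z}_+$, the observer update \eqref{eq:controller} reduces to $\hat x_{k+\ell+1} = A\hat x_{k+\ell} + Bu_{k+\ell}$, and subtracting this from the plant equation \eqref{eq:plant} gives
\begin{equation*}
e_{k+\ell+1} = A e_{k+\ell}.
\end{equation*}
Note that, crucially, the quantized output $q_{k+\ell}$ does not enter the error equation, so the quantization error term present in \eqref{eq:error_dynamics} disappears.

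Next I would pass to the transformed coordinates. Applying $R$ to both sides and using $e_{R,k+\ell}=Re_{k+\ell}$ yields
\begin{equation*}
e_{R,k+\ell+1} = R A R^{-1} e_{R,k+\ell},
\end{equation*}
and taking the $\infty$-norm with the submultiplicative property of $\|\cdot\|_{\infty}$ gives $|e_{R,k+\ell+1}|_{\infty}\leq \|RAR^{-1}\|_{\infty}\,|e_{R,k+\ell}|_{\infty}=\theta_a\,|e_{R,k+\ell}|_{\infty}$, by the definition \eqref{eq:thetaa_def} of $\theta_a$.

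Finally I would close the argument by a one-line induction on $\ell$. The base case $\ell=0$ is exactly the standing hypothesis $|e_{R,k}|_{\infty}\leq E_{R,k}$. For the inductive step, assuming $|e_{R,k+\ell}|_{\infty}\leq E_{R,k+\ell}$, the displayed bound above combined with the update rule \eqref{eq:E_update_with_attack} gives
\begin{equation*}
|e_{R,k+\ell+1}|_{\infty}\leq \theta_a E_{R,k+\ell} = E_{R,k+\ell+1},
\end{equation*}
which completes the proof. There is no real obstacle here: the difficulty of the zooming-in analysis lies entirely in the DoS-free case treated by Lemma \ref{Ek_design_withoutDoS}, where quantization noise must be absorbed into the bound, whereas in the DoS case the observer is purely open-loop and the bound propagates by a single matrix norm. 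The role of this lemma is therefore to supply the clean worst-case growth factor $\theta_a$ that is later composed with the DoS-free decay factors $\theta_0$ and $\theta$ inside the update rule \eqref{eq:E_diff_equation}.
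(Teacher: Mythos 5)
Your proposal is correct and follows essentially the same route as the paper: derive $e_{k+1}=Ae_k$ from the open-loop observer, transform to $e_{R,k+1}=RAR^{-1}e_{R,k}$, bound by $\theta_a=\|RAR^{-1}\|_{\infty}$, and conclude by induction. The only difference is that you spell out the induction step and the disappearance of the quantization term slightly more explicitly than the paper does.
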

\begin{proof}
	The estimation error $e_k$ satisfies
	$
	e_{k+1} = A e_k,
	$
	and hence 
	\begin{equation}
	\label{eq:error_dynamics_DoS}
	e_{R,k+1} = RAR^{-1} e_{R,k}.
	\end{equation}
	This yields
	\[
	|e_{R,k+1}|_{\infty} \leq
	\|RAR^{-1}\|_{\infty} \cdot |e_{R,k}|_{\infty}
	\leq 
	\|RAR^{-1}\|_{\infty}E_{R,k}.
	\]
	By induction, we obtain
	$|e_{R,k+\ell}|_{\infty} \leq E_{R,k+\ell}$ for every $\ell \in \mathbb{Z}_+$.
\end{proof}

We immediately obtain the following result from
Lemmas \ref{Ek_design_withoutDoS} and \ref{lem:Ek_design_DoS}:
\begin{lemma}
	\label{lem:ERk_bound}
	For the transformed estimation error $e_{R,k}$,
	the error bound $\{E_{R,k}:k \in \mathbb{Z}_+\}$ defined by \eqref{eq:E_diff_equation} satisfies
	\begin{equation}
	\label{eq:eRk<ERk}
	|e_{R,k}|_{\infty} \leq E_{R,k} \qquad \forall k \in \mathbb{Z}_+.
	\end{equation}
	\vspace{0pt}
\end{lemma}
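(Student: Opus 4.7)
The plan is to prove \eqref{eq:eRk<ERk} by induction on a block decomposition of time, where each block is a maximal run of consecutive time‑steps on which the DoS status is constant. Within each block, one of Lemmas~\ref{Ek_design_withoutDoS} or \ref{lem:Ek_design_DoS} applies directly, and the update rule \eqref{eq:E_diff_equation} is designed so that the bound at the end of one block serves as a valid initial bound for the next block.

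Concretely, I would let $0 = k_0 < k_1 < k_2 < \cdots$ be the time‑steps at which the DoS status flips, so that on each half‑open interval $[k_i, k_{i+1})$ DoS is either always present or always absent. The base case is already supplied by the paper: $|e_{R,0}|_\infty = |Rx_0|_\infty \leq \|R\|_\infty E_0 = E_{R,0}$, using $\hat x_0 = 0$ and Assumption~\ref{assum:initial_bound}. For the inductive step, I assume $|e_{R,k_i}|_\infty \leq E_{R,k_i}$ and consider two cases. If DoS is active throughout $[k_i,k_{i+1})$, then Lemma~\ref{lem:Ek_design_DoS} gives $|e_{R,k_i+\ell}|_\infty \leq \theta_a^\ell E_{R,k_i}$ for all $0\leq \ell \leq k_{i+1}-k_i$; this coincides exactly with the first branch of \eqref{eq:E_diff_equation}. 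If DoS is absent throughout $[k_i,k_{i+1})$, then by construction either $k_i = 0$ or DoS occurred at $k_i - 1$, so the first step of \eqref{eq:E_diff_equation} on this block uses the factor $\theta_0$ and subsequent steps use $\theta$; this is precisely the update rule in Lemma~\ref{Ek_design_withoutDoS}, which then yields $|e_{R,k_i+\ell}|_\infty \leq E_{R,k_i+\ell}$ for all $0 \leq \ell \leq k_{i+1} - k_i$. In either case, evaluating at $\ell = k_{i+1} - k_i$ gives the initial hypothesis needed to start the induction on the next block.

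The main subtlety that I would want to be careful about is the role of $\theta_0$ versus $\theta$ at block boundaries. The factor $\theta_0$ carries the constant $M_0$ from the norm bound \eqref{eq:normcond1}, reflecting the fact that one extra application of $R(A-LC)R^{-1}$ on the \emph{existing} error $e_{R,k_i}$ is needed before the decaying geometric sum in \eqref{eq:Ek_def_naive} kicks in. The update rule \eqref{eq:E_diff_equation} uses $\theta_0$ precisely at time‑steps $k$ where $k=0$ or DoS occurred at $k-1$, i.e., at the first no‑DoS step of a new block, which matches the $\ell=0$ case in \eqref{eq:E_update_without_attack} exactly. Conversely, the DoS block in Lemma~\ref{lem:Ek_design_DoS} uses only $\theta_a$ at every step, matching the first branch of \eqref{eq:E_diff_equation}. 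Thus the three branches of \eqref{eq:E_diff_equation} align one‑to‑one with the block structure, and no mismatch arises at the transitions.

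I do not expect any hard obstacle here, since the substantive work has already been done in Lemmas~\ref{lem:E_representation}–\ref{lem:Ek_design_DoS}; the only thing left is the bookkeeping of matching the three cases of \eqref{eq:E_diff_equation} to the two lemmas and chaining their conclusions across block boundaries. The argument terminates for every $k \in \mathbb{Z}_+$ because any such $k$ lies in some block $[k_i,k_{i+1})$, and the inductive conclusion on that block gives $|e_{R,k}|_\infty \leq E_{R,k}$ as required.
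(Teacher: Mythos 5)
Your proposal is correct and matches the paper's intended argument: the paper simply states that Lemma~\ref{lem:ERk_bound} follows immediately from Lemmas~\ref{Ek_design_withoutDoS} and \ref{lem:Ek_design_DoS}, and your block decomposition with the induction across block boundaries is exactly the bookkeeping that makes this "immediate" deduction precise. The only point worth a remark is that Lemmas~\ref{Ek_design_withoutDoS} and \ref{lem:Ek_design_DoS} are stated under the hypothesis that the DoS status is constant for \emph{all} future time, whereas you apply them on finite blocks; this is harmless because $e_{R,k_i+\ell}$ depends only on the DoS status on $[k_i,k_i+\ell)$, which lies inside the block, and you implicitly use this.
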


Next, we show that 
the error bound $\{E_{R,k}:k \in \mathbb{Z}_+\}$ in \eqref{eq:E_diff_equation} converges to zero
under the condition \eqref{eq:with_frequency_Nnu}.
\begin{lemma}
	\label{lem:ERk_conv}
	Under the same hypotheses of Theorem \ref{thm:state_conv},
	there exist $\Omega \geq 1$ and $\gamma \in (0,1)$ such that
	\begin{equation}
	\label{eq:ERk_conv}
	E_{R,k} \leq  \Omega E_{R,0} \gamma^k\qquad \forall k \in \mathbb{Z}_+.
	\end{equation}	
	\vspace{0pt}
\end{lemma}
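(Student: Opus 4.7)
The plan is to iterate the recursion \eqref{eq:E_diff_equation} and bound the resulting product using Assumptions~\ref{assump:duration} and \ref{assump:freq}. Write
\[
E_{R,k} = \theta_a^{n_a(k)}\,\theta_0^{n_0(k)}\,\theta^{n(k)} E_{R,0},
\]
where $n_a(k)$, $n_0(k)$, $n(k)$ are the numbers of indices in $[0,k)$ at which \eqref{eq:E_diff_equation} applies the first, second and third cases, respectively; so $n_a(k)+n_0(k)+n(k)=k$.

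The first counting step is to relate $n_a(k)$ and $n_0(k)$ to the DoS statistics. By definition, $n_a(k)=\Phi_d(k)\le \Pi_d+\nu_d k$. For $n_0(k)$, note that an index $j$ contributes to $n_0$ only when either $j=0$ or $j-1$ is a DoS step while $j$ is not; such a $j$ sits at the start of a DoS-free burst, so it is paired bijectively (away from the initial step) with the end of a DoS episode in $[0,k)$. Hence $n_0(k)\le \Phi_f(k)+1\le \Pi_f+\nu_f k+1$. Consequently $n(k)\ge k-(\Pi_d+\Pi_f+1)-(\nu_d+\nu_f)k$.

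Next, observe that $\theta_a\ge\varrho(A)\ge 1>\theta$ (the last inequality comes from \eqref{eq:N_cond}, which forces $M\|CR^{-1}\|_\infty/N<1-\rho$ and hence $\theta<1$), and $\theta_0\ge\theta$ since $M_0\ge 1$. Therefore $\theta_a/\theta\ge 1$ and $\theta_0/\theta\ge 1$, and rewriting
\[
E_{R,k} \le E_{R,0}\,\theta^{k}\,(\theta_a/\theta)^{n_a(k)}(\theta_0/\theta)^{n_0(k)}
\]
and substituting the bounds on $n_a(k)$ and $n_0(k)$ yields
\[
E_{R,k} \le C\, E_{R,0}\,\Big[\theta\,(\theta_a/\theta)^{\nu_d}(\theta_0/\theta)^{\nu_f}\Big]^{k},
\]
with $C:=(\theta_a/\theta)^{\Pi_d}(\theta_0/\theta)^{\Pi_f+1}\ge 1$.

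It remains to check that the bracketed base is strictly less than $1$. Taking logarithms, this is equivalent to $\log\theta+\nu_d\log(\theta_a/\theta)+\nu_f\log(\theta_0/\theta)<0$, which, since $\log(\theta_a/\theta)>0$, rearranges precisely to the DoS bound \eqref{eq:DoS_cond}. Setting $\gamma:=\theta(\theta_a/\theta)^{\nu_d}(\theta_0/\theta)^{\nu_f}\in(0,1)$ and $\Omega:=C$ gives \eqref{eq:ERk_conv}. The only subtle point—and what I expect to be the main obstacle in writing this rigorously—is the combinatorial counting of the $\theta_0$-transitions in terms of $\Phi_f(k)$; everything else is a direct product estimate once the inequality $\theta_a\ge\theta$ (which uses that $A$ is not Schur stable, via Section~\ref{sec:closed_loop}) is in hand.
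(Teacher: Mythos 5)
Your proposal is correct and follows essentially the same route as the paper: the paper enumerates the DoS episodes and gap lengths explicitly and arrives at $E_{R,k_e}=\theta^{\sum r_m}\theta_0^{p+1}\theta_a^{\sum\tau_m}E_{R,0}$, which is exactly your multiplicity count $n_a(k)=\Phi_d(k)$, $n_0(k)=p+1\le\Phi_f(k)+1$, $n(k)=k-n_a-n_0$, followed by the same substitution and the same equivalence of the bracketed base being less than one with \eqref{eq:DoS_cond}. The counting of the $\theta_0$-transitions that you flag as the subtle point is handled correctly in your argument and matches the paper's bookkeeping.
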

\begin{proof}
	Choose $k_e \in \mathbb{N}$ arbitrarily, and
	assume that  DoS attacks are launched at
	\[
	k=k_m,\dots,k_m+\tau_m-1\qquad \forall m=1,\dots,p
	\]
	on the interval $[0,k_e)$, where
	$k_m \in \mathbb{Z}_+$, $\tau_m \in \mathbb{N}$ for
	every $m=1,\dots,p$ and
	\begin{gather*}
	k_m+\tau_m < k_{m+1}\qquad \forall m=1,\dots,p-1.
	\end{gather*}
	Namely, $k_m$ and $\tau_m$ are the beginning time and the length of 
	the $m$th DoS interval.
	Here $\sum_{m=1}^p \tau_m$ is the total duration of DoS attacks on $[0,k_e)$,
	and  $p$ is the total number of consecutive DoS attacks on $[0,k_e)$.
	Therefore, $\sum_{m=1}^p \tau_m = \Phi_d(k_e)$ and $p = \Phi_f(k_e)$.
	
	In what follows, we assume that $k_1 >0 $ and $k_p +\tau_p < k_e$ for simplicity. 
	In the case where $k_1 = 0$ or $k_p +\tau_p  = k_e$,
	one can prove the convergence of the error bound \eqref{eq:ERk_conv}
	in a similar way.
	
	Define 
	\begin{align*}
	r_1 &:= k_1-1, \quad r_{p+1} := k_e-k_p-\tau_p-1 \\
	r_m &:= k_m-k_{m-1}-\tau_{m-1}-1  \qquad \forall m=2,\dots,p.
	\end{align*}
	Then $r_m \geq 0$ for every $m=1,\dots,p+1$.
	Since DoS attacks are not launched on the interval $[0,k_1)$,
	it follows that
	\begin{align*}
	E_{R,k_1} = \theta^{r_1}\theta_0 E_{R,0}.
	\end{align*}
	On the other hand, DoS occurs on the interval $[k_1,\dots,k_1+\tau_1)$,
	and hence
	\begin{align*}
	E_{R,k_1+\tau_1} = \theta_a^{\tau_1}E_{R,k_1} 
	=
	\theta_a^{\tau_1}\theta^{r_1}\theta_0 E_{R,0}.
	\end{align*}
	Continuing in this way,
	we see that the error bound $E_{R,k_e}$ at the time $k=k_e$ satisfies
	\begin{align}
	E_{R,k_e} &= \theta^{r_{p+1}} \theta_0 E_{R,k_p+\tau_p} \notag \\
	&= \theta^{\sum_{m=1}^{p+1} r_m} \cdot \theta_0^{p+1} \cdot
	\theta_a^{\sum_{m=1}^{p} \tau_m}E_{R,0}. \label{eq:ET}
	\end{align}
	By definition,
	\begin{equation}
	\label{eq:ri_sum}
	\sum_{m=1}^{p+1} r_m = k_e - (p+1) - \sum_{m=1}^{p} \tau_m.
	\end{equation}
	Moreover, it follows from Assumptions \ref{assump:duration} and 
	\ref{assump:freq} that
	\begin{align}
	\label{eq:dp_bound}
	\sum_{m=1}^{p} \tau_m \leq \Pi_d + \nu_d k_e,\quad
	p  \leq \Pi_f + \nu_f k_e. 
	\end{align}
	Substituting \eqref{eq:ri_sum} and \eqref{eq:dp_bound} into \eqref{eq:ET}, we obtain
	\begin{align*}
	E_{R,k_e} 
	&=
	\theta^{k_e} \cdot
	\left(
	\frac{\theta_0}{\theta}
	\right)^{p+1} \cdot
	\left(
	\frac{\theta_a}{\theta}
	\right)^{\sum_{m=1}^{p} \tau_m}E_{R,0}  \\
	&\leq
	\frac{\theta_0^{\Pi_f+1} \cdot \theta_a^{\Pi_d}}{ \theta^{\Pi_f+\Pi_d+1}}
	\left(
	\theta \cdot 
	\left(
	\frac{ \theta_0}{\theta}
	\right)^{\nu_f} \cdot
	\left(
	\frac{\theta_a}{\theta}
	\right)^{\nu_d}
	\right)^{k_e} E_{R,0}.
	\end{align*}
	Since the inequality \eqref{eq:DoS_cond} is equivalent to
	\[
	\theta \cdot 
	\left(
	\frac{\theta_0}{\theta}
	\right)^{\nu_f} \cdot
	\left(
	\frac{\theta_a}{\theta}
	\right)^{\nu_d} < 1,
	\]
	the exponential convergence of the error bound
	\eqref{eq:ERk_conv} is established.
\end{proof}

We are now in a position to prove Theorem \ref{thm:state_conv}. 

\begin{proof}[Proof of Theorem \ref{thm:state_conv}]
The state $x_k$ satisfies
\begin{align}
x_{k+1} 
&= (A-BK)^{k+1} x_0 + 
\sum_{\ell=0}^k
(A-BK)^{k-\ell} BK R^{-1}e_{R,\ell} \notag.
\end{align}
Therefore,
\begin{align}
&|x_{k+1}|_{\infty}
\leq \|(A-BK)^{k+1}\|_{\infty}\cdot  |x_0|_{\infty} + 
\sum_{\ell=0}^k
\|(A-BK)^{k-\ell}\|_{\infty}\cdot  \| BKR^{-1}\|_{\infty} \cdot |e_{R,\ell}|_{\infty}.
\label{eq:x_norm_bound}
\end{align}
By Lemmas \ref{lem:ERk_bound} and  \ref{lem:ERk_conv}, there exist $\Omega \geq 1$ and $\gamma \in (0,1)$
such that 
\begin{equation}
\label{eq:e_norm_bound}
|e_{R,\ell} |_{\infty} \leq \Omega E_{R,0} \gamma^\ell \qquad \forall \ell \in \mathbb{Z}_+.
\end{equation}
Moreover, since $A-BK$ is Schur stable by Assumption \ref{assump:stabilizability_detectability}, 
there exist
$\Omega_K \geq 1$ and $\tilde \gamma \in [\gamma,1)$ such that 
\begin{equation}
\label{eq:ABK_bound}
\|(A-BK)^{\ell}\|_{\infty} \leq \Omega_K \tilde \gamma^{\ell}
\qquad \forall \ell \in \mathbb{Z}_+.
\end{equation}
Substituting \eqref{eq:e_norm_bound} and \eqref{eq:ABK_bound}
into \eqref{eq:x_norm_bound}, we obtain
\begin{align}
|x_{k+1}|_{\infty} &\leq
\Omega_K\tilde \gamma^{k+1} |x_0|_{\infty} +
\Omega \Omega_K E_{R,0} \|BKR^{-1}\|_{\infty} (k+1) \tilde \gamma^{k}.
\label{eq:x_norm_bound_q}
\end{align}
For every $\varepsilon > 0$,
there exists a constant $\alpha \geq 1$ such that 
$k \tilde \gamma^k \leq  \alpha (\tilde \gamma+ \varepsilon)^k$
for all $k \in \mathbb{Z}_+$. Thus 
\eqref{eq:x_norm_bound_q} leads to the 
exponential convergence of the state.
Additionally, since $\hat x_k = x_k - R^{-1}e_{R,k}$, it follows that
$\hat x_k$ also exponentially converges to zero.
This completes the proof. 
\end{proof}


\begin{remark}
	In the previous studies \cite{	Persis2015,
		Persis2016,
		Feng2017,
		Senejohnny2017, Lu2018, Feng2018},
	the assumption on the frequency of DoS attacks is used in a different way.
	The above studies consider continuous-time attacks, and hence
	the frequency at which DoS attacks are launched must be 
	smaller than the sampling rate. 
	Therefore, in the discrete-time case \cite{Cetinkaya2017},
	the assumption on the DoS frequency bound 
	is not used.
	However, the output encoding scheme in Theorem \ref{thm:state_conv}
	increases the error bound, $E_{R,k+1} = \theta_0E_{R,k}$, at the first time-step after DoS attacks occur.
	For this reason, we here employ the frequency assumption to obtain a less conservative sufficient condition.
\end{remark}

\subsection{Choice of invertible matrix R}
\label{sec:Choice_of_R}
In this subsection, we provide a guideline for choosing the invertible matrix $R \in \mathbb{C}^{n_x\times n_x}$.
We show that if the matrix $R$ is chosen appropriately, then
the encoding scheme in Theorem~\ref{thm:state_conv} 
does not need the assumption of the DoS frequency.
To this end,
we first provide a basic fact of the maximum norm.
\begin{proposition}
	\label{prop:Rc_existence}
	For a matrix $\Xi \in  \mathbb{C}^{n \times n}$ and
	a scalar $\varepsilon > 0$, take an 
	invertible matrix $R \in \mathbb{C}^{n \times n}$ satisfying
	\begin{align}
	\label{eq:cA_Jordan}
	\frac{1}{\varepsilon} R\Xi R^{-1} = J,
	\end{align}	
	where $J$ is the Jordan canonical form of $\Xi/\varepsilon$.
	Then the matrix $R$ satisfies 
	\begin{equation}
	\label{eq:max_norm_prop}
	\|R \Xi R^{-1}\|_{\infty} \leq  \varrho (\Xi) + \varepsilon.
	\end{equation}
	\vspace{0pt}
\end{proposition}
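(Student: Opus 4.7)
The plan is to unwind the definition of $R$ and exploit the fact that multiplying a Jordan block by a scalar rescales the superdiagonal entries, which is precisely where the $\varepsilon$ slack in \eqref{eq:max_norm_prop} will come from.

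First I would observe that the hypothesis \eqref{eq:cA_Jordan} rewrites as
\begin{equation*}
R\Xi R^{-1} = \varepsilon J,
\end{equation*}
so the proposition reduces to showing $\|\varepsilon J\|_\infty \leq \varrho(\Xi) + \varepsilon$. Since $J$ is the Jordan canonical form of $\Xi/\varepsilon$, its diagonal entries are $\lambda_i/\varepsilon$ for the eigenvalues $\lambda_i$ of $\Xi$, and its only other nonzero entries are $1$'s on the superdiagonal inside each Jordan block. Consequently $\varepsilon J$ is block diagonal, each block having the form $\lambda_i I + \varepsilon N$ where $N$ is the nilpotent shift with ones on the superdiagonal.

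Next I would compute the maximum absolute row sum of $\varepsilon J$. Every row contains at most one diagonal entry of modulus $|\lambda_i|$ and at most one superdiagonal entry of modulus $\varepsilon$; specifically, a row that sits in the interior of a Jordan block contributes $|\lambda_i| + \varepsilon$, while the last row of each block (and every row of a $1\times 1$ block) contributes only $|\lambda_i|$. Taking the maximum and using $|\lambda_i| \leq \varrho(\Xi)$ gives
\begin{equation*}
\|R\Xi R^{-1}\|_\infty = \|\varepsilon J\|_\infty \leq \varrho(\Xi) + \varepsilon,
\end{equation*}
which is \eqref{eq:max_norm_prop}.

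There is really no obstacle here; the proposition is a standard infinity-norm counterpart of the well-known fact that $\varepsilon$-perturbations of Jordan blocks can bring operator norms arbitrarily close to the spectral radius. The only point worth being careful about is to observe that the rescaling $\Xi/\varepsilon$ before taking the Jordan form is what turns the superdiagonal $1$'s into $\varepsilon$'s after multiplying back by $\varepsilon$; this is the whole content of the argument, and the row-sum computation is immediate.
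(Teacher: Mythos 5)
Your proof is correct and follows essentially the same route as the paper: both arguments reduce to observing that $\varepsilon J$ has the eigenvalues $\lambda_i$ of $\Xi$ on its diagonal and $\varepsilon$'s on the superdiagonal, so the maximum row sum is at most $\varrho(\Xi)+\varepsilon$. Your version merely spells out the row-sum computation in more detail than the paper does.
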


\begin{proof}
	Let us denote the eigenvalues of $\Xi$ by 
	$\lambda_1,\dots,\lambda_n$ (including multiplicity).
	Then the diagonal part of the Jordan canonical form 
	$J$ consists of $\lambda_1/\varepsilon,\dots,\lambda_n/\varepsilon$.
	Therefore, 
	\begin{equation}
	\label{eq:Jc_c}
	\left\|\varepsilon  J  \right\|_{\infty} \leq  \max_{j=1,\dots,n} |\lambda_j|  +\varepsilon 
	= \varrho(\Xi) + \varepsilon.
	\end{equation}
	By \eqref{eq:cA_Jordan} and \eqref{eq:Jc_c}, we obtain the desired conclusion.
\end{proof}

In particular, if the matrix $\Xi$ is Schur stable in Proposition~\ref{prop:Rc_existence}, 
then we obtain the following result by
choosing a sufficiently small  $\varepsilon >0$.
\begin{corollary}
	\label{coro:stable_case}
	For every Schur stable matrix $\Xi \in  \mathbb{C}^{n \times n}$, there exists 
	an invertible matrix $R\in \mathbb{C}^{n \times n}$ such that
	$\|R\Xi R^{-1}\|_{\infty}  < 1$.
\end{corollary}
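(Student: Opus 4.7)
The plan is to derive this corollary as a direct consequence of Proposition~\ref{prop:Rc_existence}. Since $\Xi$ is Schur stable, its spectral radius satisfies $\varrho(\Xi) < 1$. My approach is to exploit the freedom in choosing the parameter $\varepsilon > 0$ in Proposition~\ref{prop:Rc_existence} in order to push the upper bound $\varrho(\Xi) + \varepsilon$ strictly below $1$.

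Concretely, I would first fix any $\varepsilon > 0$ with
\[
\varepsilon < 1 - \varrho(\Xi),
\]
which is possible precisely because $\varrho(\Xi) < 1$. With this $\varepsilon$ fixed, the scaled matrix $\Xi/\varepsilon$ has a Jordan canonical form $J$, and by standard linear algebra there exists an invertible matrix $R \in \mathbb{C}^{n\times n}$ such that
\[
\frac{1}{\varepsilon} R \Xi R^{-1} = J.
\]
Thus the hypotheses of Proposition~\ref{prop:Rc_existence} are met.

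Applying Proposition~\ref{prop:Rc_existence} then yields
\[
\|R \Xi R^{-1}\|_{\infty} \leq \varrho(\Xi) + \varepsilon < 1,
\]
where the strict inequality comes from the choice of $\varepsilon$. This completes the argument.

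There is no real obstacle here; the corollary is essentially a restatement of Proposition~\ref{prop:Rc_existence} specialized to the Schur stable case, with the only subtlety being to note that $\varrho(\Xi) < 1$ leaves room to choose $\varepsilon$ so that $\varrho(\Xi) + \varepsilon$ remains below one. Nothing beyond existence of the Jordan form and the bound established in the proposition is needed.
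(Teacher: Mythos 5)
Your proposal is correct and coincides with the paper's own argument: the corollary is obtained by applying Proposition~\ref{prop:Rc_existence} with any $\varepsilon < 1 - \varrho(\Xi)$, which is possible since Schur stability gives $\varrho(\Xi) < 1$. Nothing further is needed.
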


Let
an invertible matrix $R \in \mathbb{C}^{n_x \times n_x}$ satisfy
\begin{align}
\label{eq:R_ALC_cond}
\|R(A-LC)R^{-1}\|_{\infty} < 1.
\end{align}
Corollary \ref{coro:stable_case} shows 
that such a matrix $R$ always exists under Assumption \ref{assump:stabilizability_detectability}.
We set
the error bound $\{ E_{R,k}:k \in \mathbb{Z}_+\}$ to be 
\begin{align}
E_{R,k+1} := 
\begin{cases}
\vartheta_a   E_{R,k}  & \text{if DoS occurs at $k$} \\
\vartheta  E_{R,k}  & \text{otherwise}, 
\end{cases}
\label{eq:E_diff_equation_coro}
\end{align}
where
\begin{subequations}
	\begin{align}
	\vartheta_a & :=  \|RAR^{-1}\|_{\infty} 	\label{eq:thetaa_def_coro} \\
	\vartheta & :=  \|R(A-LC)R^{-1}\|_{\infty}   +  \frac{\|RL\|_{\infty}  \cdot  \|CR^{-1}\|_{\infty}}{N}. \label{eq:theta_def_coro}
	\end{align}
\end{subequations}

The following result, which is a corollary of Theorem \ref{thm:state_conv},
shows that the encoding scheme with the error bound $\{ E_{R,k}: k \in \mathbb{Z}_+\}$
updated by \eqref{eq:E_diff_equation_coro}
achieves exponential convergence without any DoS frequency assumptions.
\begin{corollary}
	\label{coro:simle_case}
	Suppose that Assumptions \ref{assump:duration}, \ref{assump:stabilizability_detectability}, and
	\ref{assum:initial_bound} hold.
	Assume that 
	an invertible matrix $R \in \mathbb{C}^{n_x \times n_x}$ satisfies
	\eqref{eq:R_ALC_cond}.
	If the number of quantization levels $N$
	and the DoS duration bound $\nu_d$ satisfy 
	\begin{subequations}
		\label{eq:without_frequency_Nnu}
		\begin{align}
		N &> \frac{\|RL\|_{\infty} \cdot \|CR^{-1}\|_{\infty}}{1-\|R(A-LC)R^{-1}\|_{\infty} } \label{eq:N_cond_coro} \\
		\label{eq:DoS_cond_coro}
		\nu_d &< \frac{\log(1/\vartheta)}{\log(\vartheta_a/\vartheta)},
		\end{align}
	\end{subequations}
	then the feedback system achieves exponential convergence under
	the encoding scheme with the error bound $\{E_{R,k}:k \in \mathbb{Z}_+\}$ constructed by
	the update rule \eqref{eq:E_diff_equation_coro}.
	\vspace{0pt}
\end{corollary}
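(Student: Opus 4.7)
The plan is to derive Corollary \ref{coro:simle_case} as a direct specialization of Theorem \ref{thm:state_conv}: I will pick constants $M_0, M, \rho$ in the hypotheses of the theorem so that the frequency-sensitive term in \eqref{eq:DoS_cond} vanishes identically and so that the three-case update rule \eqref{eq:E_diff_equation} collapses onto the two-case rule \eqref{eq:E_diff_equation_coro}. The key observation is that the assumption $\|R(A-LC)R^{-1}\|_{\infty} < 1$ allows the constant $M_0$ in \eqref{eq:normcond1} to be taken equal to $1$, which in turn forces $\theta_0 = \theta$.

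Concretely, I would set
\[
M_0 := 1,\qquad M := \|RL\|_{\infty},\qquad \rho := \|R(A-LC)R^{-1}\|_{\infty},
\]
and verify the norm bounds \eqref{eq:normcond_all}. Condition \eqref{eq:normcond1} is immediate from submultiplicativity, and \eqref{eq:normcond2} follows by factoring $R(A-LC)^{\ell}L = \bigl(R(A-LC)^{\ell}R^{-1}\bigr)(RL)$ and applying submultiplicativity together with the previous bound; assumption \eqref{eq:R_ALC_cond} guarantees $\rho \in [0,1)$. Substituting into \eqref{eq:theta_def_all} gives $\theta_a = \vartheta_a$, while $M_0 = 1$ forces the expressions \eqref{eq:theta0_def} and \eqref{eq:theta_def} to coincide, so $\theta_0 = \theta = \vartheta$. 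Consequently the ``$k=0$ or DoS occurs at $k-1$'' branch of \eqref{eq:E_diff_equation} becomes indistinguishable from the ``otherwise'' branch, and the update rule reduces verbatim to \eqref{eq:E_diff_equation_coro}. The data-rate condition \eqref{eq:N_cond} likewise matches \eqref{eq:N_cond_coro}.

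The remaining wrinkle, and the only one worth flagging as a potential obstacle, is that Corollary \ref{coro:simle_case} drops Assumption \ref{assump:freq}, while Theorem \ref{thm:state_conv} formally requires it. I would dispose of this by noting that two distinct DoS intervals are separated by at least one non-DoS time-step (as encoded in the decomposition $k_m + \tau_m < k_{m+1}$ used in the proof of Lemma \ref{lem:ERk_conv}), so $\Phi_f(k) \leq k/2 + 1$ holds unconditionally; Assumption \ref{assump:freq} is therefore trivially satisfied with $\Pi_f = 1$ and $\nu_f = 1/2$. Because $\log(\theta_0/\theta) = 0$, the $\nu_f$-term in \eqref{eq:DoS_cond} drops out regardless of the value of $\nu_f$, and the condition reduces to \eqref{eq:DoS_cond_coro}. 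Invoking Theorem \ref{thm:state_conv} with the above choices then delivers the exponential convergence claim. All remaining steps are algebraic substitutions in the statement of the theorem.
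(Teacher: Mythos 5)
Your proposal is correct and follows essentially the same route as the paper: set $M_0=1$, $M=\|RL\|_{\infty}$, $\rho=\|R(A-LC)R^{-1}\|_{\infty}$, observe $\theta_0=\theta=\vartheta$ and $\theta_a=\vartheta_a$, and invoke Theorem \ref{thm:state_conv}. Your explicit verification of the norm bounds and your observation that Assumption \ref{assump:freq} is vacuously satisfiable with $\nu_f=1/2$ (so that dropping it is harmless once $\log(\theta_0/\theta)=0$) fill in details the paper leaves implicit, but the argument is the same.
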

\begin{proof}
	We can set the constants $\rho, M_0,M$ in \eqref{eq:normcond_all} to be 
	\begin{align*}
	\rho = \|R(A-LC)R^{-1}\|_{\infty},~~
	M_0 = 1,~~
	M = \|RL\|_{\infty}.
	\end{align*}
	Then $\theta_0$ and $\theta$ defined in \eqref{eq:theta0_def}, \eqref{eq:theta_def}
	are equal to $\vartheta$ in \eqref{eq:theta_def_coro}.
	By definition, $\theta_a = \vartheta_a$.
	Since $\log(\theta_0/\theta) = 0$, 
	the conditions \eqref{eq:without_frequency_Nnu} on $N$ and $\nu_d$
	are the same as the conditions \eqref{eq:with_frequency_Nnu}.
	Thus, the desired result follows from Theorem \ref{thm:state_conv}.
\end{proof}

\begin{remark}
	If the pair $(C,A)$ is observable, then there exists a deadbeat gain 
	$L \in \mathbb{R}^{n_x \times n_y}$ such that $\varrho(A-LC) = 0$.
	Proposition \ref{prop:Rc_existence} shows that
	for every $\varepsilon > 0$, 
	there exists an invertible matrix $R \in \mathbb{C}^{n_x \times n_x}$ such that
	\[
	\|R(A-LC)R^{-1}\|_{\infty} < \varepsilon.
	\]
	In this case,
	since $\vartheta$ in \eqref{eq:theta_def_coro} satisfies
	$\lim_{N \to \infty} \vartheta \leq \varepsilon$,
	Corollary \ref{coro:simle_case} shows that 
	for every DoS duration bound $\nu_d \in [0,1)$,
	there exists $N \in \mathbb{N}$ 
	such that exponential convergence
	is achieved, which is consistent with Theorem 2 of \cite{Feng2017}.
\end{remark}

\begin{remark}
	In Corollary \ref{coro:simle_case}, 
	we choose the matrix $R$ so that the 
	growth rate $\theta_0$ in \eqref{eq:theta0_def} of $E_{R,k}$ 
	is less than one.
	Another choice of the matrix $R$ is to reduce 
	the other growth rate $\theta_a$
	in \eqref{eq:thetaa_def}. In fact,
	Proposition \ref{prop:Rc_existence} shows that 
	for every $\varepsilon > 0$, 
	we can obtain an invertible matrix $R \in \mathbb{C}^{n_x \times n_x}$
	satisfying
	\[
	\varrho (A) \leq \theta_a = \|RAR^{-1}\|_{\infty} \leq \varrho (A) + \varepsilon.
	\]
	As seen in the numerical example of Section V,
	if the DoS frequency bound $\nu_f$ is sufficiently small, then
	the latter choice  can lead to the update rule \eqref{eq:E_diff_equation}
	that allows longer DoS duration than the update rule \eqref{eq:E_diff_equation_coro}.
\end{remark}

\begin{remark}
	The encoding scheme of Theorem \ref{thm:state_conv} has
	a freedom in the choice of the matrix $R$, whereas 
	the matrix $R$ in  Corollary \ref{coro:simle_case} needs to satisfy \eqref{eq:R_ALC_cond}
	but allows stability analysis without any assumption on the DoS frequency.
	In general, we cannot say which encoding scheme is better with respect to the quantization level $N$
	and the DoS duration bound $\nu_d$. Moreover, it is difficult to design the matrix $R$ 
	satisfying given conditions on the quantization
	level $N$ and the DoS duration bound $\nu_d$ without employing metaheuristics  such as
	genetic algorithms.
	We leave this issue for future investigation.
\end{remark}

\subsection{Encoding scheme with center at origin}
In Sections \ref{sec:basic_zooming_in}--\ref{sec:Choice_of_R}, we have considered 
the encoding scheme that 
uses the output estimate as the quantization center.
Here we propose encoding schemes with center at
the origin.
In such an encoding scheme, the encoder does not need
to compute the output estimate. Therefore,
we can encode the output with less computational resources.

Define 
\begin{align*}
z_k &:= 
\begin{bmatrix}
x_{k} \\ e_k
\end{bmatrix} \qquad 
C_{\text{cl}} := 
\begin{bmatrix}
C & 0
\end{bmatrix}.
\end{align*}
Using an invertible matrix $R_{\text{cl}} \in \mathbb{C}^{2n_x \times 2n_x}$, 
we define the transformed closed-loop state $z_{R,k} \in \mathbb{C}^{2n_x}$
by $z_{R,k} := R_{\text{cl}} z_k$.
Let $E_{R,k}^z \geq 0$ satisfy
\begin{equation}
\label{eq:z_Rk_bound}
|z_{R,k}|_{\infty} \leq E_{R,k}^z.
\end{equation}
Then 
$
|y_k|_{\infty} = |C_{\text{cl}}R_{\text{cl}}^{-1} z_{R,k}|_{\infty} \leq \|C_{\text{cl}}R_{\text{cl}}^{-1}\|_{\infty} E_{R,k}^z.
$

The only difference from the encoding and decoding scheme in Section III-B
is that we here employ the hypercube with center at the origin
\begin{equation*}
\left\{ y \in \mathbb{R}^{n_y}:| y |_{\infty} 
\leq \|C_{\text{cl}}R_{\text{cl}}^{-1}\|_{\infty}  E_{R,k}^z \right\},
\end{equation*}
and partition it
into $N^{n_y}$ equal boxes, instead of
the hypercube with center at the output estimate \eqref{eq:quantization}.
The quantization error $|y_k - q_k|_{\infty}$ of this encoding scheme satisfies
\begin{equation*}
|y_k - q_k|_{\infty}  \leq \frac{\|C_{\text{cl}}R_{\text{cl}}^{-1}\|_{\infty} }{N} E_{R,k}^z.
\end{equation*}

To
achieve the exponential convergence of the closed-loop state $z_k$,
we aim at designing a sequence $\{E_{R,k}^z:k \in \mathbb{Z}_+\}$ of state bounds
that satisfies \eqref{eq:z_Rk_bound} for every $k \in \mathbb{Z}_+$ and
exponentially decreases to zero.  We start with the dynamics of the 
transformed closed-loop state $z_{R,k}$.
Define
\begin{align*}
A_{\text{cl}} &:= 
\begin{bmatrix}
A-BK & BK \\
0 & A-L C
\end{bmatrix},\quad 
A_{\text{op}} := 
\begin{bmatrix}
A-BK & BK \\
0 & A
\end{bmatrix},\quad
L_{\text{cl}} :=
\begin{bmatrix}
0 \\
L
\end{bmatrix}.
\end{align*}
If DoS does not occur at time $k$, then
\[
z_{R,k+1} = R_{\text{cl}}A_{\text{cl}} R_{\text{cl}}^{-1} z_{R,k} +
R_{\text{cl}} L_{\text{cl}} (y_k - q_k);
\]
otherwise
\[
z_{R,k+1} = R_{\text{cl}}A_{\text{op}} R_{\text{cl}}^{-1} z_{R,k}.
\]
The dynamics of the closed-loop state $z_{R,k}$ 
has the same structure as that of the error $e_{R,k}$ in 
\eqref{eq:error_dynamics} and \eqref{eq:error_dynamics_DoS}.
Therefore, we apply the discussion in Sections 
\ref{sec:Duration_Frequency}--\ref{sec:Choice_of_R} to
the sequence $\{E_{R,k}^z:k\in \mathbb{Z}_+\}$ of state bounds
with minor modifications.

First, we introduce the counterpart of the encoding scheme in Theorem \ref{thm:state_conv}.
Choose $M_0^z \geq 1$, $M^z \geq \|R_{\text{cl}}L_{\text{cl}}\|_{\infty}$, and $\rho_{\text{cl}} \in (0,1)$ satisfying
\begin{subequations}
	\label{eq:normcond_all_origin}
	\begin{align}
	\|R_{\text{cl}}A_{\text{cl}}^\ell R_{\text{cl}}^{-1} \|_{\infty} &\leq M_0^z \rho_{\text{cl}}^\ell \qquad \forall \ell \geq 0\\
	\|R_{\text{cl}}A_{\text{cl}}^{\ell}L_{\text{cl}}\|_{\infty}  &\leq M^z \rho_{\text{cl}}^\ell 
	\qquad \forall \ell \geq 0.
	\end{align}
\end{subequations}
Define constants $\phi_a, \phi_0, \phi > 0$ by
\begin{align*} 
\phi_a &:=  \|R_{\text{cl}}A_{\text{op}}R_{\text{cl}}^{-1}\|_{\infty} \\
\phi_0 &:= M_0^z\rho_{\text{cl}} + \frac{M^z\|C_{\text{cl}}R_{\text{cl}}^{-1}\|_{\infty}}{N} \\
\phi &:= \rho_{\text{cl}} + \frac{M^z\|C_{\text{cl}}R_{\text{cl}}^{-1}\|_{\infty}}{N}. 
\end{align*}
Using these constants, we set
the sequence $\{E_{R,k}^z:k \in \mathbb{Z}_+\}$ of state bounds to be
\begin{align}
E_{R,k+1}^z := 
\begin{cases}
\phi_a  E_{R,k}^z  & \text{if DoS occurs at $k$} \\
\phi_0 E_{R,k}^z & \text{else if $k=0$ or DoS occurs at $k-1$} \\
\phi E_{R,k}^z & \text{otherwise}\\
\end{cases}
\label{eq:E_diff_equation_origin}
\end{align}
for all $k \in \mathbb{Z}_+$.
Since $\hat x_0 = 0$, it follows that $|z_{0}|_{\infty} \leq E_0$  under Assumption~\ref{assum:initial_bound}.
Therefore,
\[
|z_{R,0}|_{\infty} = |R_{\text{cl}}z_0|_{\infty} \leq \|R_{\text{cl}}\|_{\infty} E_0 
=: E_{R,0}^z.
\]

\begin{theorem}
	\label{thm:state_conv_origin}
	Suppose that  
	Assumptions \ref{assump:duration}, \ref{assump:freq}, \ref{assump:stabilizability_detectability}, and
	\ref{assum:initial_bound} hold.
	If the number of quantization levels $N$ and
	the DoS duration and frequency bounds $\nu_d$ and $\nu_f$ satisfy
	\begin{subequations}
		\label{eq:with_frequency_Nnu_origin}
		\begin{align}
		N &> \frac{M^z\|C_{\text{cl}}R_{\text{cl}}^{-1}\|_{\infty}}{1-\rho_{\text{cl}}} \label{eq:N_cond_origin} \\
		\nu_d &< \frac{\log(1/\phi)}{\log(\phi_a/\phi)} 
		- \frac{\log(\phi_0/\phi)}{\log(\phi_a/\phi)} \nu_f,
		\label{eq:DoS_cond_origin}
		\end{align}
	\end{subequations}
	then the feedback system achieves exponential convergence under
	the encoding scheme with the state bound $\{E_{R,k}^z:k \in \mathbb{Z}_+\}$ constructed by
	the update rule \eqref{eq:E_diff_equation_origin}.
	\vspace{0pt}
\end{theorem}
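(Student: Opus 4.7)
The plan is to mirror the proof of Theorem \ref{thm:state_conv} almost verbatim, substituting the transformed closed-loop state $z_{R,k}$ for the transformed estimation error $e_{R,k}$, the matrices $A_{\text{cl}}, A_{\text{op}}, L_{\text{cl}}, C_{\text{cl}}, R_{\text{cl}}$ for $A-LC, A, L, C, R$, and the constants $(\phi_a, \phi_0, \phi, M_0^z, M^z, \rho_{\text{cl}})$ for $(\theta_a, \theta_0, \theta, M_0, M, \rho)$. The authors have already verified that the dynamics of $z_{R,k}$ in the two regimes have exactly the same structure as \eqref{eq:error_dynamics} and \eqref{eq:error_dynamics_DoS}, so each auxiliary result in Section \ref{sec:Duration_Frequency} translates with no structural change.

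The first step is to establish the invariant $|z_{R,k}|_\infty \leq E_{R,k}^z$ for every $k \in \mathbb{Z}_+$ by imitating Lemmas \ref{Ek_design_withoutDoS} and \ref{lem:Ek_design_DoS}. In an attack-free run starting at time $k$, iterating
\[
z_{R,k+1} = R_{\text{cl}}A_{\text{cl}}R_{\text{cl}}^{-1}z_{R,k} + R_{\text{cl}}L_{\text{cl}}(y_k - q_k)
\]
together with the norm conditions \eqref{eq:normcond_all_origin} and the quantization bound $|y_k - q_k|_\infty \leq \|C_{\text{cl}}R_{\text{cl}}^{-1}\|_\infty E_{R,k}^z/N$ produces a convolution-type bound whose right-hand side coincides with the $(\phi_0,\phi)$-update of $E_{R,k}^z$ by the induction of Lemma \ref{lem:E_representation}. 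During a DoS interval, $z_{R,k+1} = R_{\text{cl}}A_{\text{op}}R_{\text{cl}}^{-1}z_{R,k}$ immediately yields $|z_{R,k+\ell}|_\infty \leq \phi_a^\ell E_{R,k}^z$. Concatenating the segments along the actual attack pattern and repeating the telescoping of Lemma \ref{lem:ERk_conv} gives
\[
E_{R,k_e}^z = \phi^{k_e}\left(\frac{\phi_0}{\phi}\right)^{p+1}\left(\frac{\phi_a}{\phi}\right)^{\sum_m \tau_m} E_{R,0}^z
\]
with $p=\Phi_f(k_e)$ and $\sum_m \tau_m = \Phi_d(k_e)$; Assumptions \ref{assump:duration} and \ref{assump:freq} bound these, and condition \eqref{eq:DoS_cond_origin} is precisely equivalent to $\phi\cdot(\phi_0/\phi)^{\nu_f}\cdot(\phi_a/\phi)^{\nu_d} < 1$, while \eqref{eq:N_cond_origin} secures $\phi < 1$, so a geometric rate $\gamma \in (0,1)$ is produced.

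Finally, since $z_k = R_{\text{cl}}^{-1}z_{R,k}$, the max-norm inequalities $|x_k|_\infty \leq |z_k|_\infty \leq \|R_{\text{cl}}^{-1}\|_\infty |z_{R,k}|_\infty$ and $|\hat x_k|_\infty \leq |x_k|_\infty + |e_k|_\infty \leq 2|z_k|_\infty$ (using $\hat x_k = x_k - e_k$) yield
\[
|x_k|_\infty \leq \|R_{\text{cl}}^{-1}\|_\infty E_{R,k}^z, \qquad |\hat x_k|_\infty \leq 2\|R_{\text{cl}}^{-1}\|_\infty E_{R,k}^z,
\]
both of which decay geometrically by the previous step; combined with $E_{R,0}^z = \|R_{\text{cl}}\|_\infty E_0$ this delivers $|x_k|_\infty, |\hat x_k|_\infty \leq \Omega E_0 \gamma^k$ with $\Omega$ independent of $E_0$. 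The main simplification compared to Theorem \ref{thm:state_conv} is that no separate convolution of feedback gains against the error, and no absorption of a $k\tilde\gamma^k$ term into $(\tilde\gamma+\varepsilon)^k$, are required, because $x_k$ is already encoded directly inside $z_{R,k}$. The only bookkeeping I expect to need a second look is the boundary case $k_1 = 0$ or $k_p+\tau_p = k_e$ in the telescoping formula, handled by the same one-line modification noted in Lemma \ref{lem:ERk_conv}; this is routine and not a genuine obstacle.
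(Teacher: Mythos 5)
Your proposal is correct and follows essentially the same route as the paper, which proves this theorem precisely by invoking the analogues of Lemmas \ref{lem:ERk_bound} and \ref{lem:ERk_conv} for $z_{R,k}$ and then concluding directly from $z_k = R_{\text{cl}}^{-1}z_{R,k}$ (so that, as you note, no convolution against $(A-BK)^{k-\ell}$ or absorption of a $k\tilde\gamma^k$ term is needed). Your explicit bounds $|x_k|_\infty \leq \|R_{\text{cl}}^{-1}\|_\infty E_{R,k}^z$ and $|\hat x_k|_\infty \leq 2\|R_{\text{cl}}^{-1}\|_\infty E_{R,k}^z$ just spell out the final step the paper leaves implicit.
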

\begin{proof}
	As in Lemmas \ref{lem:ERk_bound} and \ref{lem:ERk_conv},
	one can show that \eqref{eq:z_Rk_bound} holds for every $k \in \mathbb{Z}_+$
	and that
	there exist $\Omega \geq 1$ and $\gamma \in (0,1)$ such that
	\begin{equation*}
	E_{R,k}^z \leq  \Omega E_{R,0}^z \gamma^k\qquad \forall k \in \mathbb{Z}_+.
	\end{equation*}	
	By the definition of $z_{R,k}$, these facts yield
	the exponential convergence of the closed-loop system.
\end{proof}

The decay rate $\rho$ in \eqref{eq:normcond_all} depends only on $A-LC$ and
satisfies $\rho \geq \varrho(A-LC)$. In contrast,
the counterpart $\rho_{\rm cl}$ in \eqref{eq:normcond_all_origin}
depends on $A-BK$ as well as $A-LC$, and 
$\rho_{\rm cl} \geq \max\{\varrho(A-BK),\varrho(A-LC) \}$ holds.
Therefore, when 
$\varrho(A-LC)$ is small but
$\varrho(A-BK)$ is large, 
the encoding scheme with center at the origin decreases
quantization errors slowly.

Next we present the counterpart of the encoding scheme in Corollary \ref{coro:simle_case}.
We set
the state bound $\{ E_{R,k}^z:k \in \mathbb{Z}_+\}$ to be 
\begin{align}
E_{R,k+1}^z := 
\begin{cases}
\varphi_a   E_{R,k}^z  & \text{if DoS occurs at $k$} \\
\varphi  E_{R,k}^z  & \text{otherwise}, 
\end{cases}
\label{eq:E_diff_equation_coro_origin}
\end{align}
where
\begin{align*}
\varphi_a &:= \|R_{\text{cl}}A_{\text{op}}R_{\text{cl}}^{-1}\|_{\infty} \\
\varphi &:= \|R_{\text{cl}}A_{\text{cl}}R_{\text{cl}}^{-1}\|_{\infty}  
+ \frac{\|R_{\text{cl}}L_{\text{cl}}\|_{\infty} \cdot \|C_{\text{cl}}R_{\text{cl}}^{-1}\|_{\infty}}{N}.
\end{align*}

We obtain the following corollary of Theorem \ref{thm:state_conv_origin} from the same argument as in
Corollary \ref{coro:simle_case}.
\begin{corollary}
	\label{coro:simle_case_origin}
	Suppose that Assumptions \ref{assump:duration}, \ref{assump:stabilizability_detectability}, and
	\ref{assum:initial_bound} hold.
	Assume that 
	an invertible matrix $R \in \mathbb{C}^{2n_x \times 2n_x}$ satisfies
	\[
	\|R_{\text{cl}}A_{\text{cl}}R_{\text{cl}}^{-1}\|_{\infty}  < 1.
	\]
	If the number of quantization levels $N$
	and the DoS duration bound $\nu_d$ satisfy 
	\begin{subequations}
		\label{eq:without_frequency_Nnu_origin}
		\begin{align}
		N &> \frac{\|R_{\text{cl}}L_{\text{cl}}\|_{\infty} 
			\cdot \|C_{\text{cl}}R_{\text{cl}}^{-1}\|_{\infty}}{1-\|R_{\text{cl}}A_{\text{cl}}R_{\text{cl}}^{-1}\|_{\infty} } \label{eq:N_cond_coro_origin} \\
		\label{eq:DoS_cond_coro_origin}
		\nu_d &< \frac{\log(1/\varphi)}{\log(\varphi_a/\varphi)},
		\end{align}
	\end{subequations}
	then the feedback system achieves exponential convergence under
	the encoding scheme with the state  bound $\{E_{R,k}^z:k \in \mathbb{Z}_+\}$ constructed by
	the update rule \eqref{eq:E_diff_equation_coro_origin}.
\end{corollary}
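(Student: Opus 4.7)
The plan is to mirror exactly the reduction used in the proof of Corollary \ref{coro:simle_case}, only now applied to the closed-loop formulation of Theorem \ref{thm:state_conv_origin} rather than to the error-dynamics formulation of Theorem \ref{thm:state_conv}. That is, I would deduce Corollary \ref{coro:simle_case_origin} as a specialization of Theorem \ref{thm:state_conv_origin} in which a particular choice of the norm-bound parameters $M_0^z$, $M^z$, $\rho_{\text{cl}}$ makes the DoS frequency term drop out.

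First, I would verify that the choice
\[
\rho_{\text{cl}} := \|R_{\text{cl}}A_{\text{cl}}R_{\text{cl}}^{-1}\|_{\infty},\qquad M_0^z := 1,\qquad M^z := \|R_{\text{cl}}L_{\text{cl}}\|_{\infty}
\]
satisfies the norm conditions \eqref{eq:normcond_all_origin}, which is a direct consequence of submultiplicativity of $\|\cdot\|_{\infty}$. The hypothesis $\|R_{\text{cl}}A_{\text{cl}}R_{\text{cl}}^{-1}\|_{\infty} < 1$ ensures $\rho_{\text{cl}} \in (0,1)$, as required by Theorem \ref{thm:state_conv_origin}.

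Second, I would substitute these constants into the definitions of $\phi_a$, $\phi_0$, $\phi$ and observe that $\phi_a = \varphi_a$ and that $\phi_0$ and $\phi$ both collapse to $\varphi$. Consequently, the update rule \eqref{eq:E_diff_equation_origin} becomes identical to \eqref{eq:E_diff_equation_coro_origin}, and the rate condition \eqref{eq:N_cond_origin} on $N$ becomes \eqref{eq:N_cond_coro_origin}. Since $\log(\phi_0/\phi) = \log 1 = 0$, the second term on the right-hand side of \eqref{eq:DoS_cond_origin} vanishes and the frequency bound $\nu_f$ no longer appears, leaving exactly \eqref{eq:DoS_cond_coro_origin}. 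Thus the hypotheses of Theorem \ref{thm:state_conv_origin} are met with $\nu_f$ unconstrained, and Assumption \ref{assump:freq} is not needed.

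There is essentially no obstacle here: the argument is a direct parameter specialization, as in Corollary \ref{coro:simle_case}. The only mild subtlety worth checking is that Theorem \ref{thm:state_conv_origin} is stated with the frequency assumption in force, but the inequality \eqref{eq:DoS_cond_origin} is linear in $\nu_f$ with a zero coefficient once $\phi_0 = \phi$, so any finite value of $\nu_f$ (in particular $\nu_f = 0.5$) satisfies it, and no frequency hypothesis is actually used in the resulting bound on $E_{R,k}^z$. The conclusion on exponential convergence of the closed-loop state then follows immediately from Theorem \ref{thm:state_conv_origin}.
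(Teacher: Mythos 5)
Your proposal is correct and matches the paper's own treatment: the paper derives Corollary \ref{coro:simle_case_origin} from Theorem \ref{thm:state_conv_origin} "from the same argument as in Corollary \ref{coro:simle_case}," i.e., exactly the specialization $\rho_{\text{cl}} = \|R_{\text{cl}}A_{\text{cl}}R_{\text{cl}}^{-1}\|_{\infty}$, $M_0^z = 1$, $M^z = \|R_{\text{cl}}L_{\text{cl}}\|_{\infty}$, under which $\phi_0 = \phi = \varphi$, $\phi_a = \varphi_a$, and the frequency term in \eqref{eq:DoS_cond_origin} vanishes. Your additional remark that the frequency hypothesis becomes vacuous once $\log(\phi_0/\phi)=0$ is the same observation the paper makes implicitly by dropping Assumption \ref{assump:freq} from the corollary's hypotheses.
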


\section{Derivation of initial state bounds under DoS}
In this section, we present an encoding and decoding scheme
to obtain an initial state bound $E_0$ under DoS attacks.
To this end, we extend the zooming-out 
method proposed in \cite{Liberzon2003} to
the case under DoS attacks.
Due to the attacks and the missing output data,
it can be difficult to obtain correct state estimates on the
decoder side, which is a basic question related to observability.

We place the following assumptions in this section:
\begin{assumption}[Observability]
	\label{assump:observability}
	The pair $(C,A)$ is observable.
\end{assumption}
\begin{assumption}[Odd quantization level]
	\label{assump:odd}
	The quantization level $N$ is an odd number.
\end{assumption}

The derivation of state bounds requires observability rather than detectability.
If the quantization number $N$ is odd, then
the quantized value $q_k$ is zero for a sufficiently small output $y_k$.
We use this property for Lyapunov stability in Section~\ref{sec:Lyap_stability}.

\subsection{Basic encoding and decoding scheme}
\label{sec:basic_zooming_out}
We set the control input $u_k$ to be $u_k = 0$
until we get a state bound. 
For a given increasing sequence $\{E_k^y \geq 0:k \in \mathbb{Z}_+\}$,
define the binary function $Q_k:\mathbb{R}^{n_y} \to \{0,1\}$ by
\[
Q_k(y) := 
\begin{cases}
0 & \text{if $|y|_{\infty} \leq E_k^y$} \\
1 & \text{otherwise}.
\end{cases}
\]

For $s_0,s_1,\dots,s_{\chi}  \in \mathbb{Z}_+$ satisfying
$s_{0} < s_1 < \cdots < s_{\chi}$,
we define the generalized observability matrix $O\left( \{s_m\}_{m=0}^{\chi} \right)$ by
\begin{align}
\label{eq:Obs_matrix_DoS}
O\left( \{s_m\}_{m=0}^{\chi} \right) := 
\begin{bmatrix}
C \\ CA^{s_1-s_0} \\ \vdots \\ CA^{s_{\chi} - s_0}
\end{bmatrix}.
\end{align}

Assume that there exists $\{s_m\}_{m=0}^{\chi} \subset \mathbb{Z}_+$ with
$s_{0} < s_1 < \cdots < s_{\chi}$ such that
DoS attacks do not occur at times $ k =s_0,\dots,s_{\chi}$ and
the following two conditions hold:
\begin{description}
	\item[(C1)] $Q_{s_{m}}(y_{s_{m}})  = 0$ holds for every $m=0,\dots,\chi$;
	\item[(C2)] The matrix $O\left( \{s_m\}_{m=0}^{\chi} \right)$ 
	is full column rank.
\end{description}

Under the conditions (C1) and (C2),
we can obtain a state bound at $k= s_{\chi}+1$ as follows.
By the condition (C1),
the decoder on the controller side knows at time $k = s_{\chi}$
that 
\[
|y_{s_m}|_{\infty} \leq E_{s_m}^y \qquad \forall m = 0,\dots,\chi
\]
and hence
\[
\left|
\begin{bmatrix}
y_{s_0} \\
\vdots \\
y_{s_{\chi}}
\end{bmatrix}
\right|_{\infty}
\leq E_{s_{\chi}}^y.
\]
By the condition (C2),
\begin{equation}
\label{eq:x_0_reconstruction_time_varying}
x_{s_0} = O\left( \{s_m\}_{m=0}^{\chi} \right)^{\dagger}
\begin{bmatrix}
y_{s_0} \\ y_{s_1} \\ \vdots \\ y_{s_{\chi}}
\end{bmatrix}.
\end{equation}
Since $x_{s_{\chi}+1} = A^{s_{\chi}-s_0+1} x_{s_0}$, it follows that 
if the coders set 
a state bound $E_{s_{\chi}+1}$ at time $k=s_{\chi}+1$
to be
\begin{equation}
\label{eq:Ek1_eta_bound}
E_{s_{\chi} + 1} := 
\left\|A^{s_{\chi} - s_0+1}O\left( \{s_m\}_{m=0}^{\chi} \right)^{\dagger} \right\|_{\infty}  E_{s_{\chi}}^y,
\end{equation}
then $|x_{s_{\chi}+1}|_{\infty} \leq E_{s_{\chi+1}}$.

Next, we design a sequence $\{E_k^y:k \in \mathbb{Z}_+\}$ that satisfies
$Q_k(y_k) = 0$ for every sufficiently large $k$.
Fix a constant $\kappa >0$ and an initial value $E_0^x > 0$, and
define 
a sequence $\{E_k^x:k \in \mathbb{Z}_+\}$ by
\begin{equation}
\label{eq:E_zoomout_update}
E_{k+1}^x := (1+\kappa) \|A\|_{\infty} E_k^x.
\end{equation}
Since the growth rate of $E_k^x$ is larger than that of $|x_k|_{\infty}$, 
there exists $T \in \mathbb{Z}_+$ such that
\begin{equation}
\label{eq:y_bound_zo}
|y_k|_{\infty} = |Cx_k|_{\infty} \leq \|C\|_{\infty}E_k^x\qquad \forall k\geq T.
\end{equation}
If we set $E_k^y := \|C\|_{\infty}E_k^x$, then
\eqref{eq:y_bound_zo} yields
$Q_k(y_k) = 0$ for all $k \geq T$.
Thus, 
the condition (C1) is always satisfied if $s_0 \geq T$. 

In what follows,
we set the time origin to be $T \in \mathbb{Z}_+$ satisfying \eqref{eq:y_bound_zo}
for simplicity of notation.
Note that 
we can use the DoS conditions 
\eqref{eq:DOSduration} and \eqref{eq:DOSfrequency}  even after shifting the time origin, by
changing the constants $\Pi_d$ and $\Pi_f$ there to
$\Pi_d+\nu_dT$ and $\Pi_f+\nu_fT$, respectively.
\begin{assumption}[Capturing output from initial time]
	\label{assump:initial_time}
	For every $k \in \mathbb{Z}_+$, $Q_k(y_k) = 0$. 
\end{assumption}

In the case without DoS attacks  \cite{Liberzon2003}, 
if $(C,A)$ is observable, then 
we can obtain a state bound at time $k=\eta$, where
$\eta$ is the observability index, because
\begin{equation}
\label{eq:O_def}
O := 
\begin{bmatrix}
C \\
CA \\
\vdots \\
CA^{\eta-1}
\end{bmatrix}
\end{equation}
is full column rank.
However, the following example shows that
for every DoS duration bound $\nu_d \in (0,1]$, there exists
an observable system $(C,A)$ and a corresponding attack strategy such that 
the condition (C2) does not hold.
\begin{example}
	\label{ex:cycle}
	Let 
	\begin{align*}
	A = 
	\begin{bmatrix}
	0 & I_{n-1} \\
	1 & 0
	\end{bmatrix} \in \mathbb{R}^{n \times n},\quad
	C = 
	\begin{bmatrix}
	1 & 0 & \cdots & 0
	\end{bmatrix} \in \mathbb{R}^{1 \times n}.
	\end{align*}
	The system is observable. However, if
	the DoS attacks are periodically launched at times
	$k=0,n,2n,\dots$, then we cannot construct state bounds, by
	using even an infinite number of measurements. In fact,
	$O\left( \{s_m\}_{m=0}^{\chi} \right)$ is not full column rank for 
	every set of time-steps $\{s_m\}_{m=0}^{\chi} $ satisfying
	$\{s_m\}_{m=0}^{\chi} \cap \{\ell n:\ell \in \mathbb{Z}_+\} = \emptyset$.
\end{example}

In
the following subsections, we see that  the condition (C2) is satisfied 
under certain assumptions on DoS.

\subsection{Sufficient condition for (C2) to hold}
\subsubsection{Approach to exploit periodic property of eigenvalues}
In 
Example \ref{ex:cycle}, $A$
is a circulant matrix, and
the eigenvalues of $A$ are given by 
$
e^{i 2\pi \frac{j}{n}}
$
($ j=0,\dots,n-1$),
all of which are on the unit circle.
In this subsection, exploiting this property of the eigenvalues,
we provide a sufficient condition for the generalized observability matrix $O\left( \{s_m\}_{m=0}^{\chi} \right)$
to be full column rank.
Assumptions on the plant are given as follows:
\begin{assumption}[Periodicity of eigenvalues]
	\label{assump:diagonal}
	For every $j=1,\dots,Q$,
	let $\zeta_j \in \mathbb{N}$ be one or a prime number, $\lambda_j \in \mathbb{C}$ be nonzero, and
	$a_{j,1},\dots,a_{j,n_j} \in \mathbb{Z}$ satisfy \mbox{$a_{j,\ell_1} \not\equiv a_{j,\ell_2}$ (mod $\zeta_j$)} for all
	$\ell_1,\ell_2 = 1,\dots,n_j$ with $\ell_1 \not= \ell_2$.
	The matrix $A$ is similar to a diagonal matrix $\Lambda := {\rm diag}(\Lambda_1,\dots,\Lambda_Q)$,
	where 
	\[
	\Lambda_j := {\rm diag} \left(\lambda_j e^{i 2\pi \frac{a_{j,1}}{\zeta_j}},\dots, \lambda_j e^{i 2\pi \frac{a_{j,n_j}}{\zeta_j}} \right)
	\quad \forall j \in 1,\dots,Q.
	\]
	Moreover, $(\lambda_{j_1} / \lambda_{j_2})^k \not =1$ for every $k \in \mathbb{N}$ and for every 
	$j_1,j_2 = 1,\dots,Q$ with $j_1 \not=j_2$.
\end{assumption}

\begin{assumption}[Single-output  system]
	\label{assump:single_output}
	The plant is a single-output system, that is,
	$C \in \mathbb{R}^{1\times n_x}$.
\end{assumption}

The following theorem shows that 
if the assumptions above are satisfied and if
the DoS duration bound $\nu_d$ is sufficiently small, then
the condition (C2) holds in finite time.
\begin{theorem}
	\label{thm:prime}
	Suppose that Assumptions \ref{assump:duration}, 
	\ref{assump:observability}, 
	\ref{assump:initial_time}, \ref{assump:diagonal}, and \ref{assump:single_output} hold.
	Let $\zeta \in \mathbb{N}$ be the least common multiple of $\zeta_1,\dots,\zeta_Q$.
	If the DoS duration bound $\nu_d$ satisfies
	\begin{equation}
	\label{eq:duration_prime}
	\nu_d  < \frac{1}{\zeta}
	\end{equation}
	then $O\left( \{s_m\}_{m=0}^{\chi} \right)$ defined in \eqref{eq:Obs_matrix_DoS} 
	is full column rank
	by time $k=(\ell_e + 1)\zeta$, where $\ell_e \in \mathbb{Z}_+$ is the maximum integer satisfying
	\begin{equation}
	\label{eq:ell_cond_prime}
	\ell_e \leq \frac{\Pi_d + Z(A,C)}{1-\zeta\nu_d}
	\end{equation}
	for some constant $Z(A,C)\in \mathbb{Z}_+$ that depends only on $A,C$.
\end{theorem}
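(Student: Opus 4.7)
The plan is to reduce everything to an algebraic question about generalized Vandermonde matrices, exploit the periodic eigenvalue structure imposed by Assumption \ref{assump:diagonal}, and then count DoS-free time-steps residue class by residue class.

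First, using Assumption \ref{assump:diagonal} I would diagonalize $A = P \Lambda P^{-1}$ and write $\widetilde C := CP = [\tilde c_{j,\ell}]$. By Assumption \ref{assump:observability} together with Assumption \ref{assump:single_output} and the PBH test, every entry $\tilde c_{j,\ell}$ is nonzero. Writing $\mu_{j,\ell} := \lambda_j e^{i 2 \pi a_{j,\ell}/\zeta_j}$, the matrix $O\bigl(\{s_m\}_{m=0}^{\chi}\bigr)P$ has $(m,(j,\ell))$-entry equal to $\tilde c_{j,\ell}\,\mu_{j,\ell}^{\,s_m - s_0}$. Since $\tilde c_{j,\ell}\neq 0$ and $P$ is invertible, full column rank of $O\bigl(\{s_m\}_{m=0}^{\chi}\bigr)$ is equivalent to asking that the only weights $\{w_{j,\ell}\}$ satisfying
\begin{equation*}
\sum_{j=1}^Q\sum_{\ell=1}^{n_j} w_{j,\ell}\,\mu_{j,\ell}^{\,t_m} = 0\qquad \forall\, m=0,\dots,\chi,
\end{equation*}
with $t_m := s_m - s_0$, are the zero weights.

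Second, I would establish an algebraic criterion for this separation. The two assumptions on the spectrum pull in opposite directions. Across blocks: since $(\lambda_{j_1}/\lambda_{j_2})^k \neq 1$ for all $k\in\mathbb N$, the $Q$ values $\lambda_j^{\zeta}$ are pairwise distinct, so sampling at $Q$ time-steps in arithmetic progression of step $\zeta$ yields a Vandermonde in $\lambda_j^{\zeta}$ that separates the blocks. Within block $j$: the eigenvalues $\mu_{j,1},\dots,\mu_{j,n_j}$ differ only by $\zeta_j$-th roots of unity, so one can separate them by sampling at time-steps whose residues modulo $\zeta_j$ cover all the $a_{j,\ell}\,(\mathrm{mod}\,\zeta_j)$. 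Combining, I would show by a Kronecker-product style argument that $O(\{s_m\})$ is full column rank as soon as the non-DoS instants $\{s_m\}$, viewed modulo $\zeta$, cover every residue sufficiently many times, where ``sufficient'' depends only on $(A,C)$ (concretely, on $Q$ and $\max_j n_j$). This defines the integer $Z(A,C)\in\mathbb Z_+$: it is the number of non-DoS samples I need per residue class modulo $\zeta$ before the above system becomes injective.

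Third comes the DoS counting step. Partition $[0,(\ell_e+1)\zeta)$ into $\ell_e + 1$ consecutive windows of length $\zeta$; each window contributes exactly one time-step in each residue class modulo $\zeta$. By Assumption \ref{assump:duration}, the total number of DoS instants in $[0,(\ell_e+1)\zeta)$ is at most $\Pi_d + \nu_d(\ell_e+1)\zeta$. In the worst case all of them land in the same residue class, so every residue class still carries at least
\begin{equation*}
(\ell_e + 1) - \bigl(\Pi_d + \nu_d(\ell_e+1)\zeta\bigr) = (\ell_e+1)(1-\zeta\nu_d) - \Pi_d
\end{equation*}
non-DoS time-steps. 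The choice of $\ell_e$ in \eqref{eq:ell_cond_prime} is exactly calibrated so that this lower bound is at least $Z(A,C)$; combined with $\nu_d < 1/\zeta$ from \eqref{eq:duration_prime} (which makes the denominator positive), every residue class has the required number of non-DoS samples by time $(\ell_e+1)\zeta$, so the criterion from step two is met and $O(\{s_m\})$ is full column rank by that time.

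The main obstacle is the algebraic criterion in step two and, in particular, pinning down the constant $Z(A,C)$. Hitting every residue mod $\zeta$ is necessary (as Example \ref{ex:cycle} shows) but not on its own sufficient when some blocks have $n_j>1$, because then a single sample per residue only produces one linear equation against multiple block-$j$ unknowns. The cleanest route I foresee is to process block by block: for a fixed $j$, factor $\mu_{j,\ell}^{t} = \lambda_j^{t} \omega_{j}^{a_{j,\ell}t}$ with $\omega_j=e^{i2\pi/\zeta_j}$, so the $\omega_{j}^{a_{j,\ell}t_m}$ submatrix is a Vandermonde in $\zeta_j$-th roots of unity which is nonsingular as soon as the residues of $\{t_m\}$ modulo $\zeta_j$ hit $n_j$ distinct values among $\{a_{j,\ell}\}$; then absorb the cross-block separation via the $\lambda_j^{\zeta}$ Vandermonde above. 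This identifies $Z(A,C)$ purely in terms of the Jordan-like data $(Q, n_1,\dots,n_Q,\zeta_1,\dots,\zeta_Q)$ extracted from $A$ and $C$.
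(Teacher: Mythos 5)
Your skeleton matches the paper's: diagonalize via Assumption \ref{assump:diagonal}, reduce full column rank to injectivity of an exponential-sum system, separate within each block $j$ by the Evans--Isaacs generalized Vandermonde over $\zeta_j$-th roots of unity (the paper's Lemmas A.1--A.2), and convert the DoS duration bound into a guarantee that enough residue classes modulo $\zeta$ carry enough non-DoS samples (your window count in step three is essentially the paper's counting argument). The within-block mechanism and the counting are sound.

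The gap is exactly in the step you flag as the main obstacle: the cross-block separation. You propose to separate the blocks with ``a Vandermonde in $\lambda_j^{\zeta}$'' obtained from ``$Q$ time-steps in arithmetic progression of step $\zeta$.'' But the non-DoS instants lying in a fixed residue class $\alpha$ modulo $\zeta$ are of the form $\psi_1\zeta+\alpha,\dots,\psi_r\zeta+\alpha$ with \emph{arbitrary, scattered} $\psi_i$ --- the DoS counting gives you many of them, but never a consecutive run of $\psi$'s. A generalized Vandermonde $[\lambda_j^{\zeta\psi_i}]_{i,j}$ with distinct nodes but scattered exponents can be singular (e.g.\ nodes $1,-1$ and exponents $0,2$), so no Vandermonde argument closes this step, and consequently your $Z(A,C)$ is not actually defined by the construction you describe. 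The paper instead invokes a Skolem--Mahler--Lech-type finiteness result (Lemma 26 of \cite{Rohr2014}, restated as Lemma \ref{thm:multi_cyclic}): because no ratio $\lambda_{j_1}/\lambda_{j_2}$ is a root of unity, a nontrivial sum $\sum_j w_j\lambda_j^{k}$ vanishes for only finitely many $k\in\mathbb{Z}_+$; the number of such exceptional $k$ is what plays the role of $Z$. Once a residue class $\alpha$ carries more non-DoS samples than that exceptional count, some sample certifies $\sum_j \lambda_j^{\alpha}g_j(\alpha)\neq 0$ is impossible, forcing the periodic block contributions $g_j(\alpha)$ to vanish, after which your within-block Evans--Isaacs step kills each $v_j$. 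So your architecture is repairable, but only by importing this finiteness lemma in place of the cross-block Vandermonde; as written, the ``Kronecker-product style'' combination does not go through.
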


We prove this result, using the techniques developed in
\cite{Park2011, Rohr2014}; see the appendix for details.

\begin{remark}
	Apply Theorem \ref{thm:prime} to the case where
	$A$ is diagonalizable and has only positive real eigenvalues with multiplicity 1.
	Since the least common multiple $\zeta$ is one,
	the matrix $O\left( \{s_m\}_{m=0}^{\chi} \right)$ is full column rank
	in finite time for every $\nu_d \in [0,1)$ under Assumptions \ref{assump:duration}, 
	\ref{assump:observability}, and \ref{assump:single_output}.
	The same result follows also from Theorem 1 in \cite{Wang2011}.
\end{remark}

\begin{remark}
	Lemma \ref{lem:one_cycle} in the appendix shows that 
	if 
	$\Lambda$ consists of only one block, i.e.,
	$\Lambda = \Lambda_1$, then the conditions \eqref{eq:duration_prime} and 
	\eqref{eq:ell_cond_prime}
	can be replaced by less conservative conditions:
	\[
	\nu_d < \frac{\zeta_1 - n_1+1}{\zeta_1},\quad
	\ell_e \leq \frac{\Pi_d}{\zeta_1 - n_1+1 - \zeta_1 \nu_d}.
	\]
	\vspace{0pt}
\end{remark}

\subsubsection{Coders using only consecutive data}
If the decoder receives $\eta$ consecutive data, where $\eta$ is the observability index of $(C,A)$,
then we can obtain a state bound as in the case without DoS.
Here we design coders that 
construct state bounds
from $\eta$ consecutive data.
In this case, the time-steps $s_0,\dots,s_{\chi}$ in Section \ref{sec:basic_zooming_out}
are given by $\chi = \eta-1$ and
$s_m = s_0 + m$ for all $m=1,\dots,\chi$. 
The advantage of this approach over Theorem~\ref{thm:prime} 
is that it is also applicable to multi-output systems.
\begin{theorem}
	\label{thm:DoS_frequency_duration}
	Let Assumptions \ref{assump:duration}, \ref{assump:freq}, \ref{assump:observability}, and \ref{assump:initial_time} hold.
	If the DoS duration and frequency bounds $\nu_d$ and $\nu_f$ satisfy
	\begin{equation}
	\label{eq:duration_frequency_cond}
	\nu_d  < 1 - (\eta-1) \nu_f,
	\end{equation}
	then the decoder receives $\eta$ consecutive data
	by time $k=k_e+1$, where $k_e \in \mathbb{Z}_+$ is
	the maximum integer satisfying
	\begin{equation}
	\label{eq:T_cond}
	k_e \leq \frac{\Pi_d + (\Pi_f+1)(\eta-1)}{1 - \nu_d + (\eta-1) \nu_f}.
	\end{equation}
	\vspace{0pt}
\end{theorem}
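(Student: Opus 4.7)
The plan is to prove the statement by contraposition: I will show that if the decoder has not received $\eta$ consecutive DoS-free measurements by some time $k$, then $k$ must satisfy the upper bound \eqref{eq:T_cond}, so that the first time at which $\eta$ consecutive data arrive occurs no later than $k_e+1$.

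First I would set up the counting. Fix $k \in \mathbb{N}$ and suppose, for contradiction, that on $[0,k)$ there is no run of $\eta$ consecutive time-steps free of DoS. Let $p := \Phi_f(k)$ denote the number of maximal consecutive DoS intervals in $[0,k)$ and let $d := \Phi_d(k)$ denote the total DoS duration on $[0,k)$. These $p$ DoS intervals partition $[0,k)$ into at most $p+1$ maximal DoS-free intervals (the gaps before the first, between consecutive intervals, and after the last; some of these may be empty if the endpoints coincide with DoS). Under the standing assumption, each such DoS-free interval has length at most $\eta - 1$. Summing over the at most $p+1$ DoS-free pieces yields
\begin{equation*}
k - d \;\leq\; (p+1)(\eta-1).
\end{equation*}

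Next I would substitute the DoS bounds from Assumptions \ref{assump:duration} and \ref{assump:freq}, namely $d \leq \Pi_d + \nu_d k$ and $p \leq \Pi_f + \nu_f k$, to obtain
\begin{equation*}
k - (\Pi_d + \nu_d k) \;\leq\; \bigl(\Pi_f + \nu_f k + 1\bigr)(\eta-1),
\end{equation*}
which rearranges to
\begin{equation*}
\bigl(1 - \nu_d - (\eta-1)\nu_f\bigr) k \;\leq\; \Pi_d + (\Pi_f+1)(\eta-1).
\end{equation*}
By hypothesis \eqref{eq:duration_frequency_cond}, the coefficient $1 - \nu_d - (\eta-1)\nu_f$ is strictly positive, so dividing gives the claimed upper bound on $k$. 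Consequently, for every $k$ exceeding the right-hand side of \eqref{eq:T_cond}, at least one DoS-free run of length $\eta$ must have occurred on $[0,k)$, and in particular by time $k_e+1$ the decoder has received $\eta$ consecutive measurements. Combined with Assumption \ref{assump:initial_time} and observability of $(C,A)$, these consecutive outputs let the decoder construct a state bound as in \eqref{eq:x_0_reconstruction_time_varying}--\eqref{eq:Ek1_eta_bound} with $\chi = \eta-1$ and $s_m = s_0+m$.

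The core of the argument is purely combinatorial, so the only delicate point is the bookkeeping around the endpoints of $[0,k)$: one must verify that the number of maximal DoS-free intervals is indeed bounded by $p+1$ rather than $p+2$, regardless of whether time $0$ or time $k-1$ happens to coincide with a DoS step. This is handled by defining the maximal DoS-free intervals as the connected components of $[0,k) \setminus (\text{DoS set})$, whose number never exceeds $p+1$ since each boundary between adjacent components is produced by (at least) one DoS interval.
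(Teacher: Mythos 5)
Your proposal is correct and follows essentially the same route as the paper: a proof by contradiction in which the absence of an $\eta$-long DoS-free run forces $k - \Phi_d(k) \leq (\Phi_f(k)+1)(\eta-1)$, after which substituting Assumptions \ref{assump:duration} and \ref{assump:freq} and using \eqref{eq:duration_frequency_cond} bounds $k$ by the right-hand side of \eqref{eq:T_cond}. The paper obtains the same gap-counting inequality by an explicit induction over the DoS intervals (with the conventions $k_0=0$, $\tau_0=0$, $k_{p+1}=k_e$), which is just your "at most $p+1$ maximal DoS-free intervals" bookkeeping written out telescopically.
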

\begin{proof}
	Choose $k_e \in \mathbb{N}$ arbitrarily.
	Let DoS attacks occur at 
	\[
	k = k_m,\dots,k_m+\tau_m-1 \qquad \forall m=1,\dots,p
	\]
	on the interval $[0,k_e)$, where $k_m \in \mathbb{Z}_+$, $\tau_m \in \mathbb{N}$ for
	every $m=1,\dots,p$ and
	\begin{gather*}
	k_m+\tau_m < k_{m+1}\qquad \forall m=1,\dots,p-1.
	\end{gather*}
	In other words, $k_m$ and $\tau_m$ denote 
	the beginning time and the length of $m$th DoS interval.
	By definition, 
	$\sum_{m=1}^p \tau_m = \Phi_d(k_e)$ and $p = \Phi_f(k_e)$.
	We also define $k_0 := 0$, $\tau_0 := 0$, and $k_{p+1} := k_e$.

	Assume, to reach a contradiction, that 
	\begin{equation}
	\label{eq:contradiction_assump}
	k_{m+1} - k_m - \tau_m \leq \eta -1\qquad
	\forall m=0,\dots,p,
	\end{equation}
	which implies that 
	the decoder  receives at most 
	$\eta-1$ consecutive data on the interval $[0,k_e)$.
	Applying induction to \eqref{eq:contradiction_assump}, 
	we obtain
	\begin{align*}
	k_e = k_{p+1} 
	&\leq k_p + \tau_p + \eta - 1 \\
	&\leq \cdots 
	\leq \sum_{m=0}^p \tau_m + (p+1)(\eta-1).
	\end{align*}
	From Assumptions \ref{assump:duration} and \ref{assump:freq},
	it follows that
	\[
	k_e \leq (\Pi_d + \nu_d k_e) + (\Pi_f + \nu_f k_e + 1) (\eta-1),
	\]
	and hence
	\begin{equation}
	\label{eqT_cond}
	(1 - \nu_d - (\eta-1) \nu_f)k_e \leq \Pi_d + (\Pi_f +1)(\eta-1).
	\end{equation}
	Since $k_e \in \mathbb{N}$ was arbitrary, 
	the condition \eqref{eq:duration_frequency_cond}
	leads to a contradiction for a sufficiently large $k_e > 0$.
	Moreover, 
	if \eqref{eq:duration_frequency_cond} holds,
	then $k_e$ must satisfy the inequality \eqref{eq:T_cond}.
	In other words, if $k_e$ is the maximum integer satisfying the inequality \eqref{eq:T_cond},
	then $\eta$ consecutive data are transmitted successfully by time $k=k_e+1$.
	This completes the proof.
\end{proof}

In the case $\nu_d < \nu_f$, 
namely, when DoS attacks are frequently launched,
the following proposition is also useful.
\begin{proposition}
	\label{thm:DoS_duration}
	Suppose that Assumptions \ref{assump:duration}, \ref{assump:observability}, and \ref{assump:initial_time} hold.
	If the DoS duration bound $\nu_d$  satisfies
	\begin{equation}
	\label{eq:duration_cond}
	\nu_d < \frac{1}{\eta},
	\end{equation}
	then the decoder receives $\eta$ consecutive data by time $k=k_e+1$, where $k_e \in \mathbb{Z}_+$ is
	the maximum integer satisfying
	\begin{equation}
	\label{eq:T_cond_duration_only}
	k_e \leq \frac{(\Pi_d +1)\eta - 1}{1 - \eta \nu_d}.
	\end{equation}
	\vspace{0pt}
\end{proposition}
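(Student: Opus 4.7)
The plan is to mirror the contradiction argument used in the proof of Theorem \ref{thm:DoS_frequency_duration}, but to sidestep the DoS frequency assumption by exploiting the trivial fact that every DoS interval has length at least one. Concretely, fix an arbitrary $k_e \in \mathbb{N}$ and let $k_1,\dots,k_p$ and $\tau_1,\dots,\tau_p$ denote the beginning times and lengths of the consecutive DoS intervals on $[0,k_e)$, with the convention $k_0 := 0$, $\tau_0 := 0$, and $k_{p+1} := k_e$, exactly as in the proof of Theorem \ref{thm:DoS_frequency_duration}. Then $\sum_{m=1}^p \tau_m = \Phi_d(k_e)$ and $p \leq \Phi_d(k_e)$, since each $\tau_m \geq 1$.

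Next I would suppose, for a contradiction, that the decoder never receives $\eta$ consecutive data on $[0,k_e)$, which means
\[
k_{m+1} - k_m - \tau_m \leq \eta - 1 \qquad \forall m = 0,\dots,p.
\]
Summing over $m$ (by induction as in Theorem \ref{thm:DoS_frequency_duration}) yields
\[
k_e \leq \sum_{m=0}^p \tau_m + (p+1)(\eta - 1) = \Phi_d(k_e) + (p+1)(\eta - 1).
\]
Now I substitute the crude bound $p \leq \Phi_d(k_e)$ to obtain
\[
k_e \leq \eta\, \Phi_d(k_e) + (\eta - 1),
\]
and then invoke Assumption \ref{assump:duration} to get $\Phi_d(k_e) \leq \Pi_d + \nu_d k_e$, which gives
\[
(1 - \eta \nu_d)\, k_e \leq (\Pi_d + 1)\eta - 1.
\]

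Finally, under \eqref{eq:duration_cond} the coefficient $1 - \eta \nu_d$ is strictly positive, so dividing yields the bound \eqref{eq:T_cond_duration_only} on $k_e$. Hence any $k_e$ exceeding $((\Pi_d+1)\eta - 1)/(1 - \eta\nu_d)$ cannot satisfy the contradiction hypothesis, so by time $k = k_e + 1$ the decoder must have received $\eta$ consecutive successful transmissions. The proof is essentially an arithmetic exercise once one recognizes the correct replacement for the frequency inequality, so the only conceptual step is the observation that $p \leq \Phi_d(k_e)$; I do not expect any serious obstacle beyond bookkeeping.
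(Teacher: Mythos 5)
Your proof is correct and yields exactly the inequality $(1-\eta\nu_d)k_e \leq (\Pi_d+1)\eta-1$ that the paper obtains. The only difference from the paper's argument is the bookkeeping: the paper does not reuse the block decomposition of Theorem \ref{thm:DoS_frequency_duration} at all, but instead indexes the individual DoS time-steps $t_1<\cdots<t_p$ on $[0,k_e)$ (so that $p=\Phi_d(k_e)$ exactly) and phrases the contradiction hypothesis as $t_1\leq \eta-1$ and $t_{m+1}-t_m\leq\eta$, which gives $k_e\leq(p+1)\eta-1$ directly. Your route keeps the interval-based decomposition and then inserts the bound $p\leq\Phi_d(k_e)$ (each block has length at least one); after the substitution the two intermediate inequalities coincide, $k_e\leq\eta\,\Phi_d(k_e)+\eta-1$. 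Neither version is stronger -- both discard the same slack when a DoS block has length greater than one -- so the choice is purely a matter of whether one prefers to recycle the setup of the preceding theorem (your way) or to restart with a cleaner indexing (the paper's way). One small point to make explicit if you write this up: the substitution $p\leq\Phi_d(k_e)$ into $(p+1)(\eta-1)$ preserves the inequality direction only because $\eta-1\geq 0$, which of course holds for the observability index.
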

\begin{proof}
	Choose $k_e \in \mathbb{N}$ arbitrarily and 
	let DoS attacks occur at 
	\[
	k = t_m \in \mathbb{Z}_+\qquad \forall m=1,\dots,p.
	\]
	on the interval $[0,k_e)$, where
	$
	t_m < t_{m+1}
	$	
	for all $m=1,\dots,p$.
	Then $p = \Phi_d(k_e)$ by definition.
	
	Define $t_{p+1} := k_e$.
	Assume, to get a contradiction, that 
	\[
	t_1 \leq \eta - 1,\quad
	t_{m+1} - t_m \leq \eta \quad \forall m=1,\dots,p-1.
	\]
	Then we obtain
	\begin{align*}
	k_e = t_{p+1} \leq t_p + \eta \leq \cdots \leq (p+1)\eta - 1.
	\end{align*}
	By Assumptions \ref{assump:duration},
	$
	k_e \leq (\Pi_d+\nu_dk_e + 1) \eta - 1,
	$
	and hence
	\[
	(1-\eta\nu_d)k_e \leq (\Pi_d+1) \eta - 1.
	\]
	The rest of the proof
	follows the same
	lines as that of Theorem \ref{thm:DoS_frequency_duration}.
	Therefore, we shall omit it.
\end{proof}

\begin{remark}
	Suppose that
	the encoder at the plant side redundantly sends the set of output data, 
	$Q_{k-\eta+1}(y_{k-\eta+1}),\dots, Q_k(y_{k})$, at every time 
	$k \geq \eta-1$.
	The decoder can obtain an initial state bound  if a data set 
	whose values are all zero
	is successfully transmitted.
	Thus, the zooming-out procedure by this redundant scheme finishes in finite time
	for all $\nu_d \in [0,1)$.
\end{remark}

\subsubsection{With bounded lengths in DoS periods}
We here place the following assumption:
\begin{assumption}[Bounded length of DoS period]
	\label{assump:consecutiveDOS}
	For a given $\varpi \in \mathbb{N}$,
	at most $\varpi$
	consecutive DoS attacks occur.
\end{assumption}

If the DoS duration condition \eqref{eq:stDOSduration} holds,
then $\varpi \in \mathbb{N}$ has to satisfy
$
\varpi \leq \Pi_d + \nu_d \varpi.
$
Therefore, 
$
\varpi \leq \Pi_d/(1-\nu_d).
$

For packet losses including DoS attacks in Assumption \ref{assump:consecutiveDOS},
the earlier study \cite{Jungers2018} shows that 
we can check in finite time whether or not there exists $k \in \mathbb{Z}_+$ such that
the matrix $O_{\sigma}(k) $ is full column rank, where
the binary function $\sigma:\mathbb{Z_+} \to \{0,1\}$ is defined by
\[
\sigma(k) :=
\begin{cases}
0 & \text{if packet loss occurs at $k$} \\
1 & \text{if packet loss does not occur at $k$},
\end{cases}
\]
and
\[
O_{\sigma}(k) :=
\begin{bmatrix}
\sigma(0) C \\
\sigma(1) CA \\
\vdots \\
\sigma(k-1)CA^{k-1}
\end{bmatrix}.
\]
Moreover, if such $k$ (the time when $O_{\sigma}(k) $ is full column rank) exists, then 
$k$ is upper-bounded by
a certain value $k_e \in \mathbb{Z}_+$ 
that depends only on $\varpi$ and $(C,A)$; see Proposition 1, Theorem 2, and Remark 2 in \cite{Jungers2018}.
When we 
regard $s_0 \in \mathbb{Z}_+$ as $s_0 = 0$ in \eqref{eq:Obs_matrix_DoS},
$O_{\sigma}(k)$ is full column rank if and only if $O\left( \{s_m\}_{m=0}^{\chi} \right)$
is full column rank.
Therefore,
we can immediately apply the result in \cite{Jungers2018}. 
Thus, we can find in finite time whether or not
a full column rank $O\left( \{s_m\}_{m=0}^{\chi} \right)$ exists. In addition, if it exists, then
the decoder can construct a state bound by a certain time that depends on $\varpi$ and $(C,A)$.

\subsection{Lyapunov stability}
\label{sec:Lyap_stability}
Combining the encoding schemes in Section~\ref{sec:exp_conv} and this section,
we achieve Lyapunov stability.
\begin{definition}[Lyapunov stability]
	The feedback system in Section \ref{sec:closed_loop}
	achieves Lyapunov stability if
	for every $\varepsilon > 0$, there exists $\delta >0$ such that 
	\begin{equation}
	\label{eq:Lyap_cond}
	|x_0|_{\infty} < \delta \quad \Rightarrow \quad 
	|x_k|_{\infty},~|\hat x_k|_{\infty} < \varepsilon \quad \forall k \in \mathbb{Z}_+.
	\end{equation}
	\vspace{-8pt}
\end{definition}

In the following theorem, we use
Theorems \ref{thm:state_conv} and \ref{thm:DoS_frequency_duration}, but
similar results can be obtained from other combinations such as
Corollary \ref{coro:simle_case} and Theorem \ref{thm:prime}.
\begin{theorem}
	Suppose that Assumptions \ref{assump:duration}, \ref{assump:freq}, \ref{assump:stabilizability_detectability},
	\ref{assump:observability}, and \ref{assump:odd} hold.
	If the quantization level $N$ and the DoS duration and frequency bounds $\nu_d$, $\nu_f$ satisfy
	\eqref{eq:with_frequency_Nnu} and 
	\eqref{eq:duration_frequency_cond}, 
	then the closed-loop system achieves Lyapunov stability under
	the encoding scheme in Sections~\ref{sec:basic_zooming_in}, 
	\ref{sec:Duration_Frequency}, and \ref{sec:basic_zooming_out}. 
\end{theorem}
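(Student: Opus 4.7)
The plan is to decompose the closed-loop trajectory into the zooming-out phase (Section~\ref{sec:basic_zooming_out}, during which $u_k=0$ and the coders search for a valid initial state bound) and the subsequent zooming-in phase (Sections~\ref{sec:basic_zooming_in}--\ref{sec:Duration_Frequency}, during which the observer-based controller is engaged). The aim is to show that, by a careful tuning of the zooming-out parameter $E_0^x$ and of $\delta$, the union of both phases can be kept inside the $\varepsilon$-tube required by \eqref{eq:Lyap_cond}.

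Given $\varepsilon > 0$, I would first invoke Theorem~\ref{thm:DoS_frequency_duration} to secure a time $k_e+1$, depending only on $\Pi_d, \Pi_f, \nu_d, \nu_f$, and $\eta$ and \emph{independent of $x_0$}, by which $\eta$ consecutive measurements have been delivered once Assumption~\ref{assump:initial_time} holds. Inspecting \eqref{eq:Ek1_eta_bound} together with the update \eqref{eq:E_zoomout_update} yields a bound of the form $E_{s_\chi+1}\le \Psi E_0^x$, where $\Psi>0$ depends only on $A$, $C$, $\eta$, $\kappa$, $k_e$. The plan is then to pick $E_0^x$ so small that $\Omega\Psi E_0^x \le \varepsilon$, where $\Omega$ is the constant furnished by Theorem~\ref{thm:state_conv}, and to set $\delta := \min\{E_0^x,\ \varepsilon/\|A\|_\infty^{k_e+1}\}$.

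Assuming $|x_0|_\infty < \delta$, the next step is to verify Assumption~\ref{assump:initial_time} on the zooming-out interval. Since $\delta \le E_0^x$ and $u_k=0$ throughout that phase, one has $|x_k|_\infty \le \|A\|_\infty^k E_0^x \le E_k^x$, whence $|y_k|_\infty\le \|C\|_\infty E_k^x = E_k^y$ and $Q_k(y_k)=0$ for all such $k$. Theorem~\ref{thm:DoS_frequency_duration} then guarantees that the zooming-out finishes at some $s_\chi+1\le k_e+1$ with $|x_{s_\chi+1}|_\infty \le E_{s_\chi+1}$. On $[0,s_\chi+1]$ one has $|x_k|_\infty \le \|A\|_\infty^{k_e+1}\delta \le \varepsilon$, while $\hat x_k=0$ because the observer receives no innovation so long as $u_k=0$ and $\hat x_0=0$; here Assumption~\ref{assump:odd} is what guarantees $q_k=0$ in the center-at-origin quantizer so that no spurious innovation is injected even if the zooming-in encoder is read during this phase. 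At the switchover time $s_\chi+1$, the zooming-in scheme is initialized with $E_{R,s_\chi+1}=\|R\|_\infty E_{s_\chi+1}$, which is a valid bound on $|e_{R,s_\chi+1}|_\infty=|Rx_{s_\chi+1}|_\infty$ because $\hat x_{s_\chi+1}=0$.

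The final step is to apply Theorem~\ref{thm:state_conv} with the shifted time origin $s_\chi+1$; the DoS bounds \eqref{eq:DOSduration} and \eqref{eq:DOSfrequency} retain the same form after this shift, only the constants $\Pi_d,\Pi_f$ are enlarged. This yields $|x_k|_\infty,\ |\hat x_k|_\infty \le \Omega E_{s_\chi+1}\gamma^{k-(s_\chi+1)}\le \Omega \Psi E_0^x \le \varepsilon$ for every $k\ge s_\chi+1$, which combined with the zooming-out estimate establishes \eqref{eq:Lyap_cond}. The hard part will be managing the non-tightness of $E_{s_\chi+1}$: because the zooming-out procedure relies on a monotonically growing bounding sequence $E_k^x$ rather than on the true state, the returned bound is \emph{not} proportional to $|x_0|_\infty$, so one cannot simply choose $\delta$ small and be done. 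Shrinking $E_0^x$ together with $\varepsilon$ is the necessary compensation, and it is legitimate precisely because the completion time $k_e+1$ from Theorem~\ref{thm:DoS_frequency_duration} is uniform in the initial condition.
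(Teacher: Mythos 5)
There is a genuine gap, and it sits exactly where you flag the ``hard part.'' Your resolution --- shrinking the zooming-out parameter $E_0^x$ together with $\varepsilon$ so that $\Omega\Psi E_0^x\le\varepsilon$ --- makes the encoder itself depend on $\varepsilon$. But Lyapunov stability is a property of one fixed closed-loop system: the scheme of Section~\ref{sec:basic_zooming_out} fixes $E_0^x$ and $\kappa$ once, and the definition then quantifies ``for every $\varepsilon>0$ there exists $\delta>0$'' for that fixed system. What you prove is the weaker statement that for each $\varepsilon$ there is a \emph{configuration} of the coder and a $\delta$ that keep the trajectory in the $\varepsilon$-tube; no single configuration is shown to work for all $\varepsilon$. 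Since, as you correctly observe, the returned bound $E_{s_\chi+1}\approx\Psi E_0^x$ is \emph{not} proportional to $|x_0|_\infty$, with $E_0^x$ fixed your argument gives no control of $|x_k|_\infty$ on the interval between the switchover time and the time at which the envelope $\Omega\,\overline{E}\,\gamma^{k-T_1}$ has decayed below $\varepsilon$.

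The missing idea is how the paper bridges precisely that interval with $E_0^x$ held fixed. Let $T_2\ge\overline{T}_1$ be chosen (independently of $\delta$) so that $\Omega\gamma^{T_2-\overline{T}_1}\overline{E}<\varepsilon$; the exponential convergence of Theorem~\ref{thm:state_conv} then covers all $k\ge T_2$. For $k\in[T_1,T_2]$ one instead tracks the \emph{true} state rather than the error bound: the zooming-in error bound admits a lower bound of the form $E_{R,k}\ge\Upsilon\|C\|_{\infty}\|R\|_{\infty}\theta^{T_2}E_0^x$ on that interval (a quantity determined by the fixed scheme), so if $\delta$ satisfies \eqref{eq:delta_cond2} then $|y_k|_\infty\le\|C\|_\infty\|A\|_\infty^{T_2}\delta$ falls inside the central cell of the quantizer centered at $\hat y_k=0$; by Assumption~\ref{assump:odd} this forces $q_k=0$, hence no innovation, $\hat x_k=0$, $u_k=0$, and $|x_k|_\infty\le\|A\|_\infty^{T_2}\delta<\varepsilon$ by \eqref{eq:delta_cond3}. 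This is where the odd quantization level does its real work --- not merely during the zooming-out phase, as in your sketch, but throughout $[T_1,T_2]$ --- and it lets all the $\varepsilon$-dependence be absorbed into $\delta$ alone, as the definition requires. With this replacement for your ``shrink $E_0^x$'' step, the rest of your decomposition (uniform completion time from Theorem~\ref{thm:DoS_frequency_duration}, validity of Assumption~\ref{assump:initial_time} for $\delta<E_0^x$, initialization $E_{R,T_1}=\|R\|_\infty E_{T_1}$ with $\hat x_{T_1}=0$) matches the paper's proof.
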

\begin{proof}
	Suppose that $\delta>0$ satisfies 
	\begin{equation}
	\label{eq:delta_cond1}
	\delta < E_0^x.
	\end{equation} 
	Then $Q_k(y_k) = 0$ for every time $k$
	during the zooming-out procedure.
	Let a state bound be obtained at time $k=T_1$,
	namely, let $s_\chi+1$ in Section \ref{sec:basic_zooming_out} be equal to $T_1$.
	By Theorem \ref{thm:DoS_frequency_duration},
	$T_1$ has a certain upper bound $\overline T_1 \in \mathbb{Z}_+$. Hence,
	for the error bound $E_{T_1}$ defined as in Section \ref{sec:basic_zooming_out},
	there exists $\overline E >0$ such that $E_{T_1} \leq \overline E$.
	
	Theorem \ref{thm:state_conv} shows that,
	in the zooming-in stage, there exist $\Omega \geq 1$ and $\gamma \in (0,1)$ such that 
	$x_k$ and $\hat x_k$ satisfy
	\[
	|x_k|_{\infty},~|\hat x_k|_{\infty} \leq \Omega \gamma^{k-T_1} E_{T_1}\qquad \forall k \geq T_1.
	\]
	We set an integer $T_2 \geq \overline T_1$ so that
	\begin{equation}
	\label{eq:T2_larger}
	\Omega \gamma^{T_2 - \overline T_1} \overline E< \varepsilon.
	\end{equation}
	Then
	$
	|x_k|_{\infty},~|\hat x_k|_{\infty}  < \varepsilon
	$
	for every $k \geq T_2$.
	
	Let us next show that 
	\begin{equation}
	\label{eq:T2_smaller}
	|x_k|_{\infty},~|\hat x_k|_{\infty}   < \varepsilon \qquad \forall k \leq T_2.
	\end{equation}
	Define 
	\begin{align*}
	\Upsilon :=\min &\big\{
	\big\|A^{s_{\chi} - s_0 + 1}  O\left( \{s_m\}_{m=0}^{p} \right)^{\dagger} \big\|_{\infty}:
	O\left( \{s_m\}_{m=0}^{\chi}  \right)\text{~is } \\
	&\hspace{-5pt}\quad \text{full column rank and~} 
	\{s_m\}_{m=0}^{\chi} \subset \{0,\dots,\overline{T}_1\} \big\}.
	\end{align*}
	By the definition of $E_{T_1}$ in the zooming-out procedure and 
	the update rule of $E_{R,k}$ in the zooming-in procedure, 
	we obtain 
	\[ E_{R,k} \geq \Upsilon \|C\|_{\infty}  \cdot \|R\|_{\infty} \theta^{T_2} E_0^x\qquad \forall k \in [T_1, T_2].
	\]
	For each $k \in (T_1, T_2]$, 
	if $q_{\ell} = 0$ for every $\ell \in [T_1, k)$, then $u_k = 0$, and hence
	$
	|y_{k}|_{\infty} \leq \|C\|_{\infty} \cdot \|A\|_{\infty}^{T_2} \delta.
	$
	Moreover, in such a case, if 
	\[
	|y_k - \hat y_k|_{\infty} = 
	|y_k|_{\infty} \leq \frac{\|CR^{-1}\|_{\infty}E_{R,k}}{N},
	\]
	then $q_k = 0$,
	because the number of quantization levels $N$ is odd.
	Therefore, if $\delta>0$ satisfies
	\begin{equation}
	\label{eq:delta_cond2}
	\|A\|^{T_2}\delta < \frac{\Upsilon \|CR^{-1}\|_{\infty}\cdot  \|R\|_{\infty}\theta^{T_2}E_0^x}{N},
	\end{equation}
	then $q_k = 0$ for every $k \in [T_1, T_2]$.
	Hence $u_k = 0$ for every $k \leq T_2$.
	Thus, if $\delta>0$ additionally satisfies
	\begin{equation}
	\label{eq:delta_cond3}
	\|A\|^{T_2}\delta < \varepsilon,
	\end{equation}
	then \eqref{eq:T2_smaller} holds.
	
	In summary, 
	if $\delta > 0$ satisfies \eqref{eq:delta_cond1}, \eqref{eq:delta_cond2}, 
	and \eqref{eq:delta_cond3}, then 
	\eqref{eq:Lyap_cond} holds. Thus,
	Lyapunov stability is achieved.
\end{proof}

\section{Numerical Examples}
\subsection{Plant and controller}
A linearized model of the unstable batch reactor 
studied in \cite{Rosenbrock1972} is given by
$\dot x(t) = A_cx(t) + B_cu(t)$ and $
y(t) = C_cx(t),
$
where
\begin{align*}
A_c &:= \begin{bmatrix}
1.38 & -0.2077 & 6.715 & -5.676 \\
-0.5814 & -4.29 & 0 & 0.675 \\
1.067 & 4.273 & -6.654 & 5.893 \\
0.048 & 4.273 & -1.343 & -2.104
\end{bmatrix} \\
B_c &:= 
\begin{bmatrix}
0 & 0 \\
5.679 & 0 \\
1.136 & -3.146 \\
1.136 & 0
\end{bmatrix},\quad 
C_c := 
\begin{bmatrix}
1 & 0 & 1 & -1 \\
0 & 1 & 0 & 0
\end{bmatrix}.
\end{align*}
Here we discretize this plant with the sampling period $h = 0.2$.
We use the feedback gain $K$ that is the
linear quadratic regulator whose state weighting matrix and
input weighting matrix are the identity matrices $I_4$ and $I_2$, respectively.
The observer gain $L$ is given by 
the gain of the steady-state Kalman filter whose
covariances of the process noise and measurement noise are $I_4$ and $0.1\times I_2$, respectively.

\subsection{Relationship between quantization level and DoS duration and frequency}
By Corollary \ref{coro:simle_case}, we obtain a relationship between the quantization level $N$
and the DoS duration bound $\nu_d$ for the state convergence.
Here we choose the matrix $R$  so that 
\begin{equation}
\label{eq:R_choice_ex1}
\|R(A-LC)R^{-1}\|_{\infty} = \varrho(A-LC).
\end{equation}
Each circle in Fig.~\ref{fig:Nnud} illustrates the minimum
integer $N$ satisfying \eqref{eq:DoS_cond_coro} in 
Corollary \ref{coro:simle_case}.
This corollary shows that as
the quantization level $N$ increases to infinity, 
the DoS duration bound $\nu_d$ goes to 
\[
\frac{-  \log \|R(A-LC)R^{-1}\|_{\infty}}{\log \|RAR^{-1}\|_{\infty} -  \log \|R(A-LC)R^{-1}\|_{\infty}}
\approx 0.1405.
\]

\begin{figure}[tb]
	\centering
	\includegraphics[width = 7cm]{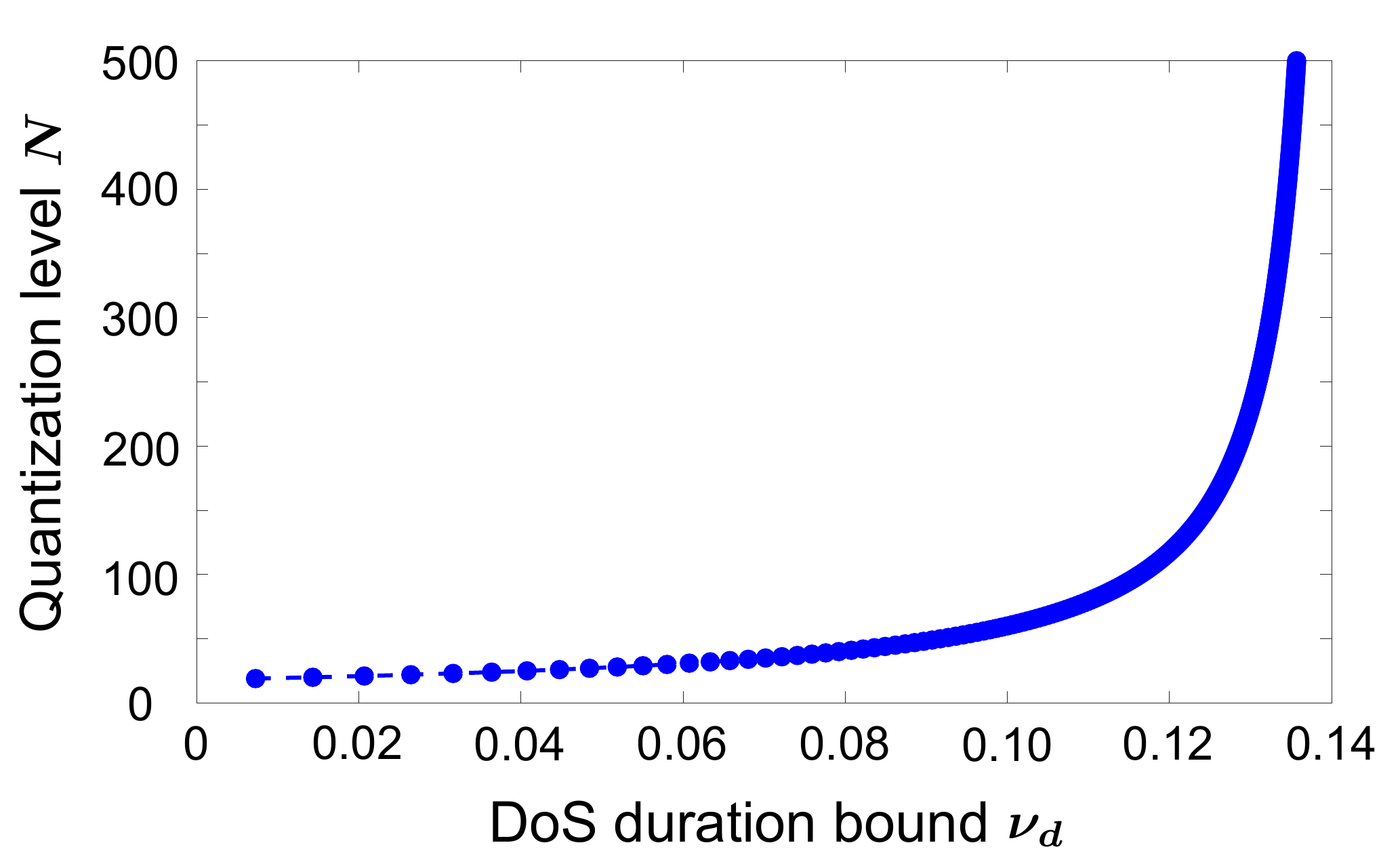}
	\caption{Relationship between quantization level $N$ and DoS duration bound $\nu_d$ under
		encoding scheme with center at output estimate.}
	\label{fig:Nnud}
	\vspace{-8pt}
\end{figure}

Let us next see a relationship between
the quantization level $N$ and the DoS duration and frequency bounds $\nu_d$, $\nu_f$.
In Theorem \ref{thm:state_conv},
we choose the matrix $R$  so that 
\begin{equation}
\label{eq:R_choice_ex2}
\|RAR^{-1}\|_{\infty} = \varrho(A).
\end{equation}
The surface in Fig.~\ref{fig:3d_plot} depicts 
the minimum integer satisfying \eqref{eq:DoS_cond} in 
Theorem~\ref{thm:state_conv}
for given DoS duration and frequency bounds.
Using Theorem~\ref{thm:state_conv}, we find that  as 
the quantization level $N$ goes to infinity,
the DoS duration and frequency bounds $\nu_d, \nu_f$ get close to the line
\[
\nu_d = 
\frac{-\nu_f  \log M_0 - \log \rho}{\log \|RAR^{-1}\|_{\infty} - \log \rho}
\approx -1.9080\nu_f + 0.3042.
\]
This can be also observed in Fig.~\ref{fig:cont_plot}.

\begin{figure}
	\centering
	\subcaptionbox{ 
		3D plot.
		\label{fig:3d_plot}}
	{\includegraphics[width = 6.7cm]{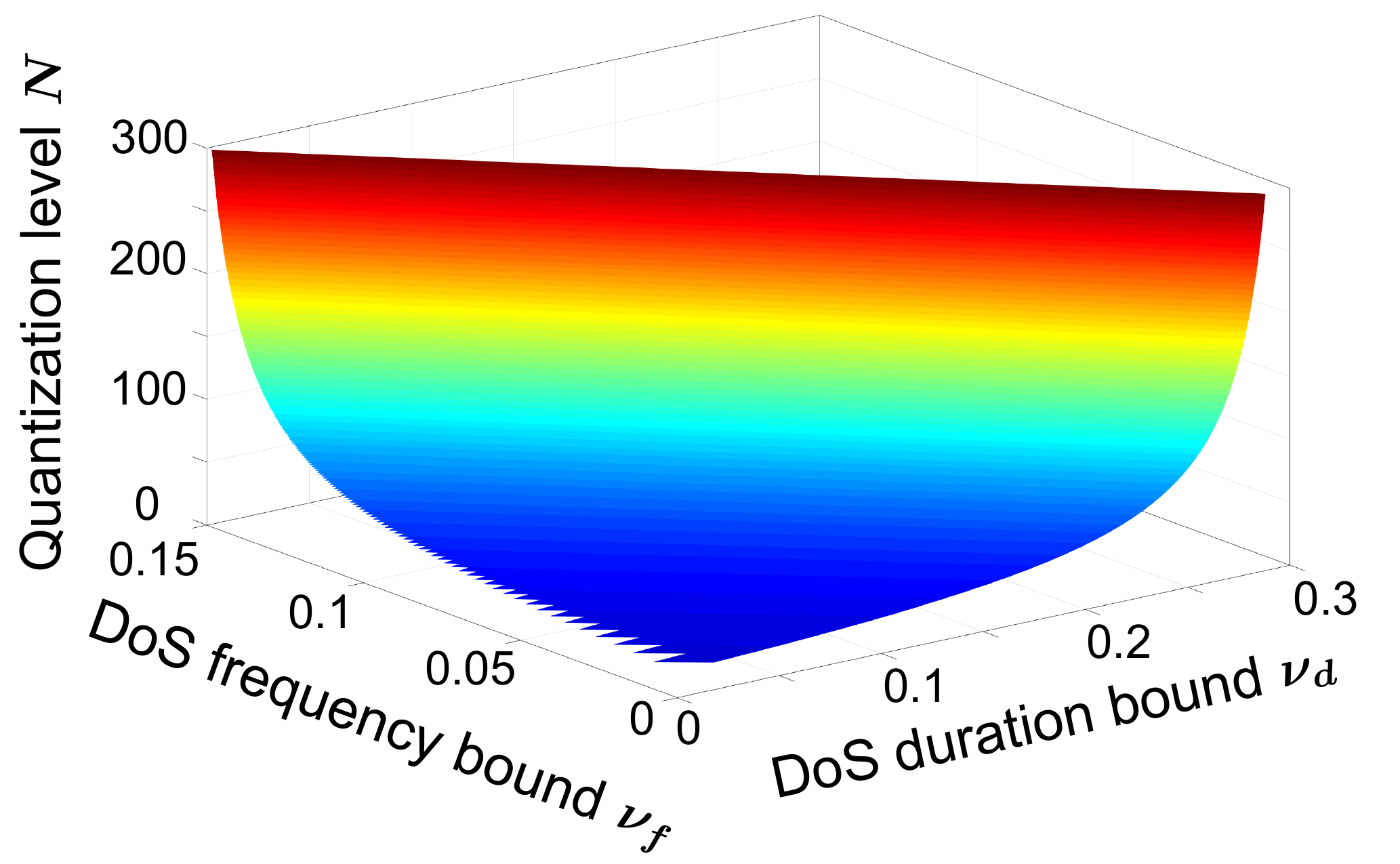}} \vspace{10pt}\\
	\subcaptionbox{Contour plot.
		\label{fig:cont_plot}}
	{\includegraphics[width = 6.7cm,clip]{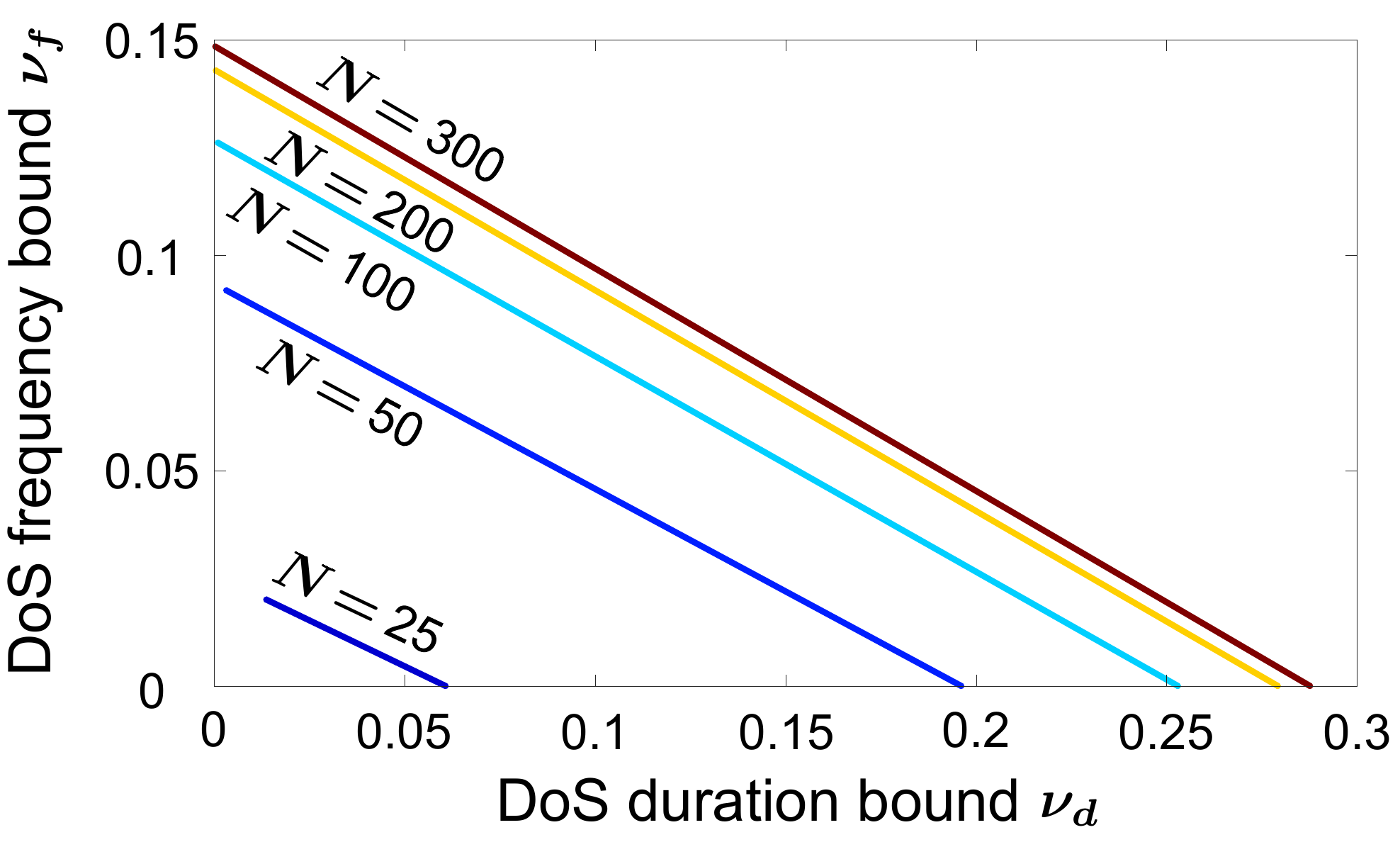}}
	\caption{Relationship between quantization level $N$ and DoS duration and frequency bounds $\nu_d,\nu_f$ under
		encoding scheme with center at output estimate.\label{fig:Nnudf}}
	\vspace{-8pt}
\end{figure}


Finally, we see a relationship between
the quantization level $N$ and the DoS duration bound $\nu_d$
for the encoding scheme whose quantization center is the origin.
Fig.~\ref{fig:Nnud_origin} illustrates the minimum integer $N$ satisfying
\eqref{eq:DoS_cond_coro_origin} in Corollary~\ref{coro:simle_case_origin},
where $R_{\text{cl}}$ is chosen so that
\[
\|R_{\text{cl}}A_{\text{cl}}R_{\text{cl}}^{-1}\|_{\infty} = \varrho(A_{\text{cl}}).
\]
By Corollary \ref{coro:simle_case_origin}, 
the DoS duration bound $\nu_d$ converges to
\[
\frac{-  \log \|R_{\text{cl}}A_{\text{cl}}R_{\text{cl}}^{-1}\|_{\infty}}{\log \|R_{\text{cl}}A_{\text{op}}R_{\text{cl}}^{-1}\|_{\infty} -  \log \|R_{\text{cl}}A_{\text{cl}}R_{\text{cl}}^{-1}\|_{\infty}}
\approx 0.0736.
\]
as the quantization level $N$ goes to infinity.
We can observe that
the encoder with center at the origin
needs more data rates in exchange for the reduction of computational resources of the coders.

\begin{figure}[tb]
	\centering
	\includegraphics[width = 7cm]{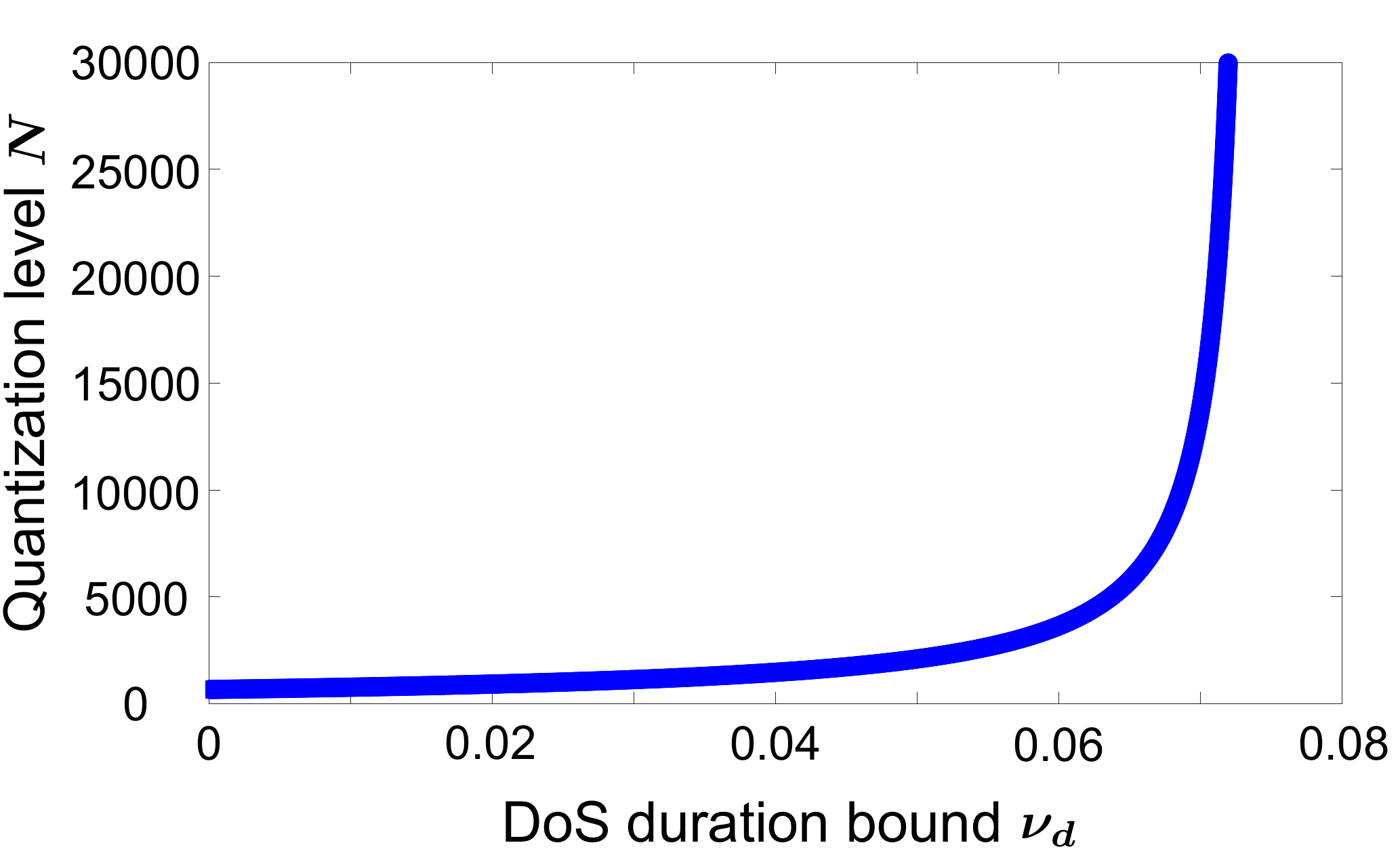}
	\caption{Relationship between quantization level $N$ and DoS duration bound $\nu_d$ under
		encoding scheme with center at origin.}
	\label{fig:Nnud_origin}
	\vspace{-8pt}
\end{figure}

\subsection{Time responses}
We present time responses under the encoding schemes with center at the output estimate.
Through simulation results, we see how conservative the obtained sufficient conditions are.
\subsubsection{DoS attacks}
We set the quantization level $N$  to be $N = 71$. 
The closed-loop system with the encoding scheme of Corollary \ref{coro:simle_case}
achieves exponential convergence under the DoS duration 
\begin{equation}
\label{eq:nu_d_cond_example}
\nu_d <  0.106,
\end{equation}
where
we construct the transformation matrix $R$ so that \eqref{eq:R_choice_ex1} holds.
Moreover,
Proposition \ref{thm:DoS_duration} shows that 
an initial state bound is obtained in finite time if 
$
\nu_d <  0.5.
$

On the other hand, 
the encoding scheme of Theorem \ref{thm:state_conv}
achieves exponential convergence if the DoS duration and frequency bounds $\nu_d$
and $\nu_f$ satisfy
\begin{equation}
\label{eq:nu_df_cond_example}
\nu_d < 0.230 - 2.041 \nu_f,
\end{equation}
where the matrix $R$ is chosen so that \eqref{eq:R_choice_ex2} is satisfied.
By  Theorem \ref{thm:DoS_frequency_duration},
the coders can construct an initial state bound in finite time if
$
\nu_d < 1 - \nu_f.
$

If the frequency of DoS attacks is sufficiently small,
then the encoding scheme of Theorem \ref{thm:state_conv} allows 
longer duration of DoS attacks without compromising the closed-loop stability in this example.
However,
the encoding scheme of Corollary \ref{coro:simle_case}
can tolerate DoS attacks with large frequency.
For instance, if $\nu_d = 0.1$ in \eqref{eq:nu_df_cond_example},
then $\nu_f < 0.063$. 
The encoding scheme of Corollary \ref{coro:simle_case}
allows the DoS attacks launched at times $k=10,20,30,\dots$, but
that of Theorem \ref{thm:state_conv} does not.

In the simulation below, 
we assume that the attacker knows all information on the closed-loop system,
which leads to effective DoS.
After a state bound is obtained, i.e., at the zooming-in stage,
DoS attacks occur if the following two conditions are both satisfied 
in addition to the above constraints on the duration and frequency:
\begin{equation}
\label{eq:DoS_rule}
|Ae_k|_{\infty} 
> \alpha_1 |e_k|_{\infty},\quad
|y_k - q_k|_{\infty} > \alpha _2 \frac{\|CR^{-1}\|_{\infty}}{N} E_{R,k},
\end{equation}
where $\alpha_1 = 1$ and $\alpha_2 = 1/2$.
Recall that
the maximum quantization error is given by
\[\frac{\|CR^{-1}\|_{\infty}}{N} E_{R,k}
\]
after a state bound is derived,
as shown in \eqref{eq:qe_y}.
As the constant $\alpha_1 \geq 0$ increases,
the estimation error $e_k$ becomes larger due to DoS attacks.
As the constant $\alpha_2 \in [0,1)$ becomes close to one, 
the second condition
leads to a larger quantization error $y_k - q_k$.
If $\alpha_1$ is too large or if $\alpha_2$ is too close to one,
then DoS attacks rarely occur.

\begin{remark}
	A more sophisticated design of DoS attacks was discussed in  Example 2.8 of \cite{Ahmet2018arXiv}, where
	the attacker decides whether to block data transmissions or not, by solving
	an optimization problem over a short horizon at each time
	like model predictive control (MPC).
	Compared with the rule based on \eqref{eq:DoS_rule},
	this MPC-like strategy requires computational resources because
	the attacker has to solve a 0-1 integer programming problem.
	However, DoS attacks can be effectively launched without
	tuning parameters.
\end{remark}

\subsubsection{Simulation results}
Let us denote the state $x_k$ and its estimate $\hat x_k$ by
$x = 
\begin{bmatrix}
x^1&x^2 &x^3&x^4
\end{bmatrix}^{\top}$ and 
$\hat x = 
\begin{bmatrix}
\hat x^1 & \hat x^2 & \hat x^3 & \hat x^4
\end{bmatrix}^{\top}$, respectively.
For the computation of time responses,
we set the initial state $x_0$  to be
$x_0 = 
\begin{bmatrix}
0 & 0.5 & 0.5 & 1
\end{bmatrix}^{\top}$.
The parameters $E_0$ and $\kappa$ for 
the encoding scheme to derive an initial state bound
are given by
$E_0 = 0.01$ and  $\kappa = 0.01$.

Figs.~\ref{fig:stable} and \ref{fig:unstable} show
time responses under the encoding scheme of Corollary \ref{coro:simle_case}
without assuming any frequency conditions of DoS attacks.
Fig.~\ref{fig:stable}
depicts the stable case where
DoS attacks satisfy the duration condition \eqref{eq:DOSduration} with 
$(\Pi_d, \nu_d) = (2,0.10)$.
DoS attacks occur on the intervals that are colored in gray.
Since 
the DoS duration $\nu_d = 0.10$ satisfies \eqref{eq:nu_d_cond_example}, 
the error bound $E_{R}$ exponentially decreases, which
leads to exponential convergence.

Fig.~\ref{fig:unstable} illustrates the unstable case of
the encoding scheme of Corollary \ref{coro:simle_case}
under DoS attacks with $(\Pi_d, \nu_d) = (2,0.11)$.
Since the DoS duration bound $\nu_d = 0.11$ does not satisfy \eqref{eq:nu_d_cond_example},  
DoS attacks make the error bound $E_{R}$
diverge in Fig.~\ref{fig:unstable_ER}. 
As the error bound $E_{R}$ increases, the worst-case  quantization error
becomes larger, which leads to the instability of the closed-loop system as shown in
Fig.~\ref{fig:unstable_x1}, although the difference between the 
threshold $0.106$ in \eqref{eq:nu_d_cond_example} and 
$\nu_d = 0.11$ used in Fig.~\ref{fig:unstable} is small. We see 
that the sufficient condition \eqref{eq:DoS_cond}
is fairly tight in this example.

Next
we compute time responses in stable and unstable cases under the encoding scheme of Theorem \ref{thm:state_conv},
assuming that the DoS duration and frequency are both averagely bounded.
By \eqref{eq:nu_df_cond_example}, if the DoS duration bound $\nu_d$ is given by $\nu_d= 0.15$,
then the frequency bound $\nu_f$ should satisfy $\nu_f < 0.0392$.
Fig.~\ref{fig:stable_freq} illustrates the stable case, where DoS attacks satisfy
the duration condition \eqref{eq:DOSduration} with  $(\Pi_d, \nu_d) = (2,0.15)$ and 
the frequency condition \eqref{eq:DOSfrequency} with $(\Pi_f,\nu_f) = (1,0.035)$. 
Fig.~\ref{fig:stable_x1_freq} shows that 
the closed-loop system achieves exponential convergence
despite longer DoS duration than in the case of Fig.~\ref{fig:stable}. 
This is because the encoding scheme in Theorem \ref{thm:state_conv}
has a small growth rate $\theta_a = 1.489$ in the presence of DoS attacks, compared with 
the growth rate $\vartheta_a = 2.901$ of the encoding scheme in Corollary \ref{coro:simle_case}.

Fig.~\ref{fig:unstable_freq} shows the time response in the unstable case, where
$(\Pi_d, \nu_d) = (2,0.15)$ and $(\Pi_f, \nu_f) = (1,0.045)$.
The DoS frequency bound $\nu_f$ is just slightly larger than the threshold $0.0392$, but
the error bound $E_R$ diverges. Consequently, the closed-loop system
is unstable.

In Figs.~\ref{fig:stable_x1_freq} and \ref{fig:unstable_x1_freq}, 
the trajectories of $x^1$ and $\hat x^1$
oscillate after DoS attacks, which is unique to the case with quantization.
These oscillations are caused by large
error bounds $E_{R}$ due to DoS attacks, as shown in 
Figs.~\ref{fig:stable_ER_freq} and \ref{fig:unstable_ER_freq}.
Even after DoS attacks, the quantized output $q$ is zero under coarse quantization until the error bound
becomes small. Hence
the observer does not estimate the plant state correctly.

\begin{figure}[tb]
	\centering
	\subcaptionbox{State $x^1$ and its estimate $\hat x^1$.
		\label{fig:stable_x1}}
	{\includegraphics[width = 7cm,clip]{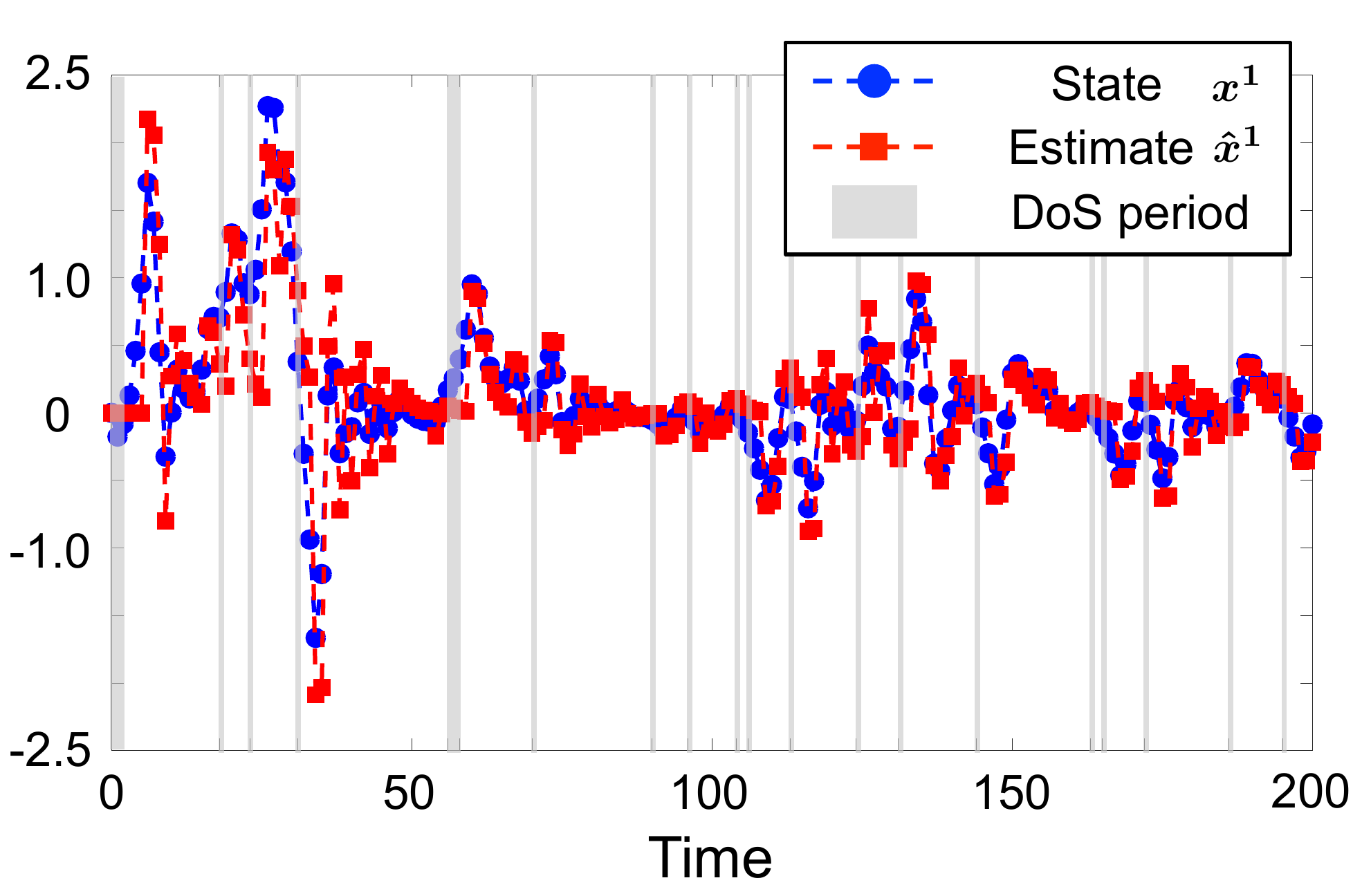}} \vspace{12pt}\\
	\subcaptionbox{Error bound $E_R$.
		\label{fig:stable_ER}}
	{\includegraphics[width = 7cm,clip]{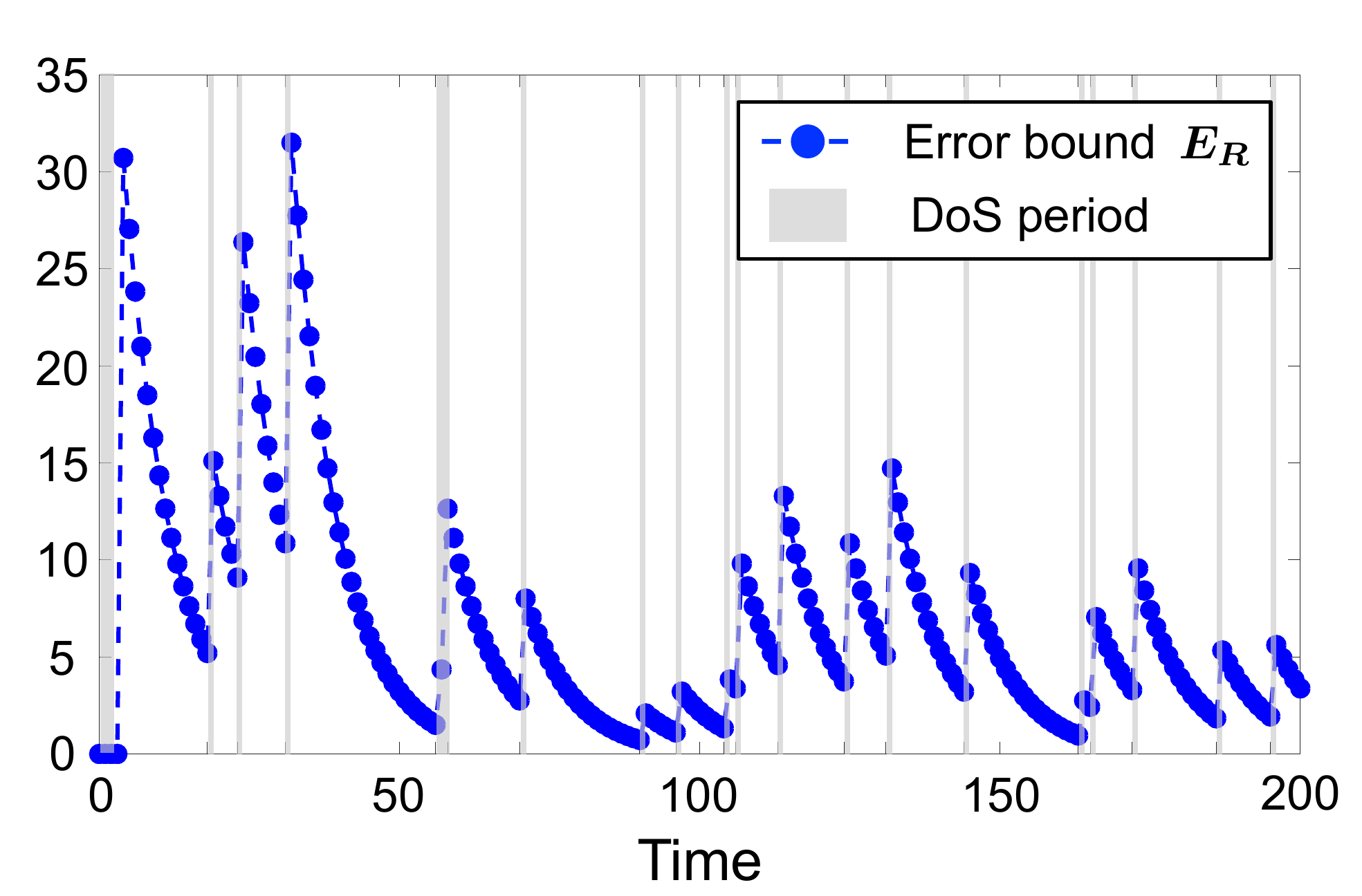}}
	\caption{Stable case without frequency condition ($\nu_d = 0.1$). \label{fig:stable}}
	\vspace{-8pt}
\end{figure}

\begin{figure}[tb]
	\centering
	\subcaptionbox{State $x^1$ and its estimate $\hat x^1$.
		\label{fig:unstable_x1}}
	{\includegraphics[width = 7cm,clip]{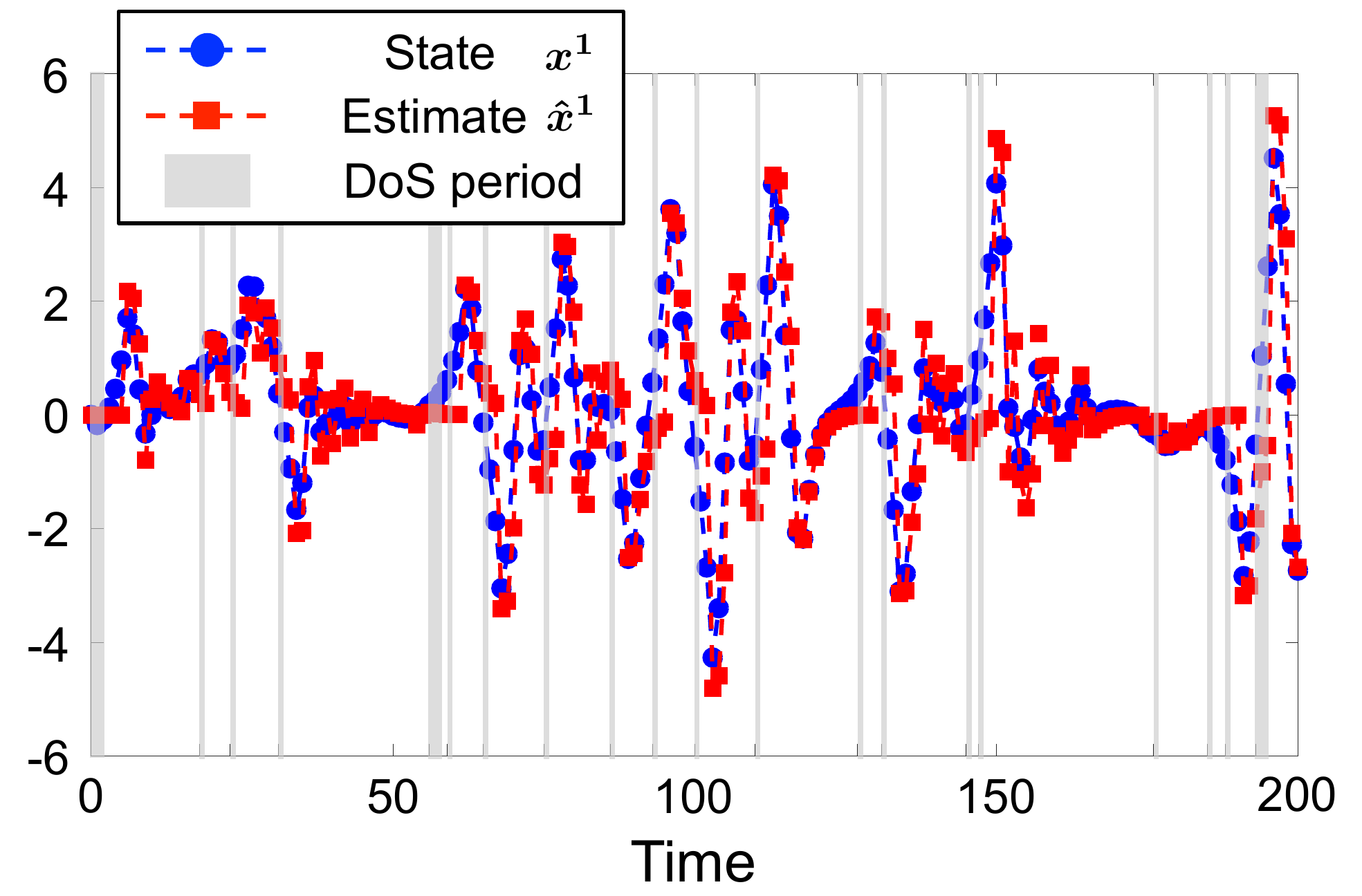}} \vspace{12pt}\\
	\subcaptionbox{Error bound $E_R$.
		\label{fig:unstable_ER}}
	{\includegraphics[width = 7cm,clip]{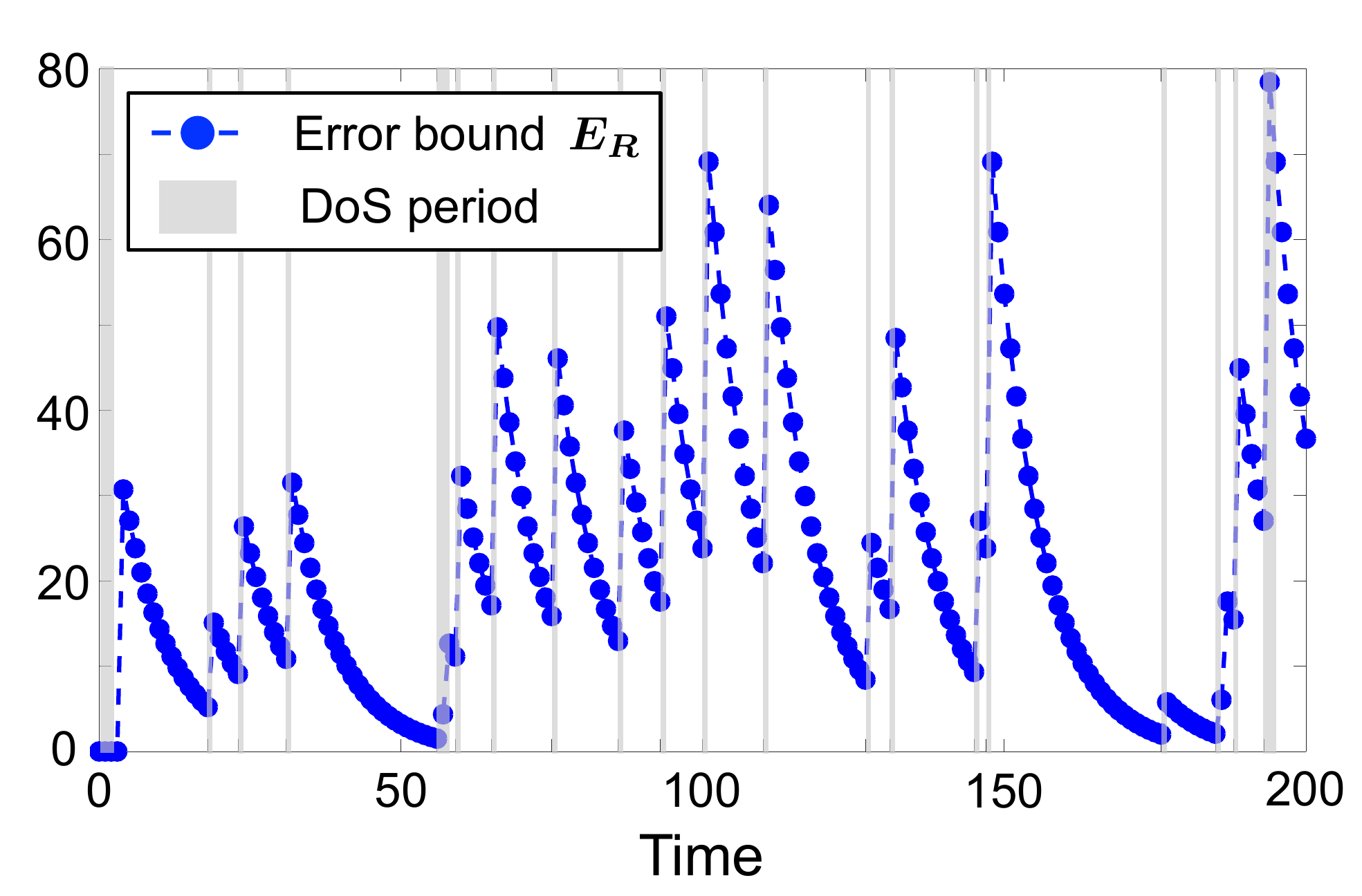}}
	\caption{Unstable case without frequency condition ($\nu_d = 0.11$). \label{fig:unstable}}
	\vspace{-8pt}
\end{figure}

\begin{figure}[tb]
	\centering
	\subcaptionbox{State $x^1$ and its estimate $\hat x^1$.
		\label{fig:stable_x1_freq}}
	{\includegraphics[width = 7cm,clip]{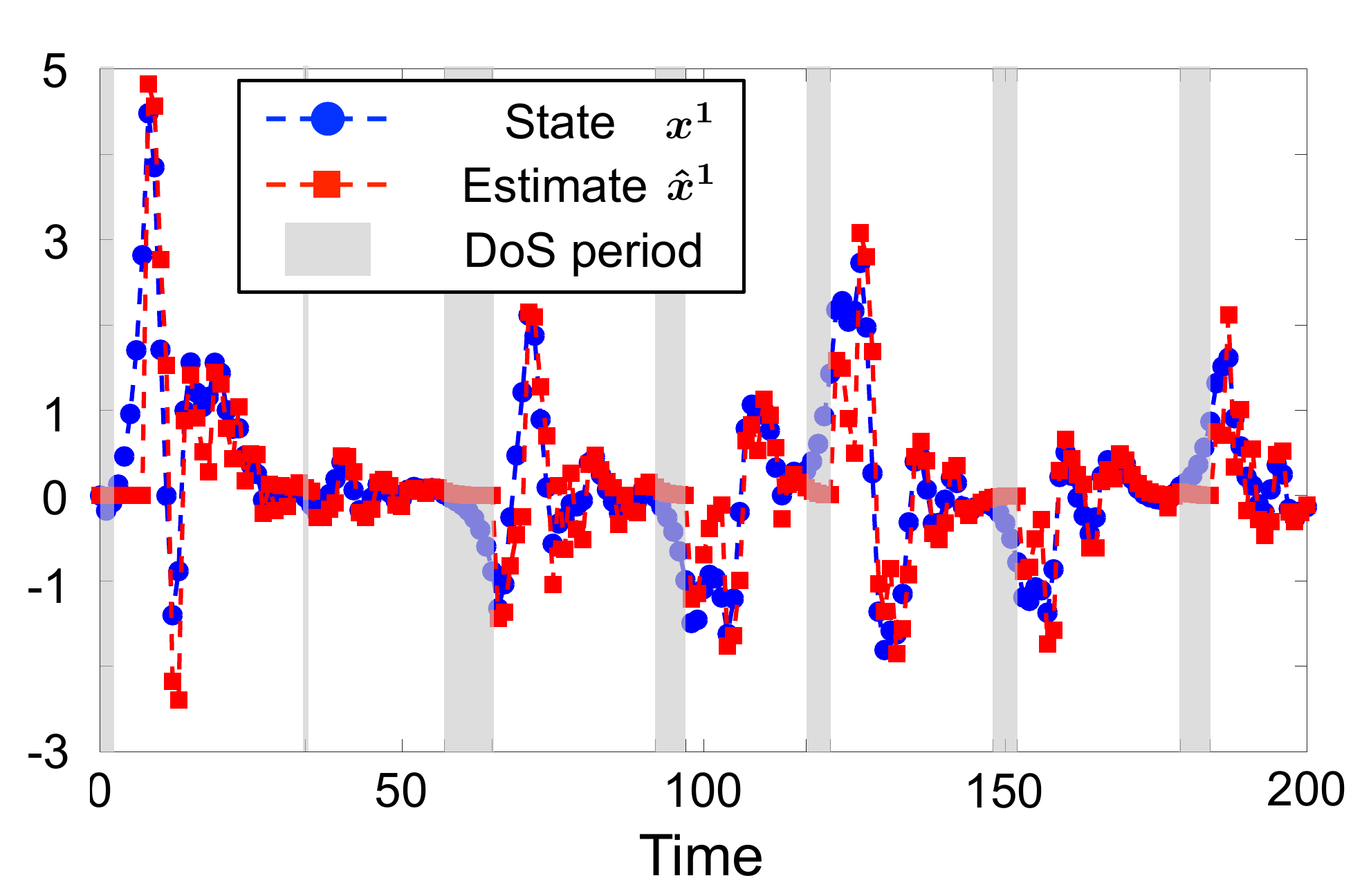}} \vspace{12pt}\\
	\subcaptionbox{Error bound $E_R$.
		\label{fig:stable_ER_freq}}
	{\includegraphics[width = 7cm,clip]{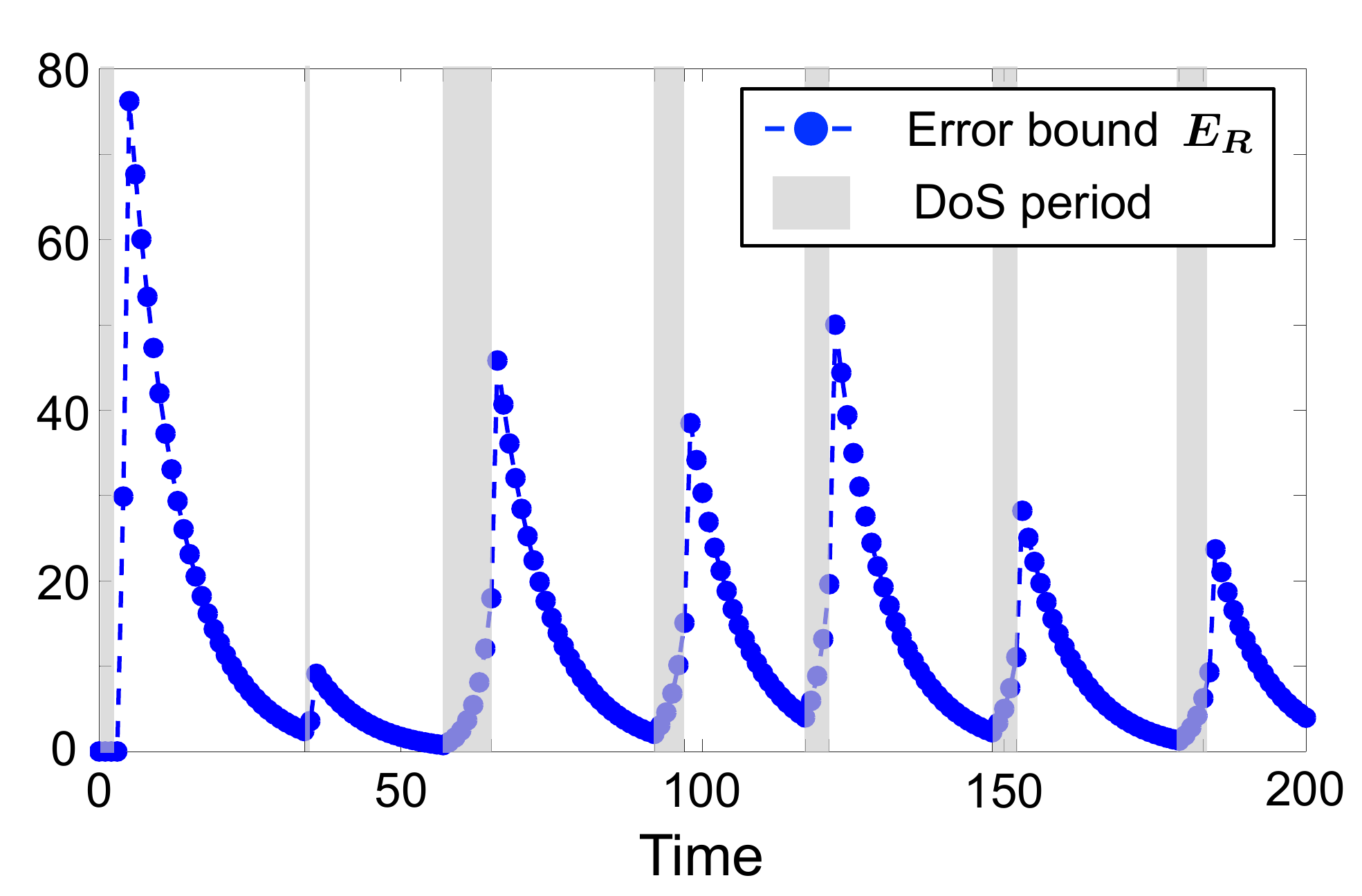}}
	\caption{Stable case with frequency condition ($\nu_d = 0.15,~\nu_f = 0.035$). \label{fig:stable_freq}}
	\vspace{-8pt}
\end{figure}

\begin{figure}[tb]
	\centering
	\subcaptionbox{State $x^1$ and its estimate $\hat x^1$.
		\label{fig:unstable_x1_freq}}
	{\includegraphics[width = 7cm,clip]{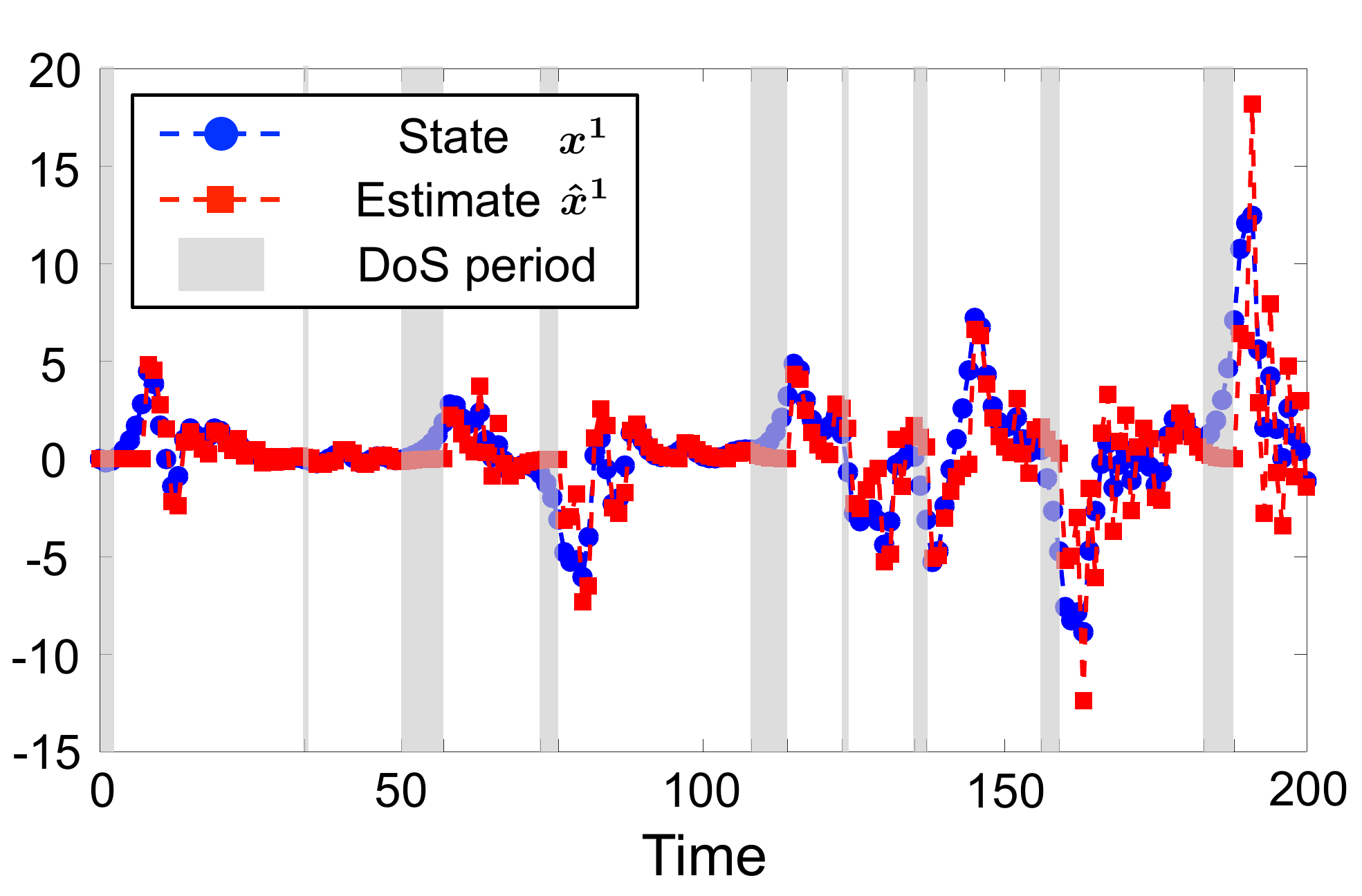}} \vspace{12pt}\\
	\subcaptionbox{Error bound $E_R$.
		\label{fig:unstable_ER_freq}}
	{\includegraphics[width = 7cm,clip]{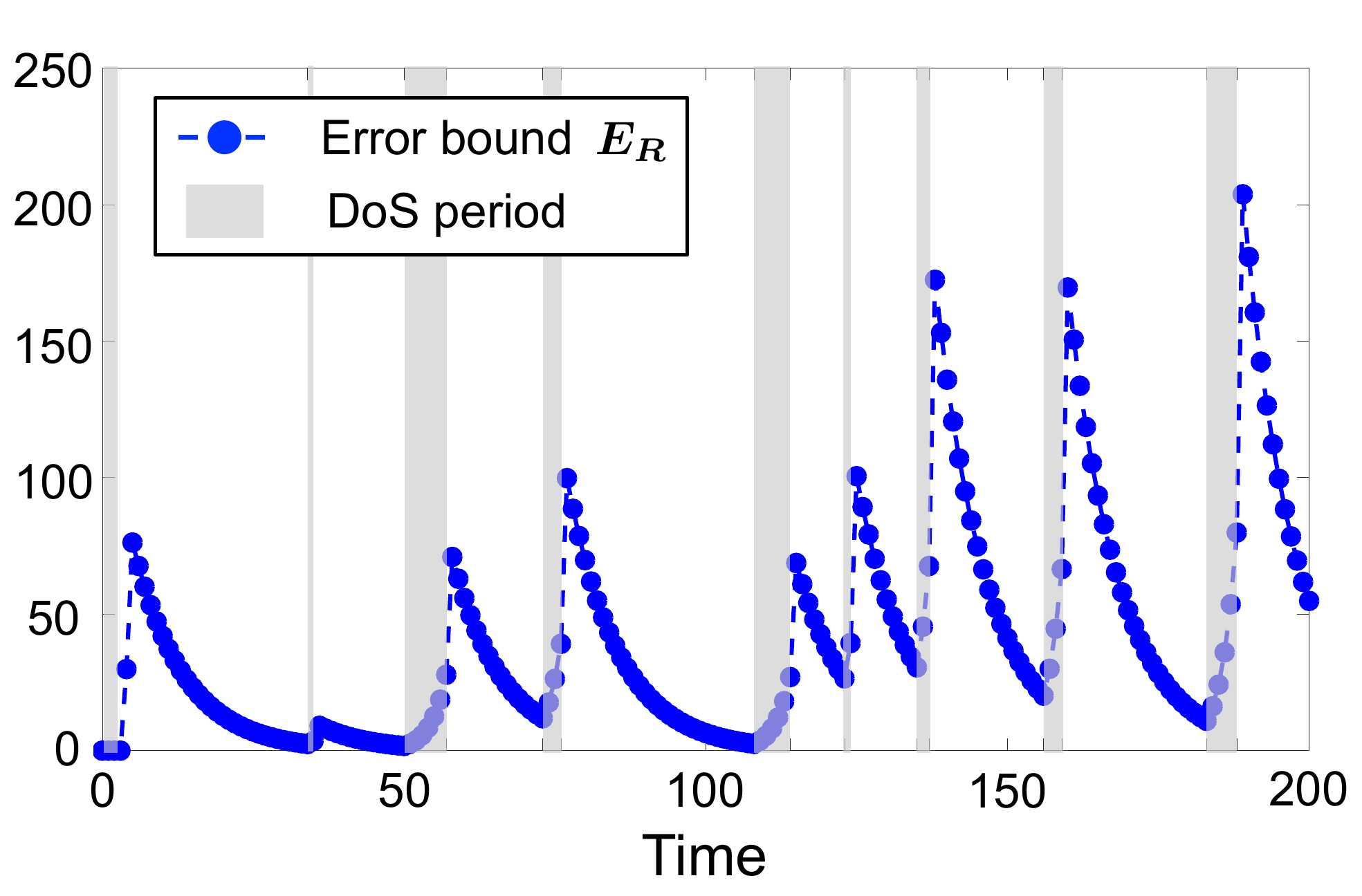}}
	\caption{Unstable case with frequency condition ($\nu_d = 0.15,~\nu_f = 0.045$). \label{fig:unstable_freq}}
	\vspace{-8pt}
\end{figure}

\section{Conclusion}
We proposed output encoding schemes resilient to DoS attacks and
obtained sufficient conditions on DoS duration and frequency bounds
for exponential convergence and Lyupunov stability with finite data rates.
The proposed encoding schemes are extensions of the zooming-in and zooming-out
method to the case with DoS.
Once an initial state bound is derived,
the coders decrease the quantization range in the absence of DoS.
However, if DoS attacks are detected, then
the coders increase the quantization range so that the output at the next time-step 
falls into the quantization region.
Moreover,
we discussed how to obtain state bounds under DoS attacks.
Future work
is to address more general networked control systems by considering
network phenomena at 
communication channels from the controller to the plant.

\appendix
\section{Proof of Theorem \ref{thm:prime}}
\renewcommand{\thetheorem}{\Alph{theorem}}

Let us first consider the case where $Q=1$, namely,
the matrix $A$ is similar to a diagonal matrix 
$\Lambda := {\rm diag} \left(\lambda e^{i 2\pi \frac{a_{1}}{\zeta}},\dots, \lambda e^{i 2\pi \frac{a_{n_x}}{\zeta}} \right)$,
where $\zeta \in \mathbb{N}$ be one or a prime number, $\lambda \in \mathbb{C}$ be nonzero, and
$a_{1},\dots,a_{n_x} \in \mathbb{Z}$ satisfy $a_{\ell_1} \not\equiv a_{\ell_2}$ (mod $\zeta$) for all
$\ell_1,\ell_2 = 1,\dots,n_x$ with $\ell_1 \not= \ell_2$.

To obtain a sufficient condition for the matrix $O\left( \{s_m\}_{m=0}^{\chi} \right)$
to be full column rank,
we use the following result on a generalized Vandermonde matrix:
\begin{lemma}[Theorem 6 of \cite{Evans1976}]
	\label{thm:evans}
	Let $\zeta$ be a prime number, and let $a_{1},\dots,a_{n} \in \mathbb{Z}$ and 
	$b_{1},\dots,b_{n} \in \mathbb{Z}$ satisfy $a_{\ell_1} \not\equiv a_{\ell_2}$ (mod $\zeta$) and $b_{\ell_1} \not\equiv b_{\ell_2}$ (mod $\zeta$) for all
	$\ell_1,\ell_2 = 1,\dots,n$ with $\ell_1 \not= \ell_2$. Then the generalized Vandermonde matrix
	\begin{equation*}
	\begin{bmatrix}
	e^{i2\pi \frac{a_1b_1}{\zeta}} & e^{i2\pi \frac{a_2b_1}{\zeta}} & \cdots & e^{i2\pi \frac{a_nb_1}{\zeta}} \\
	e^{i2\pi \frac{a_1b_2}{\zeta}} & e^{i2\pi \frac{a_2b_2}{\zeta}} & \cdots & e^{i2\pi \frac{a_nb_2}{\zeta}} \\
	\vdots & \vdots & \ddots & \vdots\\
	e^{i2\pi \frac{a_1b_n}{\zeta}} & e^{i2\pi \frac{a_2b_n}{\zeta}} & \cdots & e^{i2\pi \frac{a_nb_n}{\zeta}}  
	\end{bmatrix}
	\end{equation*}
	is invertible.
\end{lemma}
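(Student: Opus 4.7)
The plan is to show that the matrix in the lemma has a nonzero determinant by computing its $\mathfrak{p}$-adic valuation in $\mathbb{Z}[\omega]$, where $\omega = e^{i 2\pi/\zeta}$ and $\mathfrak{p} = (1-\omega)$ is the unique prime of $\mathbb{Z}[\omega]$ above the rational prime $\zeta$. Recall that $\zeta$ is totally ramified in $\mathbb{Z}[\omega]$, i.e.\ $\zeta = u \cdot (1-\omega)^{\zeta-1}$ for some unit $u$, and $v_\mathfrak{p}(1-\omega) = 1$; the target is to prove that $v_\mathfrak{p}(\det) = \binom{n}{2}$, which immediately forces $\det \neq 0$.

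The first step is to view the determinant as the evaluation at $X = \omega$ of the Laurent polynomial
\[
D(X) := \det\bigl[X^{a_j b_k}\bigr]_{j,k} = \sum_{\sigma \in S_n} \mathrm{sgn}(\sigma)\, X^{\langle a_\sigma, b\rangle},\qquad \langle a_\sigma, b\rangle := \sum_k a_{\sigma(k)} b_k.
\]
Setting $X = 1 + U$ and applying the binomial expansion $X^m = \sum_{\ell}\binom{m}{\ell} U^\ell$ termwise gives the integer-coefficient expansion
\[
D(1+U) = \sum_{\ell \geq 0} C_\ell U^\ell,\qquad C_\ell = \sum_{\sigma \in S_n} \mathrm{sgn}(\sigma) \binom{\langle a_\sigma, b\rangle}{\ell} \in \mathbb{Z}.
\]
The task then reduces to locating the smallest $\ell$ with $C_\ell \neq 0$ and showing that this $C_\ell$ is coprime to $\zeta$.

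The second step is to prove $C_\ell = 0$ for $\ell < \binom{n}{2}$ and $C_{\binom{n}{2}} = V(a)V(b)/\prod_{k=0}^{n-1} k!$, where $V(a) = \prod_{i<j}(a_j - a_i)$ is the Vandermonde and $V(b)$ is defined analogously. Since $\binom{m}{\ell}$ has degree $\ell$ in $m$ and $\langle a_\sigma, b\rangle$ is linear in the $a$'s, each $C_\ell$ is a polynomial of total degree at most $\ell$ in $(a_1,\ldots,a_n)$. Swapping two $a_i$'s amounts to composing $\sigma$ with a transposition, which flips $\mathrm{sgn}(\sigma)$, so $C_\ell$ is antisymmetric in $a_1,\ldots,a_n$. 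Any such polynomial is divisible by $V(a)$ (of degree $\binom{n}{2}$), forcing $C_\ell = 0$ when $\ell < \binom{n}{2}$. For $\ell = \binom{n}{2}$ the same reasoning, applied symmetrically in $a$ and $b$, yields $C_{\binom{n}{2}} = c\cdot V(a)V(b)$ for a constant $c$; the value $c = 1/\prod_{k=0}^{n-1} k!$ is obtained either by extracting the leading homogeneous part through the multinomial expansion of $\langle a_\sigma, b\rangle^{\binom{n}{2}}$, or, more cleanly, by substituting $X = e^t$ in $D(X)$ and invoking multilinearity of the determinant to obtain
\[
D(e^t) = \sum_{\substack{S \subset \mathbb{Z}_{\geq 0}\\ |S| = n}} \frac{t^{\sum S}}{\prod_{s\in S} s!}\, \det[a_j^{s_k}]_{j,k}\, \det[b_k^{s_j}]_{j,k},
\]
whose minimal-degree term (at $S = \{0,1,\ldots,n-1\}$) is manifestly $\bigl(V(a)V(b)/\prod_{k=0}^{n-1} k!\bigr)\, t^{\binom{n}{2}}$; matching to the $U$-expansion via $t = \log(1+U) = U + O(U^2)$ identifies $C_{\binom{n}{2}}$.

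The third step is the $\mathfrak{p}$-adic conclusion. Since $v_\mathfrak{p}(U) = v_\mathfrak{p}(\omega - 1) = 1$, the $U$-expansion gives $v_\mathfrak{p}(D(\omega)) \geq \binom{n}{2}$, with equality iff $\zeta \nmid C_{\binom{n}{2}}$. The hypotheses that the $a_j$ (resp.\ $b_k$) are pairwise distinct modulo the prime $\zeta$ imply $\zeta \nmid a_j - a_i$ and $\zeta \nmid b_j - b_i$ for $i \neq j$, so $\zeta \nmid V(a) V(b)$. The existence of $n$ distinct residues mod $\zeta$ also forces $n \leq \zeta$, so $k!$ is coprime to $\zeta$ for every $k \leq n-1 < \zeta$, whence $\zeta \nmid \prod_k k!$. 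Combined, $\zeta \nmid C_{\binom{n}{2}}$, so $v_\mathfrak{p}(\det) = \binom{n}{2} < \infty$ and the determinant is nonzero.

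The main obstacle I expect is the explicit pinning-down of the constant $c$ in $C_{\binom{n}{2}} = c\cdot V(a)V(b)$: the antisymmetry argument cleanly forces the shape, but isolating the scalar requires careful multinomial bookkeeping (or, equivalently, the formal substitution $X = e^t$ combined with the Cauchy--Binet-style expansion of the determinant displayed above). Once the leading coefficient is in hand, the remaining $\mathfrak{p}$-adic valuation argument is routine.
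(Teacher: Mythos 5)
The paper does not prove this lemma at all: it is imported verbatim as Theorem~6 of the cited reference by Evans and Isaacs, so there is no in-paper argument to compare against. Your proposal is a correct, self-contained proof, and it is essentially the classical $(1-\omega)$-adic valuation argument for Chebotarev-type theorems on roots of unity of prime order (which is also the circle of ideas the cited source works in): all three steps --- the vanishing of $C_\ell$ for $\ell<\binom{n}{2}$ by antisymmetry and degree count, the identification $C_{\binom{n}{2}}=V(a)V(b)/\prod_{k=0}^{n-1}k!$ via the Andreief expansion of $\det[e^{ta_jb_k}]$, and the conclusion $v_{\mathfrak p}(\det)=\binom{n}{2}$ from $\zeta\nmid V(a)V(b)\prod_k k!$ (using $n\le\zeta$) --- are sound. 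What a self-contained proof buys here is independence from a somewhat obscure 1976 reference; what the citation buys the authors is brevity.

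One small point you should patch: since $a_j,b_k\in\mathbb{Z}$ may be negative, $D(X)=\det[X^{a_jb_k}]$ is a priori a Laurent polynomial, and the expansion $X^m=\sum_\ell\binom{m}{\ell}U^\ell$ becomes an infinite series for $m<0$, so $D(1+U)$ is only a formal power series and the termwise evaluation at $U=\omega-1$ needs a $\mathfrak p$-adic convergence remark. The clean fix is to note at the outset that each entry $e^{i2\pi a_jb_k/\zeta}$ depends only on $a_j$ and $b_k$ modulo $\zeta$, so you may assume without loss of generality that $0\le a_j,b_k\le\zeta-1$; then every exponent is nonnegative, $D$ is an honest polynomial, the sum over $\ell$ is finite, and the distinctness modulo $\zeta$ (hence $\zeta\nmid V(a)V(b)$) is preserved. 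With that one-line reduction the argument is complete.
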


\begin{lemma}
	\label{lem:one_cycle}
	Assume that the matrix $A$ is similar to the diagonal matrix $\Lambda$ defined above.
	Suppose that  
	Assumptions \ref{assump:duration}, \ref{assump:observability}, \ref{assump:initial_time}, and \ref{assump:single_output} hold.
	If the DoS duration bound $\nu_d$ satisfies
	\begin{equation}
	\label{eq:duration_prime_onecycle}
	\nu_d  < \frac{\zeta-n_x+1}{\zeta},
	\end{equation}
	then $O\left( \{s_m\}_{m=0}^{\chi} \right)$ defined in \eqref{eq:Obs_matrix_DoS} 
	is full column rank
	by time $k=(\ell_e + 1)\zeta$, where $\ell_e \in \mathbb{Z}_+$ is the 
	maximum integer satisfying
	\begin{equation}
	\label{eq:ell_cond_prime_onecycle}
	\ell_e \leq \frac{\Pi_d}{\zeta-n_x+1-\zeta\nu_d }.
	\end{equation}
	\vspace{0pt}
\end{lemma}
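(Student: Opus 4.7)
The plan is to leverage the diagonalization hypothesis and reduce the full-column-rank question for $O(\{s_m\}_{m=0}^\chi)$ to an invertibility question for a generalized Vandermonde matrix of roots of unity, to which Lemma~\ref{thm:evans} applies directly. Letting $T\in\mathbb{C}^{n_x\times n_x}$ satisfy $T^{-1}AT=\Lambda$, and writing $\tilde C:=CT\in\mathbb{C}^{1\times n_x}$ and $t_m:=s_m-s_0$, I would observe that $O(\{s_m\}_{m=0}^\chi)=\widetilde O\,T^{-1}$, where $\widetilde O$ has $(m,j)$-entry $\tilde c_j\,\lambda^{t_m}\,e^{i2\pi a_j t_m/\zeta}$. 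Invertibility of $T^{-1}$ makes full column rank of $O$ equivalent to that of $\widetilde O$. Since $\Lambda$ has pairwise distinct diagonal entries (the $a_j$ are distinct modulo the prime or unit $\zeta$), observability of $(C,A)$ forces every component $\tilde c_j$ to be nonzero; with $\lambda\neq 0$ also, factoring $\lambda^{t_m}$ from row $m$ and $\tilde c_j$ from column $j$ reduces the question to whether the $(\chi+1)\times n_x$ matrix $V$ with $(m,j)$-entry $e^{i2\pi a_j t_m/\zeta}$ has rank $n_x$.

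Each entry of $V$ depends on $t_m$ only through its residue modulo $\zeta$, so it suffices to locate $n_x$ indices $m_1,\dots,m_{n_x}\in\{0,\dots,\chi\}$ with $t_{m_1},\dots,t_{m_{n_x}}$ pairwise distinct modulo $\zeta$: the resulting $n_x\times n_x$ submatrix is then the generalized Vandermonde matrix of Lemma~\ref{thm:evans}, hence invertible (and trivially so when $\zeta=1$, which forces $n_x=1$). Because $t_m-t_{m'}=s_m-s_{m'}$, the condition on residues of $t_m$ is the same as on residues of $s_m$, so the choice of $s_0$ is irrelevant. The task therefore reduces to showing that within some finite window $[0,(\ell_e+1)\zeta)$ one can find $n_x$ DoS-free time-steps whose residues modulo $\zeta$ are all different.

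I would control the residue coverage by a direct counting argument. Fix $\ell\in\mathbb{Z}_+$ and call a residue class $r\in\{0,\dots,\zeta-1\}$ \emph{blocked} if every one of the $\ell+1$ time-steps $k\in[0,(\ell+1)\zeta)$ with $k\equiv r\pmod{\zeta}$ is attacked. If $k$ residue classes are blocked, then at least $k(\ell+1)$ DoS attacks occur in $[0,(\ell+1)\zeta)$, so Assumption~\ref{assump:duration} gives $k(\ell+1)\le\Pi_d+\nu_d(\ell+1)\zeta$, i.e.\ $k\le\Pi_d/(\ell+1)+\nu_d\zeta$. At least $n_x$ unblocked residue classes exist iff $k\le\zeta-n_x$, which, by integrality of $k$, is implied by the strict inequality $\Pi_d/(\ell+1)+\nu_d\zeta<\zeta-n_x+1$, equivalently $\ell+1>\Pi_d/(\zeta-n_x+1-\zeta\nu_d)$. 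Hypothesis~\eqref{eq:duration_prime_onecycle} makes the denominator positive, and taking $\ell=\ell_e$ as in \eqref{eq:ell_cond_prime_onecycle} yields $\ell_e+1>\Pi_d/(\zeta-n_x+1-\zeta\nu_d)$; selecting one DoS-free time-step from each of at least $n_x$ unblocked residue classes (such a step exists by the definition of unblocked) produces the required sample points $s_0<\cdots<s_{\chi}$ (with appropriate $\chi\ge n_x-1$) inside $[0,(\ell_e+1)\zeta)$, concluding the proof.

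The main obstacle I expect is the bookkeeping between the shift $s_m\mapsto t_m=s_m-s_0$ and the residue-class counting: one must verify that the counting bound on unblocked residue classes (phrased in absolute time) translates cleanly into a statement about distinct residues of the $t_m$'s, which is fine because integer translation acts as a bijection on $\mathbb{Z}/\zeta\mathbb{Z}$. A secondary subtlety is the strict-vs-weak inequality in the counting step, cleanly resolved by integrality of $k$; and the edge case $\zeta=1$, where $n_x=1$, making $V$ a column of $1$'s and the argument degenerate.
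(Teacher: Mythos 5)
Your proposal is correct and follows essentially the same route as the paper: diagonalize, strip off the nonzero factors $\lambda^{t_m}$ and $\tilde c_j$ (nonzero by observability), reduce full column rank of $O(\{s_m\}_{m=0}^{\chi})$ to invertibility of an $n_x\times n_x$ generalized Vandermonde submatrix via Lemma~\ref{thm:evans}, and then show by counting DoS attacks against Assumption~\ref{assump:duration} that $n_x$ DoS-free time-steps with pairwise distinct residues modulo $\zeta$ must appear by time $(\ell_e+1)\zeta$. Your counting of fully blocked residue classes is just the transposed (direct rather than by-contradiction) version of the paper's per-window attack count, and it yields the identical bound \eqref{eq:ell_cond_prime_onecycle}.
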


\begin{proof}
	Let $s_0,s_1,\dots,s_{\chi}$ be the time-steps without DoS on the interval $[0,k)$, and
	define 
	\begin{equation}
	\label{eq:b_def}
	b_m := s_{m-1} - s_{0}\qquad \forall m=1,\dots,\chi+1.
	\end{equation}
	There exists an invertible matrix $R \in \mathbb{C}^{n_x \times n_x}$ such that 
	$AR = R \Lambda$. 
	Define 
	$C_\Lambda := CR = \begin{bmatrix}c_1 & \cdots & c_{n_x} \end{bmatrix}$ and
	\[ \mathcal{V} :=
	\begin{bmatrix}
	e^{i2\pi \frac{a_1b_1}{\zeta}} & e^{i2\pi \frac{a_2b_1}{\zeta}} & \cdots & e^{i2\pi \frac{a_{n_x}b_1}{\zeta}} \\
	e^{i2\pi \frac{a_1b_2}{\zeta}} & e^{i2\pi \frac{a_2b_2}{\zeta}} & \cdots & e^{i2\pi \frac{a_{n_x}b_2}{\zeta}} \\
	\vdots & \vdots & \ddots & \vdots\\
	e^{i2\pi \frac{a_1b_{\chi+1}}{\zeta}} & e^{i2\pi \frac{a_2b_{\chi+1}}{\zeta}} & \cdots & e^{i2\pi \frac{a_{n_x}b_{\chi+1}}{\zeta}}  
	\end{bmatrix}.
	\]
	Since $(C,A)$ is observable by Assumption \ref{assump:observability},
	it follows that $c_j \not=0$ for every $j=1,\dots,n_x$.
	We obtain
	\begin{align*}
	O\left( \{s_m\}_{m=0}^{\chi} \right) 
	&= 
	\begin{bmatrix}
	CA^{b_1} \\  \vdots \\ CA^{b_{\chi+1}} 
	\end{bmatrix} 
	=
	\begin{bmatrix}
	C_\Lambda \Lambda^{b_1} \\  \vdots \\ C_\Lambda \Lambda^{b_{\chi+1}} 
	\end{bmatrix}R^{-1} \\
	&=
	{\rm diag}(\lambda^{b_1},\dots,\lambda^{b_{\chi+1}}) \cdot
	\mathcal{V}
	\cdot
	{\rm diag}(c_1,\dots,c_{n_x}) R^{-1}.
	\end{align*}
	Therefore, the rank of 
	$O\left( \{s_m\}_{m=0}^{\chi} \right) $
	is equal to the rank 
	of 
	$
	\mathcal{V}
	$.
	By the assumption on $a_1,\dots,a_{n_x}$, 
	Lemma  \ref{thm:evans} shows that if there exist $\tilde b_1,\dots,\tilde b_{n_x} \in \{b_1,\dots,b_{\chi+1}\}$ such that
	$\tilde b_{\ell_1} \not\equiv \tilde b_{\ell_2}$ (mod $\zeta$) for all
	$\ell_1,\ell_2 = 1,\dots,n_x$ with $\ell_1 \not= \ell_2$,
	then 
	$
	\text{rank}~\!\mathcal{V}
	= n_x.
	$
	This implies that $O\left( \{s_m\}_{m=0}^{\chi} \right) $ is full column rank.
	
	Suppose, to reach a contradiction, that 
	for every $k = \ell \zeta $ with $\ell \in \mathbb{Z}_+$, there do not exist such
	\[
	\tilde b_1,\dots,\tilde b_{n_x}\in \{b_1,\dots,b_{\chi+1}\} = \{0, s_1 - s_0,\dots,s_{\chi} - s_0\},
	\]
	where $s_0,\dots,s_{\chi}$ are the time-steps without DoS on the interval $[0,k)$.
	Then DoS attacks occur at least $\zeta-n_x+1$ times during every 
	interval consisting of consecutive $\zeta$ time-steps.
	Hence
	$\Phi_d(\ell \zeta) \geq \ell(\zeta-n_x+1)$.
	By Assumption \ref{assump:duration}, we obtain
	$
	\Phi_d(\ell \zeta) \leq \Pi_d + \nu_d(\ell \zeta).
	$
	Therefore,
	\[
	\ell (\zeta-n_x+1) \leq \Pi_d + \nu_d (\ell \zeta), 
	\]
	which yields 
	$
	(\zeta-n_x+1 - \zeta \nu_d) \ell  \leq \Pi_d.
	$
	If the DoS duration bound $\nu_d$ satisfies \eqref{eq:duration_prime_onecycle},
	then we get a contradiction for $\ell \in \mathbb{Z}_+$ larger than the right side of \eqref{eq:ell_cond_prime_onecycle}.
	This implies that $O\left( \{s_m\}_{m=0}^{\chi} \right)$ 
	is full column rank by time $k=(\ell_e + 1) \zeta$, where
	$\ell_e \in \mathbb{Z}_+$ is the maximum integer satisfying \eqref{eq:ell_cond_prime_onecycle}.
	This completes the proof.
\end{proof}

Let us next consider the general case.
The following lemma provides a useful algebraic fact, which is used to show Theorem \ref{thm:prime}:
\begin{lemma}[Lemma 26 of \cite{Rohr2014}]
	\label{thm:multi_cyclic}
	Let $\lambda_1,\dots, \lambda_Q \in \mathbb{C}$ satisfy $\lambda_j \not=0$ for all $j=1,\dots,Q$
	and 
	$(\lambda_{j_1} / \lambda_{j_2})^k \not=1$ for every $j_1,j_2 = 1,\dots,Q$ with $j_1 \not=j_2$
	and every $k \in \mathbb{N}$.
	Let $w_1,\dots,w_Q \in \mathbb{C}$ satisfy $w_j\not=0$ for some $j =1,\dots,Q$.
	Then, there exist at most finitely many $k \in \mathbb{Z}_+$ such that 
	\[
	\sum_{j=1}^{Q}
	w_j\lambda_j^k = 0.
	\]
	\vspace{0pt}
\end{lemma}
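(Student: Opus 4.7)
The plan is to establish the lemma by induction on the number $Q$ of terms with nonzero coefficient, combining a first-difference operator that reduces $Q$ with a Vandermonde argument on an arithmetic progression of zeros. The base case $Q=1$ is immediate: since $w_1$ and $\lambda_1$ are both nonzero, $w_1\lambda_1^k$ never vanishes. For the inductive step, any $w_{j_0}=0$ can be discarded (the surviving $\lambda_j$ still satisfy the ratio hypothesis), so one may assume all coefficients $w_j\neq 0$.

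Under this assumption, for each $d\in\mathbb{N}$ introduce the shift-difference
\[
g_d(k) := f(k+d) - \lambda_Q^d f(k) = \sum_{j=1}^{Q-1} w_j\,(\lambda_j^d-\lambda_Q^d)\,\lambda_j^k.
\]
By hypothesis $(\lambda_j/\lambda_Q)^d\neq 1$ for $j<Q$, so $\lambda_j^d\neq\lambda_Q^d$, and each coefficient $w_j(\lambda_j^d-\lambda_Q^d)$ is nonzero. The non-root-of-unity ratio condition also transfers to the subfamily $\{\lambda_j\}_{j<Q}$ (since $((\lambda_{j_1}/\lambda_{j_2})^d)^k=(\lambda_{j_1}/\lambda_{j_2})^{dk}\neq 1$ for all $k\in\mathbb{N}$), so the inductive hypothesis applies to $g_d$ and yields that $g_d$ has only finitely many zeros. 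The task is to produce, from the assumption that the zero set $S$ of $f$ is infinite, some $d$ for which $g_d$ vanishes infinitely often.

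If the consecutive gaps $k_{i+1}-k_i$ of $S=\{k_1<k_2<\cdots\}$ are bounded, infinite pigeonhole supplies a $d^\ast\in\mathbb{N}$ that appears as a gap infinitely often, and at each such $i$ one has $g_{d^\ast}(k_i)=f(k_{i+1})-\lambda_Q^{d^\ast}f(k_i)=0$, the desired contradiction. The principal obstacle is the case of \emph{unbounded} gaps, where pigeonhole on consecutive differences fails. The cleanest resolution is via the Skolem--Mahler--Lech theorem: since $(f(k))_k$ satisfies an order-$Q$ linear recurrence over $\mathbb{C}$, its zero set is a finite union of arithmetic progressions together with a finite set, so if $|S|=\infty$ then $S$ must contain an infinite arithmetic progression $\{k_0+d\ell:\ell\in\mathbb{Z}_+\}$. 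Restricting $f$ to this AP gives $\sum_j(w_j\lambda_j^{k_0})(\lambda_j^d)^\ell=0$ for every $\ell\in\mathbb{Z}_+$. The ratio hypothesis forces the bases $\lambda_j^d$ to be pairwise distinct, so the standard Vandermonde matrix $[\lambda_j^{d\ell}]_{0\leq\ell\leq Q-1,\,1\leq j\leq Q}$ is invertible; restricting to $\ell=0,\dots,Q-1$ forces $w_j\lambda_j^{k_0}=0$, hence $w_j=0$ for every $j$, contradicting that some $w_j\neq 0$. An elementary route bypassing Skolem--Mahler--Lech would require combining the shift-difference $g_d$ with a density/Ramsey-type argument on $S$ to locate an infinite AP directly, which is the delicate point of the proof.
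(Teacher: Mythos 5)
The paper does not actually prove this lemma: it is imported verbatim as Lemma~26 of the cited reference (Rohr, Marelli, and Fu), so there is no in-paper argument to compare yours against. Judged on its own terms, your proof is correct, but all of the real work is done by the single appeal to the Skolem--Mahler--Lech theorem: once you know that the zero set of the linear recurrence $f(k)=\sum_j w_j\lambda_j^k$ is a finite set together with finitely many infinite arithmetic progressions, an infinite zero set yields an infinite progression $\{k_0+d\ell\}$, the non-root-of-unity hypothesis makes the $\lambda_j^d$ pairwise distinct, and the Vandermonde step forces $w_j\lambda_j^{k_0}=0$ for all $j$, a contradiction. This closing argument is complete and needs no induction at all, so the preliminary scaffolding (induction on $Q$, the shift-difference $g_d(k)=f(k+d)-\lambda_Q^d f(k)$, and the pigeonhole on bounded gaps) is redundant; it only covers the bounded-gap case, and you correctly diagnose that the unbounded-gap case is exactly where an elementary argument breaks down --- indeed, for $Q\geq 3$ no purely combinatorial escape is known, and quantitative versions of this finiteness statement run into Baker-type lower bounds for linear forms in logarithms. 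Two small points worth making explicit if you keep this proof: you should state that you are using Lech's version of the theorem, which holds over any field of characteristic zero and hence over $\mathbb{C}$; and you should note that the common difference $d$ of the extracted progression lies in $\mathbb{N}$, which is what licenses applying the hypothesis $(\lambda_{j_1}/\lambda_{j_2})^d\neq 1$. Be aware, though, that by invoking Skolem--Mahler--Lech you are proving a statement the paper treats as a citable black box with machinery considerably heavier than anything else in the paper.
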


We are now in a position to prove Theorem \ref{thm:prime}.

\begin{proof}[Proof of Theorem \ref{thm:prime}]
Let $s_0,s_1,\dots,s_{\chi}$ be the time-steps without DoS on the interval $[0,k)$, and
define $b_m \in \mathbb{Z}_+$ ($m=1,\dots,\chi+1$) as in \eqref{eq:b_def}.
There exists an invertible matrix $R \in \mathbb{C}^{n_x \times n_x}$ such that 
$AR = R\Lambda$. Define $C_\Lambda := CR = 
\begin{bmatrix}
C_1 & \cdots & C_Q
\end{bmatrix}$ 
with $C_j \in \mathbb{C}^{1 \times n_j}$ for every $j=1,\dots,Q$.
Since $(C,A)$ is observable by Assumption \ref{assump:observability},
it follows that $(C_j, \Lambda_j)$ is also observable for every $j=1,\dots,Q$.
Define
\begin{align*}
O_j &:= 
\begin{bmatrix}
C_j\Lambda_j^{b_1} \\ \vdots \\ C_j\Lambda_j^{b_{\chi+1}}
\end{bmatrix} \qquad \forall j = 1,\dots,Q.
\end{align*}
Then $O\left( \{s_m\}_{m=0}^{\chi} \right) = 
\begin{bmatrix}
O_1 & \cdots & O_{Q}
\end{bmatrix} R^{-1}$.

Assume that $k \geq \ell_{\min}\zeta$, where $\ell_{\min} \in \mathbb{Z}_+$
is the minimum integer satisfying
\[
\ell_{\min} > \frac{\Pi_d}{\zeta_j - n_j + 1 - \zeta_j \nu_d}\qquad \forall j = 1,\dots,Q.
\]
Lemma \ref{lem:one_cycle} shows that the matrix
$O_j$ is full column rank for every $j=1,\dots,Q$, because
\[
\frac{1}{\zeta} \leq \frac{\zeta_j-n_j+1}{\zeta_j}\qquad \forall j=1,\dots,Q.
\]
Assume further  
that 
there exists $v \in \mathbb{C}^{n_x}$ such that
\[
O\left( \{s_m\}_{m=0}^{\chi} \right) Rv=
\begin{bmatrix}
O_1 & \cdots & O_{Q}
\end{bmatrix} v
=0.
\]
To prove that $O\left( \{s_m\}_{m=0}^{\chi} \right)$ is full column rank, 
it is enough to show that $v = 0$. 
Partition $v$ as
$v = \begin{bmatrix}
v_1^* & \cdots & v_{Q}^*
\end{bmatrix}^*
$
with $v_j \in \mathbb{C}^{n_j}$ for every $j=1,\dots,Q$.
Suppose, to get a contradiction, $v_{j_0} \not= 0$ for some $j_0=1,\dots,Q$.
Since $O_{j_0}$ is full column rank, there exists $\tilde b \in \{b_1,\dots,b_{\chi+1}\}$ such that 
$C_{j_0}\Lambda_{j_0}^{\tilde b}v_{j_0} \not=0$.
Let $\tilde b \equiv \alpha$ (mod $\zeta$) with $0\leq  \alpha \leq \zeta-1$ and define 
$w_j := C_j (\Lambda_j/\lambda_j)^ \alpha v_j$ for each $j=1,\dots,Q$.
Note that $w_{j_0} \not = 0$.
Lemma \ref{thm:multi_cyclic} shows that 
there exists at most a finite number $Z_{j_0}$ of non-negative integers $k$ such that 
\[
\sum_{j=1}^{Q}
w_j\lambda_j^k = 0.
\]
Since $\Lambda_j^{\zeta} = \lambda_j^{\zeta} I_{n_j}$ for every $j=1,\dots,Q$,
\begin{align*}
C_\Lambda \Lambda^{\psi \zeta +  \alpha}v = 
\sum_{j=1}^Q
C_j\Lambda _j^{\psi \zeta +  \alpha}v_j 
=
\sum_{j=1}^Q
w_j \lambda_j^{\psi \zeta+ \alpha} \quad ~
\forall \psi \in \mathbb{Z}_+.
\end{align*}
Therefore,
if $\{b_1,\dots,b_{\chi+1}\}$ contains more than $Z_{j_0}$ elements in $\{\psi \zeta +  \alpha: \psi \in \mathbb{Z}_+\}$,
then 
\[
\begin{bmatrix}
C_\Lambda \Lambda^{b_1}v \\ \vdots \\
C_\Lambda \Lambda^{b_{\chi +1}}v
\end{bmatrix} =
\begin{bmatrix}
O_1 & \cdots & O_{Q}
\end{bmatrix} v = 0
\] 
contradicts the above fact obtained from Lemma \ref{thm:multi_cyclic}.
{\tiny }
From the discussion above,
it suffices to show that if the DoS duration bound $\nu_d$ satisfies
\eqref{eq:duration_prime}, then 
for every $\alpha=0,\dots,\zeta-1$ and every $Z \in \mathbb{Z}_+$, 
there exists $k \in \mathbb{Z}_+$ such that the set of time-steps $\leq k$ without DoS,
$\{s_0,\dots,s_{\chi}\}$, contains more than $Z$ elements in $\{\psi \zeta +  \alpha: \psi \in \mathbb{Z}_+\}$.
To this end, we assume by contradiction that for every $k = \ell \zeta$ with $\ell \in \mathbb{Z}_+$, 
the number of elements in $\{s_0,\dots,s_{\chi}\} \cup \{\psi \zeta + \alpha: \psi \in \mathbb{Z}_+\}$
does not exceed $Z$.
Then $\Phi_d(\ell \zeta) \geq \ell - Z$.
By  Assumption \ref{assump:duration}, $\Phi_d(\ell \zeta) \leq \Pi_d+\nu_d(\ell \zeta)$.
We obtain
\[
\ell - Z \leq \Pi_d+\nu_d(\ell \zeta)
\]
and hence
$
(1 - \zeta \nu_d) \ell \leq \Pi_d + Z.
$
By \eqref{eq:duration_prime}, 
\begin{equation}
\label{eq:prime_for_proof}
\ell \leq \frac{\Pi_d + Z}{1 - \zeta \nu_d},
\end{equation}
which contradicts for a sufficiently large $\ell \in \mathbb{Z}_+$.
Moreover, 	$\{s_0,\dots,s_{\chi}\}$ contains more than $Z$ elements in $\{\psi \zeta +  \alpha: \psi \in \mathbb{Z}_+\}$
for 
$k=(\ell_e + 1) \zeta$, where
$\ell_e \in \mathbb{Z}_+$ is the maximum integer that does not exceed the right side of \eqref{eq:prime_for_proof}.
This completes the proof.
\end{proof}


\bibliographystyle{siamplain}

\end{document}